\documentclass[12pt,reqno]{article}
\pdfoutput=1
\usepackage[a4paper,width=170mm,top=25mm,bottom=25mm]{geometry}
\usepackage{hyperref}
\hypersetup{
    colorlinks=false,                       % false: boxed links; true: colored links
    linkcolor=red,                          % color of internal links
    citecolor=blue                          % color of links to bibliography
}
\usepackage{setspace}
\usepackage{fancyhdr}
\usepackage{amsmath,amsthm,amsfonts,amssymb,braket,mathtools,mathrsfs}
\usepackage{url}
\usepackage{bbm, dsfont}
\usepackage{bbold}
\usepackage{enumerate}
\usepackage{multirow}
\usepackage[all]{xy}
\usepackage{color}
\usepackage{graphicx}
\usepackage{caption}
\usepackage{subcaption}
\usepackage{import}
\usepackage{braket}

\numberwithin{equation}{section}
\theoremstyle{plain}
\newtheorem{theorem}{Theorem}[section]
\newtheorem{corollary}[theorem]{Corollary}
\newtheorem{lemma}[theorem]{Lemma}
\newtheorem{proposition}[theorem]{Proposition}
\newtheorem{main theorem}{Main Theorem}
\newtheorem{mainthm}{Theorem}
\theoremstyle{definition}

\newtheorem{remark}[theorem]{Remark}

\newtheorem{definition}[theorem]{Definition}

\newtheorem*{acknowledgements}{Acknowledgements}
\newtheorem*{conflictofinterest}{Conflict of interest statement}
\newtheorem*{dataavailability}{Data availability statement}

\usepackage{tikz}
\usepackage{xy}
\usepackage{tikz-cd}
\usepackage{xcolor}
\usetikzlibrary{decorations.pathmorphing,decorations.text,decorations.markings}

\tikzset{snake/.style={decorate, decoration=snake}}
\usetikzlibrary {arrows.meta,bending,positioning,shapes.geometric}

\newcommand{\ThinLine}{\draw[line width=0.8pt]}

\tikzset{block/.style={rectangle, draw, fill=blue!20, text width=5em, 
    text centered, rounded corners, minimum height=4em},
    cloud/.style={draw, ellipse,fill=red!20, node distance=3cm, minimum height=2em},
    line/.style={draw, -latex'},
    Cbox/.style = {circle, draw, thick, fill=white, opaque}, %% circle box
    midarrow/.style={
        decoration={
            markings,
            mark=at position 0.5 with {\arrow{>}}
        },
        postaction={decorate}
    }
    }

\newcommand{\End}{\text{End}}

\newcommand{\Hom}{\text{Hom}}

\newcommand{\Tr}{\mathrm{Tr}}
\newcommand{\tr}{\mathrm{tr}}
\newcommand{\range}{\mathrm{Range}}
\newcommand{\Irr}{\mathrm{Irr}}

\NewDocumentCommand{\tens}{t_}
 {%
  \IfBooleanTF{#1}
   {\tensop}
   {\otimes}%
 }
\NewDocumentCommand{\tensop}{m}
 {%
  \mathbin{\mathop{\otimes}\displaylimits_{#1}}%
 }

\begin{document}
%%%%%%%%%%%%%%%%% TITLE PAGE %%%%%%%%%%%%%%%
\title{Frustration-Free Models for Topological Holography:\\
Fusion Spin Chains as Boundary Algebras}
\author{
  Zhengwei Liu$^{1,2,3}$\thanks{liuzhengwei@mail.tsinghua.edu.cn}, Zishuo Zhao$^{3}$\thanks{zishuozhao0602@gmail.com} \\
  \\
  \small $^{1}$Department of Mathematics, Tsinghua University, Beijing 100084, China \\
  \small $^{2}$Yanqi Lake Beijing Institute of Mathematical Sciences and Applications, Beijing 100407, China\\
  \small $^{3}$Yau Mathematical Sciences Center, Tsinghua University, Beijing 100084, China
}
\date{\today}
\maketitle
%%%%%%%%%%%%%%%%%%%%%%%%%%%%%%%%%%%%%%%%
\begin{abstract}
    We introduce a class of $2d$ quantum spin systems based on a unitary spherical fusion category with a chosen object. 
    These models have frustration-free, reflection positive, but generally non-commuting local interactions, generalizing the Levin--Wen string-net models. 
    Based on the theory developed in \cite{LiuZhao2025RPTO}, we explicitly realize the Temperley--Lieb--Jones algebra, and more generally fusion spin chains as their boundary algebras. 
\end{abstract}
%%%%%%%%%%%%%% CONTENTS %%%%%%%%%%%%%%%%%%%%
\tableofcontents
%%%%%%%%%%%%%%%%%%%%%%%%%%%%%%%%%%%%%%%%%%%
\section{Introduction}

Topological orders are commonly studied through the ground states of gapped local Hamiltonians. 
In two dimensions, these phases are characterized by topology-dependent ground-state degeneracy, anyonic excitations, and non-trivial boundary effects. 
For paradigmatic microscopic models such as Kitaev's quantum-double models \cite{Kitaev2003} and Levin--Wen string-net models \cite{LevinWen2005}, the commutativity of local terms makes the above features explicitly accessible. 
Beyond the commuting projector ones, frustration-free Hamiltonians form a broader class that are intimately related to general gapped phases \cite{Hastings2006,MichalakisZwolak2013stability}. 
Nevertheless, frustration-free models are more difficult to analyze, due to the non-commutativity between local terms. 

In \cite{LiuZhao2025RPTO}, we introduced reflection positivity \cite{OScombined,GlimmJaffe1987,FILS1978} as a tool for studying topological orders of frustration-free Hamiltonians.
We proved that for a reflection-positive, frustration-free local Hamiltonian on a sphere, the nondegeneracy of the global ground state is equivalent to local indistinguishability of ground states on the disk.
We also showed that Osterwalder--Schrader (OS) reconstruction applied to local ground states produces the boundary algebra. 
Predicted by topological holography, these algebras capture the symmetries described by the bulk theory \cite{KongWenZheng2017BoundaryBulk,JiWeni2019NoninvertibleAnomalies,kong2020AlgebraicHigherSymmetrya}. 

In this paper, we supply this framework with a systematic construction of new $2d$ lattice models.
The construction takes a unitary fusion category $\mathcal{C}$ with a chosen object $\rho\in \mathcal{C}$ as the input. 
The resulting models possess frustration-free, reflection positive, but generally non-commuting local interactions. 
Taking $\mathcal{C}$ to be the Fibonacci category with simple objects $\{\mathbb{1},\tau\}$, with $\tau$ being the chosen object, we obtain a qubit system whose boundary algebra being the Temperley--Lieb--Jones algebra \cite{TemperleyLieb1971,Jones1983} at loop parameter $\frac{1+\sqrt{5}}{2}$. 

Our construction generalizes Levin--Wen string-net models in a way that edges in one direction of the honeycomb lattice are colored by the chosen object $\rho$. 
We do not require $\rho$ to contain every simple object. 
In particular, $\rho$ can be simple, meaning that edges colored by it carry \emph{no} degrees of freedom. 
The local interaction is determined by a set of endomorphisms of tensor powers of $\rho$, which generates the full tensor tower in a suitable sense. 
The resulting interaction may stretch through multiple plaquettes. 
Taking $\rho$ to be the direct sum of all simples recovers the original string-net models. 

From the recent point of views, Temperley--Lieb--Jones algebra fits into spin chains with categorical/MPO symmetries \cite{feiguin2007AnyonChain,aasenfendlymong2020,LootensDelcampVerstraete2024,Jones2024DHRBimodules,LuChatterjeeTantivasadakarn2026}. 
A large class is termed fusion spin chains. 
These spin chains have local operators being endomorphisms of tensor powers of a fixed object in a fusion category, with inclusion maps given by tensoring with identity morphisms \cite{Jones2024DHRBimodules}. 
Through the LTO axioms \cite{JNPW2025,chuah2024boundary,jones2025HolographyBulkboundaryLocal}, the boundary algebras of Levin--Wen string-net models can be identified with the fusion spin chain associated with the object $\oplus_{x\in\Irr(\mathcal{C})}x$. 
The work \cite[Example 2.5]{jones2025HolographyBulkboundaryLocal} suggests that an analogous realization for an arbitrary strong tensor generator $X$ exists, without providing the local interactions. 
Here we prove that OS reconstruction, applied to the model constructed with input category $\mathcal{C}$ and object $\rho$, realizes fusion spin chains determined by $\rho$ as the boundary algebras. 
In particular, fusion spin chains defined by strong tensor generating objects arise as boundary algebras: 

\begin{mainthm}[Theorem \ref{thm:: boundary algebra determined by tensor generator}]
Given $X$ a self-dual strong tensor generating object in a unitary fusion category $\mathcal{C}$. 
We construct a $2$D quantum spin system with frustration-free, reflection positive local interaction, such that the boundary algebra obtained by OS reconstruction is bounded spread isomorphic to the fusion spin chain $\mathcal{A}(\mathcal{C},X)$. 
\end{mainthm}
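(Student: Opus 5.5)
The plan is to build the model explicitly and then apply the general machinery of \cite{LiuZhao2025RPTO}. That machinery says: for a reflection-positive, frustration-free local Hamiltonian, OS reconstruction applied to local ground states produces the boundary algebra. So the theorem will follow once we (i) construct the lattice and interaction from $(\mathcal{C},X)$, (ii) verify frustration-freeness and reflection positivity, and (iii) identify the OS-reconstructed boundary algebra with $\mathcal{A}(\mathcal{C},X)$ up to bounded spread isomorphism.

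\textbf{Construction.} Take a honeycomb lattice in which the edges of one direction (say the vertical ones) are colored by $X$. These edges carry the multiplicity space of simple summands of $X$; when $X$ is simple they carry nothing. The remaining edges carry $\bigoplus_{x\in\Irr(\mathcal{C})}\mathbb{C}$ as in Levin--Wen, and vertices carry fusion spaces. The Hamiltonian has two kinds of terms. The first are vertex terms, projecting onto admissible fusion configurations. The second are generalized plaquette terms built from a finite set of endomorphisms $f\in\End(X^{\otimes n})$: such an $f$ is inserted along a row of $n$ consecutive $X$-edges and evaluated into the string-net picture by local $F$-moves. Since $X$ is a strong tensor generator, there is an $n$ such that $X^{\otimes n}$ contains every simple. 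So, in the strong-generation sense, finitely many such endomorphisms together with the inclusions $f\mapsto f\otimes 1_X$ generate the whole tensor tower. We take the local terms to be the projections $1-P_f$, where $P_f$ projects onto states that are invariant (in the string-net sense) under the corresponding local moves. Each such term may stretch over up to $n$ plaquettes.

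\textbf{Frustration-freeness and reflection positivity.} For frustration-freeness, exhibit a global ground state on the sphere: the evaluation of string-net diagrams, with the $X$-colored edges evaluated via the spherical structure. This state is annihilated by every term, by isotopy invariance of the graphical calculus. For reflection positivity, use self-duality of $X$: reflection across a line perpendicular to the $X$-edges maps the lattice to itself and sends $X$ to $\bar X\cong X$. Unitarity of $\mathcal{C}$, via the dagger structure and positivity of the trace, then gives the OS positivity of the reflection-symmetric interaction, exactly as in the Levin--Wen case of \cite{LiuZhao2025RPTO}.

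\textbf{Boundary algebra.} Cut along a line crossing the $X$-edges. The OS-reconstructed Hilbert space on a boundary interval crossing $m$ of those edges is spanned by local ground states on a half-disk modulo null vectors. Using local indistinguishability (which follows from the equivalence theorem once the sphere ground state is unique), these states are classified by diagrams with $m$ boundary legs colored $X$. The operators are then $\End(X^{\otimes m})$ acting by stacking. Enlarging the interval corresponds to tensoring with $1_X$, which matches the inclusions of the fusion spin chain. Together these give a bounded-spread isomorphism: the spread comes from the finite range of the interaction.

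\textbf{Main obstacle.} I expect the hardest step to be proving that the ground-state space on a disk has the right size. This amounts to local indistinguishability: the generating set of local terms must force the ground space to be exactly the evaluated string-nets, even though the terms do not commute. The approach I would try first is an induction on disk size, using the tensor-generation property so that at each step the generated algebra acts irreducibly on $\Hom(\mathbb{1},X^{\otimes m})$.
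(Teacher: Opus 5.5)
\textbf{Gap 1: identifying the boundary algebra.} The step that identifies the boundary algebra has a genuine gap. You derive it from local indistinguishability, via the sphere equivalence theorem of \cite{LiuZhao2025RPTO} ``once the sphere ground state is unique''. Neither ingredient is available.
\begin{itemize}
\item This lattice has a preferred direction (rows crossed by vertical $X$-edges, with terms acting along rows), so there is no sphere version without defects.
\item Uniqueness of a sphere ground state is never argued, and the paper itself leaves LTQO-type properties open.
\end{itemize}
More importantly, local indistinguishability is not what is needed. The boundary algebra is $\mathcal{A}_+(\Pi(\Lambda))\widehat{\Pi}(\Lambda)$, which is spanned by the partial traces $\Tr_{\Lambda_-}((Q\otimes\mathrm{I})\Pi(\Lambda))$. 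So you need the explicit form of $\Pi(\Lambda)$ across the cut.

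The missing idea is two-row gluing. Write $\rho=X$ and $\mathrm{D}=\mathrm{D}_{\rho^l}$, and take adjacent rows $\Lambda_2$ (below) and $\Lambda_1$ (above) of length $l\geq k$.
\begin{itemize}
\item The common kernel of all terms is the set of string-net states with $\mathrm{L}_{\Lambda_1}(h)\psi=\mathrm{R}_{\Lambda_2}(\mathrm{D}^2h\mathrm{D}^{-2})\psi$ for all $h\in\End(\rho^l)$. This relation is multiplicative, which is where the generating property enters.
\item This set is exactly the range of $\chi=\sum_\eta\mathrm{R}_{\Lambda_2}(\mathrm{D}\eta^*\mathrm{D}^{-1})\otimes\mathrm{L}_{\Lambda_1}(\eta)$. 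Its projection is $\mathrm{R}_{\Lambda_2}(\Delta_l^{-1})\chi$, because $\Delta_l=\sum_\eta\eta^*\mathrm{D}^{-1}\eta\mathrm{D}$ is central and strictly positive.
\item Applied across the reflection line, this gives $\Pi(\Lambda)=\sum_\eta\Theta(O_\eta)\otimes O_\eta$ with $O_\eta=\mathrm{L}_{\Lambda_h}(\Delta_l^{-1/2}\mathrm{D}^{-1/2}\eta\mathrm{D}^{1/2})\Pi(\Lambda_+)$. The partial traces then span $\{\mathrm{L}_{\Lambda_h}(f)\Pi(\Lambda_+)\}\cong\End(\rho^l)$.
\item For $l<k$ no term crosses the cut and the algebra is $\mathbb{C}$. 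This is why one only gets a bounded-spread isomorphism, with $r=k$.
\end{itemize}

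Your description of the half-disk data would in fact give the wrong answer. Your last paragraph suggests the relevant space is $\Hom(\mathbb{1},X^{\otimes m})$. Then the algebra would sit inside $\mathcal{B}(\Hom(\mathbb{1},X^{\otimes m}))$, which is too small to contain a faithful copy of $\End(X^{\otimes m})=\bigoplus_c\mathcal{B}(\Hom(c,X^{\otimes m}))$. In the paper, $\Lambda_+$ is an open parallelogram whose ground space is isometric to $\Hom(\gamma^h\rho^l,\rho^l\gamma^h)$, where $\gamma=\bigoplus_{a\in\Irr(\mathcal{C})}a$; it has legs on the far side and on the slanted sides. This space contains $\mathrm{Id}_{\rho^l}$, so $\End(\rho^l)$ acts faithfully. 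That faithfulness is what makes the $\Theta(O_\eta)$ linearly independent.

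\textbf{Gap 2: reflection positivity.} Your reflection-positivity paragraph is not yet an argument.
\begin{itemize}
\item With projection terms $1-P_f$, you would need $P_f$ in the split form $\sum_j\Theta(O_j)\otimes O_j$, which you have not computed.
\item Degrees of freedom placed on the $X$-edges would sit on the reflection line, since that line bisects those edges.
\end{itemize}
The paper keeps all degrees of freedom on vertices and takes $F_S=\sum_{f\in\mathcal{G}}|\mathrm{R}(\mathrm{D}_{\rho^k}f\mathrm{D}_{\rho^k}^{-1})-\mathrm{L}(\mathrm{D}_{\rho^k}^{-1}f\mathrm{D}_{\rho^k})|^2$. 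Three facts then turn its cross terms into $-2\sum_f\Theta(\mathrm{L}(\mathrm{D}_{\rho^k}^{-1/2}f\mathrm{D}_{\rho^k}^{1/2}))\otimes\mathrm{L}(\mathrm{D}_{\rho^k}^{-1/2}f\mathrm{D}_{\rho^k}^{1/2})$: the identity $\Theta(\mathrm{L}(f))=\mathrm{R}(f^*)$, adjoint-closure of $\mathcal{G}$, and homogeneity of each $f$ with respect to the $q_c$.

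\textbf{Frustration-freeness: salvageable, but only half the job.} Your idea works once you move it from the sphere to planar parallelograms. Set $\mathcal{E}(\xi_h\otimes\cdots\otimes\xi_1)=\mathrm{eval}(\xi_h)\diamond\cdots\diamond\mathrm{eval}(\xi_1)$ and take $0\neq\phi\in\mathcal{E}(\mathcal{H}_\Lambda)$. Then $\mathcal{E}^\dagger\phi$ is a common zero vector of all terms. The reason is that $(\mathrm{R}(\mathrm{D}_{\rho^k}f\mathrm{D}_{\rho^k}^{-1})-\mathrm{L}(\mathrm{D}_{\rho^k}^{-1}f\mathrm{D}_{\rho^k}))^\dagger=\mathrm{R}(f^*)-\mathrm{L}(f^*)$, and $\mathcal{E}$ kills the latter by associativity of $\diamond$. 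This only shows existence of ground states. That every ground state has this form is exactly what the $\chi$/$\Delta_l$ argument adds.

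\textbf{Generation of the tower.} You assert this but do not prove it. The paper uses translates of $\End(X^{L+1})$, with $L$ the depth. It proves generation via the Jones element and Frobenius reciprocity, using that $X^{k-1}$ contains every simple when $k\geq L+1$.
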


The models introduced in this paper raise several questions. 
Naturally, one would like to know if these models are in the same phase as Levin--Wen string-net models with the same input fusion category. 
In order to represent a stable phase of matter, one should prove the existence of a uniform gap above the ground state, and verify the LTQO condition \cite{MichalakisZwolak2013stability}, which we suspect to hold approximately for a generic chosen object. 
A related problem is whether there exist strictly local commuting projector Hamiltonians that exhibit the same set of local ground states. 
Alternatively, one consider the superselection theory, as quasi-local automorphisms preserve the fusion and braiding of the superselection sectors \cite{ogata2022DerivationBraidedCtensor}. 
Recent developments in characterizing Haag duality \cite{ogata2025HaagDuality,NaaijkensPenneysWallick2026,LiuVanLuijk2026} might be helpful in proving the approximate Haag duality. 
We anticipate that the theory of DHR-bimodules \cite{Jones2024DHRBimodules,EvansJones2025} will play a crucial role in constructing superselection sectors. 

The paper is organized as follows: in Section \ref{Sec:: two examples}, we illustrate the Fibonacci model of a qubit system; in Section \ref{Sec:: Non-isotropic String-net Model}, we introduce the construction in its generic form; in Section \ref{Sec:: Solving Local Ground States}, we solve the local ground states of the model for a family of exhausting regions, proving that the interaction is frustration-free; in Section \ref{Sec:: Fusion Spin Chains from OS Reconstruction}, we compute the boundary algebra of the models and verify that these algebras are isomorphic to fusion spin chains; finally, in Section \ref{Sec:: Examples}, we present some examples of the model in detail, including a qubit system based on the Fibonacci category, Levin--Wen string-net model, and the generic construction using strong tensor generating objects. 

\begin{acknowledgements}
All authors were supported by Beijing Natural Science Foundation (Grant No. Z221100002722017). 
Zhengwei Liu was supported by Beijing Natural Science Foundation Key Program (Grant No. Z220002). \\
AI disclosure: LLMs were used only for language improvements and drawing figures. 
The authors take full responsibility for the content of the paper. 
\end{acknowledgements}

\section{Trailer: A Qubit System}\label{Sec:: two examples}

Consider a qubit system on the square lattice in which each vertex carries a Hilbert space $\mathbb{C}^2$ with orthonormal basis $\{\ket{0}, \ket{1}\}$. 
The model consists of two types of interactions: one for each horizontal edge, and the other for each rectangle (Fig. \ref{figure:: 2-qubit Fib}). 

\begin{figure}[ht]
    \centering
    \begin{tikzpicture}[line cap=round]
    % Horizontal edges (darker and wider)
    \foreach \j in {0,...,3}
    \draw[gray!80, line width=0.8pt] (0,\j) -- (5,\j);
    % Vertical edges (lighter and thinner)
    \foreach \i in {0,...,5}
    \draw[gray!45, line width=0.5pt] (\i,0) -- (\i,3);
    % Outward half-edges
    \foreach \j in {0,...,3} {
    \draw[gray!80, line width=0.8pt] (0,\j) -- (-0.3,\j);
    \draw[gray!80, line width=0.8pt] (5,\j) -- (5.3,\j);
    }
    \foreach \i in {0,...,5} {
    \draw[gray!45, line width=0.5pt] (\i,0) -- (\i,-0.3);
    \draw[gray!45, line width=0.5pt] (\i,3) -- (\i,3.3);
    }
  % Two-qubit interaction
  \draw[red!55, line width=1.5pt]
    (0.5,1) ellipse [x radius=0.62, y radius=0.2];
  % Eight-qubit interaction
  \draw[blue!55, line width=1.5pt, rounded corners=4pt]
    (0.78,0.78) rectangle (4.22,2.22);
  % Vertices
  \foreach \i in {0,...,5}
    \foreach \j in {0,...,3}
      \fill[black] (\i,\j) circle[radius=1.6pt];
    \node at (2.5,2.5) {$L$};
    \node at (2.5,0.5) {$U$};
\end{tikzpicture}
    \caption{two types of interactions of the model: one for every horizontal edge $e$, and one for $2\times 4$ rectangle $R = L\cup U$.}
    \label{figure:: 2-qubit Fib}
\end{figure}

For a horizontal edge $e$ connecting vertices $v,w$, we define $P_e = \ket{1_v1_w}\bra{1_v1_w}$, which punishes the appearance of consecutive $\ket{1}$ in each row. 
To define the block interaction, we introduce the following operators. 
Consider the unitary operator $F = \begin{bmatrix}
    -\frac{1}{\phi} & \frac{1}{\sqrt{\phi}}\\
    \frac{1}{\sqrt{\phi}} & \frac{1}{\phi}
\end{bmatrix}$ where $\phi = \frac{1+\sqrt{5}}{2}$. 
Define $n = (\mathrm{I} - Z)/2$ and $\hat{n} = FnF^\dagger$. 
Then for a set of four consecutive qubits in a row, we have the local operator:
\begin{equation}\label{eqn:: Jones projections on qubits}
    \begin{aligned}
        T &= \phi\, n_1(\mathrm{I}-n_2)\hat{n}_3(\mathrm{I}-n_4) + \phi(\mathrm{I}-n_1)\hat{n}_2(\mathrm{I}-n_3)n_4 \\
        & + \phi^2 (\mathrm{I}-n_1)(\mathrm{I} - P_{23})(\mathrm{I}-\hat{n}_2)(1-\hat{n}_3)(\mathrm{I} - P_{23})(\mathrm{I}-n_4). 
    \end{aligned}
\end{equation} 
Note that $T$ is a four qubit controlled operator, with its action on $2,3$ controlled by $1,4$. 
The interaction between two sets of qubits in adjacent rows $L$ and $U$ in a rectangle $R$ is
\begin{equation}
    F_R = \phi(T_U + T_L) -2T_LT_U. 
\end{equation}
The Hamiltonian is thus defined as
\begin{equation}
    H = \sum_{e} P_e + \sum_{R} F_R. 
\end{equation}
Here the second sum ranges over the rectangles of the form $R_{m,n}=R_0+m(2,0)+n(1,1)$, where $R_0$ is a fixed rectangle. 
Thus, for each adjacent pair of rows, the four-qubit windows have the parity inherited from the honeycomb plaquettes.

The operator $T$ has the following interesting properties: if we number the vertices in one row by integers from left to right, and denote by $T_{x}$ the operator $T$ acting on the qubits $x,x+1,x+2,x+3$, then on the ground state space of $\sum_{e}P_e$, we have 
\begin{equation}
    \begin{aligned}
        T^2_x &= \phi T_x;\\
         T_xT_{x+2}T_{x} = T_x,&\quad T_{x+2}T_{x}T_{x+2} = T_{x+2}, 
    \end{aligned}
\end{equation}
which is precisely the Temperley-Lieb relation. 
Moreover, one may verify that 
\begin{equation}
   [T_x,T_{x+j}]=0,\quad j\neq 2,-2. 
\end{equation}
In other words, restricted to the ground state subspace of $\sum_{e}P_e$, the sets $\{T_{1},T_{3},T_{5},\dots\}$ and $\{T_{2},T_{4},T_{6},\dots\}$ generate two copies of Temperley--Lieb algebra commuting with each other. 
This feature will be essential for analyzing the ground state and constructing the boundary algebra. 

\section{Non-isotropic String-net Model}\label{Sec:: Non-isotropic String-net Model}

\subsection{Preliminaries}

Let $\mathcal{C}$ be a (strict) spherical unitary fusion category. 
We denote the set of objects in $\mathcal{C}$ by $\mathrm{Obj}(\mathcal{C})$ and a set of representatives for the isomorphism classes of simple objects by $\Irr(\mathcal{C})$. 
We use the shorthand notation $ab$ to denote the tensor product of objects $a$ and $b$.  
The morphism space from $a$ to $b$ is denoted by $\Hom(a,b)$, and $\End(a)  =\Hom(a,a)$. 
The identity morphism of $a$ is denoted by $\mathrm{Id}_a$. 
For an object $a$, its dual is denoted by $\overline{a}$, with $a$ identified with its double dual: $\overline{\overline{a}} = a$. 
We denote by $\tr_{\mathcal{C}}$ the categorical trace on $\mathcal{C}$, and by $d_a = \tr_{\mathcal{C}}(\mathrm{Id}_a)$ the quantum dimension of $a$. 
We use the standard graphical calculus of unitary fusion categories, with diagrams read from top to bottom. 
We represent the unit object $\mathbb{1}$ by a dashed line or simply empty space. 
Strands with upward-pointing arrows represent dual objects. 

All inner products in this paper are anti-linear in the first entry. 
In the following, we'll often indicate the support of an local operator using subscripts, and we omit the tensoring with identities out side the support when doing so. 

\subsection{Local Interaction}

Let $\mathcal{C}$ be a unitary spherical fusion category (UFC).
The model constructed below has local degrees of freedom determined by a chosen object
\begin{equation}
    \rho = \bigoplus_{a\in \Irr(\mathcal{C})} n_a\cdot a, \quad n_a\in \mathbb{N}_{\geq 0}. 
\end{equation}
For simplicity, we primarily consider the case in which $\rho$ is self-dual. 
By replacing tensor powers of $\rho$ with alternating tensor products of $\rho$ and $\bar{\rho}$, the results remain valid for the non-self-dual case. 

Consider the following honeycomb lattice $\Gamma$ in $\mathbb{R}^2$, where the vertical edges are colored by the chosen object $\rho$ (orange edges in Fig. \ref{fig:: colored honeycomb lattice}). 
\begin{figure}[ht]
    \centering
    \definecolor{bondorange}{RGB}{245,177,67}
    \definecolor{bondblack}{RGB}{25,25,25}
     \begin{tikzpicture}[
        x=1cm,
        y=1cm,
        line cap=round,
        line join=round
    ]
        \path[use as bounding box] (0,0) rectangle (8,8);
        \clip (0,0) rectangle (8,8);

        % Shift the whole honeycomb pattern to the right.
        \begin{scope}[xshift=0.25cm]

            \def\dx{0.73}
            \def\dy{0.44}

            % Orange vertical bonds
            \foreach \cy/\shift in {
                -0.10/0,
                 1.22/0.73,
                 2.54/0,
                 3.86/0.73,
                 5.18/0,
                 6.50/0.73,
                 7.82/0,
                 9.14/0.73
            }{
                \foreach \xb in {
                    -2.92,-1.46,0,1.46,2.92,4.38,
                    5.84,7.30,8.76,10.22
                }{
                    \pgfmathsetmacro{\cx}{\xb+\shift}

                    \draw[
                        bondorange,
                        line width=0.95pt
                    ]
                        ({\cx-\dx},{\cy-\dy})
                        --
                        ({\cx-\dx},{\cy+\dy});

                    \draw[
                        bondorange,
                        line width=0.95pt
                    ]
                        ({\cx+\dx},{\cy-\dy})
                        --
                        ({\cx+\dx},{\cy+\dy});
                }
            }

            % Black diagonal bonds
            \foreach \cy/\shift in {
                -0.10/0,
                 1.22/0.73,
                 2.54/0,
                 3.86/0.73,
                 5.18/0,
                 6.50/0.73,
                 7.82/0,
                 9.14/0.73
            }{
                \foreach \xb in {
                    -2.92,-1.46,0,1.46,2.92,4.38,
                    5.84,7.30,8.76,10.22
                }{
                    \pgfmathsetmacro{\cx}{\xb+\shift}

                    \draw[
                        bondblack,
                        line width=0.70pt
                    ]
                        ({\cx-\dx},{\cy+\dy})
                        --
                        (\cx,{\cy+2*\dy})
                        --
                        ({\cx+\dx},{\cy+\dy});

                    \draw[
                        bondblack,
                        line width=0.70pt
                    ]
                        ({\cx-\dx},{\cy-\dy})
                        --
                        (\cx,{\cy-2*\dy})
                        --
                        ({\cx+\dx},{\cy-\dy});
                }
            }

            % Black vertices
            \foreach \cy/\shift in {
                -0.10/0,
                 1.22/0.73,
                 2.54/0,
                 3.86/0.73,
                 5.18/0,
                 6.50/0.73,
                 7.82/0,
                 9.14/0.73
            }{
                \foreach \xb in {
                    -2.92,-1.46,0,1.46,2.92,4.38,
                    5.84,7.30,8.76,10.22
                }{
                    \pgfmathsetmacro{\cx}{\xb+\shift}

                    \foreach \vx/\vy in {
                        {\cx-\dx}/{\cy+\dy},
                        {\cx}/{\cy+2*\dy},
                        {\cx+\dx}/{\cy+\dy},
                        {\cx+\dx}/{\cy-\dy},
                        {\cx}/{\cy-2*\dy},
                        {\cx-\dx}/{\cy-\dy}
                    }{
                        \fill[bondblack]
                            (\vx,\vy) circle[radius=2pt];
                    }
                }
            }

        \end{scope}
    \end{tikzpicture}
    \caption{The colored honeycomb lattice $\Gamma$: 
    each vertical edge is colored by the chosen object $\rho$ in $\mathcal{C}$.}
    \label{fig:: colored honeycomb lattice}
\end{figure}
We choose to assign local Hilbert spaces to vertices only. 
The local Hilbert space assigned to each vertex is obtained by interpreting the uncolored edges as the object $\displaystyle \gamma = \bigoplus_{a\in \Irr(\mathcal{C})} a$, the colored edge as the chosen object $\rho$, and the vertex as a morphism in the corresponding morphism space. 
Specifically, there are two types of local Hilbert spaces:
\begin{equation}
    \begin{aligned}
        \vcenter{\hbox{\begin{tikzpicture}
    \ThinLine[orange] (0:0) -- (-90:0.5);
    \ThinLine (0:0) -- (150:0.5);
    \ThinLine (0:0) -- (30:0.5);
    \draw[fill=black] (0:0) ellipse (0.05 and 0.05);
    \node at (-0.2,-0.25) {$v$};
\end{tikzpicture}}}: \bigoplus_{a,b\in \Irr(\mathcal{C})}\Hom(ab,\rho)\quad 
\vcenter{\hbox{\begin{tikzpicture}
    \ThinLine[orange] (90:0.5) -- (0:0);
    \ThinLine (0:0) -- (-150:0.5);
    \ThinLine (0:0) -- (-30:0.5);
    \draw[fill=black] (0:0) ellipse (0.05 and 0.05);
    \node at (-0.2,0.25) {$v$};
\end{tikzpicture}}}: \bigoplus_{a,b\in \Irr(\mathcal{C})}\Hom(\rho,a b). 
    \end{aligned}
\end{equation}
To define the inner product on the local Hilbert space at a vertex $\vcenter{\hbox{\begin{tikzpicture}
    \ThinLine[orange] (0:0) -- (-90:0.5);
    \ThinLine (0:0) -- (150:0.5);
    \ThinLine (0:0) -- (30:0.5);
    \draw[fill=black] (0:0) ellipse (0.05 and 0.05);
    \node at (-0.2,-0.25) {$v$};
\end{tikzpicture}}}$, denote by $q_c$ the projection in $\End(\rho)$ onto the summand $n_c\cdot c$, where $q_c=0$ if $n_c=0$. 
Define the following positive operator in $\End(\rho)$
\begin{equation}
    \mathrm{D}_{\rho} = \sum_{c\in \Irr(\mathcal{C})}\frac{1}{\sqrt{d_c}}q_c.
\end{equation}
Then for simple objects $a,b\in \Irr(\mathcal{C})$, the inner product on the summand $\Hom(ab,\rho)$ is
\begin{equation}
    \braket{\eta|\xi} = \frac{1}{\sqrt{d_a d_b}}\tr_{\mathcal{C}}(\eta^* \mathrm{D}_\rho \xi),\quad  \eta,\xi\in \Hom(ab,\rho).
\end{equation}
The inner product on the other type of local Hilbert space is defined similarly. 

By a region, we mean a $2d$ submanifold of $\mathbb{R}^2$ whose boundary intersects edges of $\Gamma$ transversely. 
A region $\Lambda$ is \emph{finite} if the set $V(\Lambda)$ of vertices of $\Gamma$ in its interior is finite; let $E(\Lambda)$ denote the set of edges whose endpoints both lie in $V(\Lambda)$. 
For a finite region $\Lambda$, the local Hilbert space is defined as 
\begin{equation}
    \mathcal{H}_{\Lambda} = \bigotimes_{v\in V(\Lambda)}\mathcal{H}_v. 
\end{equation}
We call a tensor product state $\bigotimes_{v\in V(\Lambda)} \ket{\xi_{v}}$ a configuration on $\Lambda$ if every $\xi_v$ is chosen from a summand of $\mathcal{H}_v$. 

There are two types of local interactions in the model, with the edge interactions paralleling those of string-net models. 
For an un-colored edge $e$ with endpoints $v,w$, define $Q_e$ as the orthogonal projection on $\mathcal{H}_v\otimes \mathcal{H}_w$ onto the subspace spanned by configurations that assign the same simple object to $e$.  
The interaction associated with each un-colored edge is therefore
\begin{equation}
    \varPhi(e) = \mathrm{I} - Q_e. 
\end{equation}
For a finite region $\Lambda$ containing consecutive vertices in a row, we define the common image of the projections $Q_e$ over $e\in E(\Lambda)$, which is the ground state subspace of the Hamiltonian $\sum_{e\in E(\Lambda)}\mathrm{I} - Q_e$, to be the \emph{string-net subspace} of $\mathcal{H}_{\Lambda}$, denoted as $\mathcal{H}^{\mathrm{st.n.}}_{\Lambda}$. 

The second type of interaction is assigned to regions consisting of consecutive plaquettes. 
To define these interactions, we recall the identification of string-net subspaces with morphism spaces in $\mathcal{C}$. 
Consider a region $\Lambda\subset \mathbb{R}^2$ that contains $2l$ consecutive vertices in a row for some $l\geq 1$: 
\begin{equation}\label{fig:: 1st form of Lambda}
    \vcenter{\hbox{\begin{tikzpicture}
        \ThinLine[orange] (0:0) -- (-90:0.5);
    \ThinLine (0:0) -- (150:0.5);
    \ThinLine (0:0) -- (30:0.5);
    \draw[fill=black] (0:0) ellipse (0.05 and 0.05);
    \node at (0,0.3) {$v_1$};
    \begin{scope}[shift={(0.866,0.5)}]
        \ThinLine[orange] (0:0) -- (90:0.5);
    \ThinLine (0:0) -- (-150:0.5);
    \ThinLine (0:0) -- (-30:0.5);
    \draw[fill=black] (0:0) ellipse (0.05 and 0.05);
    \node at (0,-0.3) {$v_2$};
    \end{scope}
    \node at (1.73,0.25) {$\cdots$};
    \begin{scope}[shift={(2.596,0)}]
        \ThinLine[orange] (0:0) -- (-90:0.5);
    \ThinLine (0:0) -- (150:0.5);
    \ThinLine (0:0) -- (30:0.5);
    \draw[fill=black] (0:0) ellipse (0.05 and 0.05);
    \node at (0,0.3) {$v_{2l-1}$};
    \begin{scope}[shift={(0.866,0.5)}]
        \ThinLine[orange] (0:0) -- (90:0.5);
    \ThinLine (0:0) -- (-150:0.5);
    \ThinLine (0:0) -- (-30:0.5);
    \draw[fill=black] (0:0) ellipse (0.05 and 0.05);
    \node at (0,-0.3) {$v_{2l}$};
    \end{scope}
    \end{scope}
    \begin{scope}[overlay]
    % Mask only the small region immediately outside the parallelogram
    \fill[
        white,
        even odd rule,
        rounded corners=5pt
    ]
        (-0.56,-0.56) rectangle (3.96,1.06)
        (-0.5,-0.35) --
        ( 3.58,-0.35) --
        ( 3.88, 0.85) --
        (-0.2, 0.85) -- cycle;
    % Draw the parallelogram over the truncated edge endpoints
    \draw[
        densely dashed,
        rounded corners=5pt,
        line width=0.5pt
    ]
        (-0.5,-0.35) --
        ( 3.58,-0.35) --
        ( 3.88, 0.85) --
        (-0.2, 0.85) -- cycle;
\end{scope}
    \end{tikzpicture}}}
\end{equation}
Given a configuration $\ket{\xi} = \bigotimes^{2l}_{i=1}\ket{\xi_i} \in \mathcal{H}_{\Lambda}$, we define $\mathrm{eval}_{\Lambda}(\xi) \in \Hom(\gamma\rho^{l},\rho^{l}\gamma)$ by connecting the uncolored edges in $\Lambda$:
\begin{equation}
    \mathrm{eval}_{\Lambda}: \vcenter{\hbox{\begin{tikzpicture}
        \ThinLine[orange] (0:0) -- (-90:0.5);
    \ThinLine (0:0) -- (150:0.5);
    \ThinLine (0:0) -- (30:0.5);
    \draw[fill=black] (0:0) ellipse (0.05 and 0.05);
    \node at (0,0.375) {$\xi_1$};
    \node at (0.65,0.375) {$\otimes$};
    \begin{scope}[shift={(1.299,0.75)}]
        \ThinLine[orange] (0:0) -- (90:0.5);
    \ThinLine (0:0) -- (-150:0.5);
    \ThinLine (0:0) -- (-30:0.5);
    \draw[fill=black] (0:0) ellipse (0.05 and 0.05);
    \node at (0,-0.375) {$\xi_2$};
    \end{scope}
    \node at (2.25,0.25) {$\cdots$};
    \begin{scope}[shift={(3.25,0)}]
        \ThinLine[orange] (0:0) -- (-90:0.5);
    \ThinLine (0:0) -- (150:0.5);
    \ThinLine (0:0) -- (30:0.5);
    \draw[fill=black] (0:0) ellipse (0.05 and 0.05);
    \node at (0,0.425) {$\xi_{2l-1}$};
    \node at (0.65,0.375) {$\otimes$};
    \begin{scope}[shift={(1.299,0.75)}]
        \ThinLine[orange] (0:0) -- (90:0.5);
    \ThinLine (0:0) -- (-150:0.5);
    \ThinLine (0:0) -- (-30:0.5);
    \draw[fill=black] (0:0) ellipse (0.05 and 0.05);
    \node at (0,-0.375) {$\xi_{2l}$};
    \end{scope}
    \end{scope}
    \end{tikzpicture}}} \longmapsto \vcenter{\hbox{\begin{tikzpicture}
        \ThinLine[orange] (0:0) -- (-90:0.5);
    \ThinLine (0:0) -- (150:0.5);
    \ThinLine (0:0) -- (30:0.5);
    \draw[fill=black] (0:0) ellipse (0.05 and 0.05);
    \node at (0,0.35) {$\xi_1$};
    \begin{scope}[shift={(0.866,0.5)}]
        \ThinLine[orange] (0:0) -- (90:0.5);
    \ThinLine (0:0) -- (-150:0.5);
    \ThinLine (0:0) -- (-30:0.5);
    \draw[fill=black] (0:0) ellipse (0.05 and 0.05);
    \node at (0,-0.35) {$\xi_2$};
    \end{scope}
    \node at (1.73,0.25) {$\cdots$};
    \begin{scope}[shift={(2.596,0)}]
        \ThinLine[orange] (0:0) -- (-90:0.5);
    \ThinLine (0:0) -- (150:0.5);
    \ThinLine (0:0) -- (30:0.5);
    \draw[fill=black] (0:0) ellipse (0.05 and 0.05);
    \node at (0,0.425) {$\xi_{2l-1}$};
    \begin{scope}[shift={(0.866,0.5)}]
        \ThinLine[orange] (0:0) -- (90:0.5);
    \ThinLine (0:0) -- (-150:0.5);
    \ThinLine (0:0) -- (-30:0.5);
    \draw[fill=black] (0:0) ellipse (0.05 and 0.05);
    \node at (0,-0.35) {$\xi_{2l}$};
    \end{scope}
    \end{scope}
    \end{tikzpicture}}}
\end{equation}
We thus obtain a linear map $\mathrm{eval}_{\Lambda}:\mathcal{H}_{\Lambda}\rightarrow \Hom(\gamma\rho^{l},\rho^{l}\gamma)$. 

In what follows, we equip $\Hom(\gamma\rho^{l},\rho^{l}\gamma)$ with the following inner product. 
Write $\mathrm{D}_{\rho^l}:=\mathrm{D}_{\rho}^{\tens l}$. 
For objects $a,b\in \Irr(\mathcal{C})$ and $f,g\in \Hom(a\rho^{l},\rho^{l}b)$, we define
\begin{equation}\label{eqn:: inner product on Hom}
    \braket{f,g} = \frac{1}{\sqrt{d_a d_b}}\tr_{\mathcal{C}}\left( f^* (\mathrm{D}_{\rho}^{\tens l}\otimes \mathrm{Id}_b) g (\mathrm{Id}_a\otimes \mathrm{D}_{\rho}^{\tens l}) \right).
\end{equation}

\begin{lemma}\label{lemma:: eval in a single row}
    The map $\mathrm{eval}_{\Lambda}: \mathcal{H}_{\Lambda} \rightarrow \Hom(\gamma \rho^l, \rho^l\gamma)$ is surjective and satisfies
    \begin{equation}
        \braket{\mathrm{eval}_{\Lambda}(\eta),\mathrm{eval}_{\Lambda}(\xi)} = \braket{\eta|\prod_{e\in E(\Lambda)}Q_e|\xi},\quad \eta,\xi\in \mathcal{H}_{\Lambda}.
    \end{equation}
\end{lemma}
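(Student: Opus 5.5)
We prove both claims by decomposing $\mathcal{H}_\Lambda$ and $\Hom(\gamma\rho^l,\rho^l\gamma)$ along the labels of the internal uncolored edges, and by repeatedly gluing single vertices using unitary fusion-category identities.

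\emph{Labels.} Label the uncolored edges of $\Lambda$ in order as $a_0,a_1,\dots,a_{2l}\in\Irr(\mathcal{C})$. Here $a_0$ and $a_{2l}$ are the two external legs, and $a_1,\dots,a_{2l-1}$ are the internal edges $e\in E(\Lambda)$. A configuration assigns a label to each internal edge from each of its two endpoints.
\begin{itemize}
\item If the two assigned labels differ for some $e$, then $\mathrm{eval}_\Lambda$ vanishes, because composing through the summand $a$ of $\gamma$ with the summand $b\neq a$ gives zero. Likewise $Q_e$ kills such a configuration.
\item Hence both sides of the claimed identity vanish unless $\eta$ and $\xi$ lie in the string-net subspace, which is the image of $\prod_e Q_e$.
\item On that subspace, $\mathcal{H}_\Lambda^{\mathrm{st.n.}}$ decomposes as a direct sum over label sequences $(a_0,\dots,a_{2l})$ of tensor products of the vertex spaces $\Hom(a_{2i-2}a_{2i-1},\rho)$ and $\Hom(\rho,a_{2i-1}a_{2i})$, with the appropriate duals on oriented strands.
\item Different label sequences are orthogonal on both sides: on the left because the $Q_e$ are diagonal in labels, and on the right by the same orthogonality of distinct simple summands.
\end{itemize}
It therefore suffices to prove the identity for fixed labels, for tensor product vectors.

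\emph{Inner product by induction on $l$.} The key local identity is the bubble removal. For $\alpha,\beta\in\Hom(\rho, xy)$ with $x$ simple, the relevant composite equals
\[
\frac{1}{d_x}\tr(\alpha^*\beta)\cdot\mathrm{Id}_x .
\]
More precisely:
\begin{itemize}
\item Write $\mathrm{eval}_\Lambda(\xi)=(\mathrm{eval}_{\Lambda'}(\xi')\otimes \mathrm{Id})\circ(\mathrm{Id}\otimes\cdots)$, gluing the last pair $v_{2l-1},v_{2l}$ along the edge $a_{2l-2}$.
\item Use sphericality to turn the trace in the inner product \eqref{eqn:: inner product on Hom} into a closed diagram.
\item In that closed diagram, cut along the internal strand $a_{2l-2}$ to factor it as
\[
\frac{d_{a_{2l-2}}}{\cdots}\cdot\langle\text{left part}\rangle\langle\text{right part}\rangle .
\]
This uses that $\Hom(a,a)$ is one-dimensional for simple $a$.
\item The normalizations $1/\sqrt{d_ad_b}$ at the vertices and in \eqref{eqn:: inner product on Hom} are designed so that each factor $d_{a_i}$ from an internal cut cancels. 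The two halves $1/\sqrt{d_{a_i}}$ carried by the adjacent vertex inner products produce exactly $1/d_{a_i}$, matching the bubble factor.
\item The $\mathrm{D}_\rho$ insertions sit on the orange strands. These strands are not cut, so they pass through unchanged, with one copy of $\mathrm{D}_\rho$ for each vertex pair.
\end{itemize}
The base case $l=1$ is a direct computation of a two-vertex diagram in the same way.

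\emph{Surjectivity.} Fix simple labels $a,b$ and decompose $\rho^l$ into simples. Any $f\in\Hom(a\rho^l,\rho^lb)$ can be expanded using the standard resolution
\[
\mathrm{Id}_{a\rho}=\sum_{x}\sum_{\text{ONB}} \sqrt{\cdots}\,\iota\,\iota^{*},
\]
which routes $a\otimes\rho$ through simples $x$. Repeating this resolution expresses $f$ as a sum of diagrams of the ladder shape appearing in $\mathrm{eval}_\Lambda$:
\begin{itemize}
\item the uncolored strand zigzags, with each step through a simple label $a_i$;
\item each $\rho$ strand is attached to the uncolored strand by a trivalent vertex.
\end{itemize}
Equivalently, $\bigoplus$ over labels of $\bigotimes \Hom(\cdot,\cdot)$ surjects onto $\Hom(a\rho^l,\rho^l b)$ by semisimplicity. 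This amounts to a composition-of-morphisms iso: $\Hom(a\rho,\rho b)$ is spanned by composites through $\Hom(a\rho, x)\otimes\Hom(x,\rho b)$, combined with duality to bend strands into the vertex shapes.

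\emph{Main obstacle.} The main work is bookkeeping the scalar factors in the inductive gluing step. One must check that the normalizations $1/\sqrt{d}$, the $\mathrm{D}_\rho$ placement on only one side of each orange strand, and the bending of strands via zigzag/pivotal maps combine to give exactly the factor $1$. I would handle this by first verifying the single-cut identity
\[
\tr\bigl((\phi_1\circ\phi_2)^*(\psi_1\circ\psi_2)\bigr)=\frac{1}{d_x}\tr(\phi_1^*\psi_1)\tr(\phi_2^*\psi_2)
\]
for morphisms factoring through a simple $x$, and then applying it $2l-1$ times.
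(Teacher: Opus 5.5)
Your proposal is correct and follows essentially the same route as the paper. You reduce to string-net configurations with fixed labels, apply Schur's lemma across each of the $2l-1$ simple internal edges (the bubble identity giving $\frac{1}{d_c}\tr_{\mathcal{C}}(\eta^*\xi)\mathrm{Id}_c$), absorb each factor $\frac{1}{d_{a_i}}$ by splitting it as $\frac{1}{\sqrt{d_{a_i}}}\cdot\frac{1}{\sqrt{d_{a_i}}}$ between neighbouring vertex inner products, and get surjectivity from semisimplicity of the fusion-tree expansion after bending strands. The only difference is cosmetic: the paper peels off one vertex at a time from the left by closing the leftmost leg into a bubble, whereas you induct on $l$ by cutting off the last pair of vertices along $a_{2l-2}$ and then cut once more in the base case.
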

\begin{proof}
    Identifying $\Hom(\gamma \rho^l, \rho^l\gamma)$ with $\Hom(\rho^l\gamma \rho^l,\gamma)$ by Frobenius reciprocity, we see that the image of $\mathrm{eval}_{\Lambda}$ is spanned by fusion trees of the form
    \begin{equation}
        \vcenter{\hbox{\begin{tikzpicture}
            \ThinLine (0,1) -- (0,0.5);
            \ThinLine[orange] (-0.25,1) -- (0,0.5);
            \draw[fill=black] (0,0.5) ellipse (0.05 and 0.05);
            \ThinLine (0,0.5) -- (0,0);
            \ThinLine[orange] (0.5,1) -- (0,0);
            \draw[fill=black] (0,0) ellipse (0.05 and 0.05);
            \ThinLine (0,0) -- (0,-0.15);
            \node at (0,-0.3) {$\vdots$};
            \ThinLine (0,-0.45) -- (0,-0.6);
            \ThinLine[orange] (-0.8,1) -- (0,-0.6);
            \draw[fill=black] (0,-0.6) ellipse (0.05 and 0.05);
            \ThinLine (0,-0.6) -- (0,-1.1);
            \ThinLine[orange] (1.05,1) -- (0,-1.1);
            \draw[fill=black] (0,-1.1) ellipse (0.05 and 0.05);
            \ThinLine (0,-1.1) -- (0,-1.6);
        \end{tikzpicture}}}
    \end{equation}
    with each uncolored edge labeled by a simple object in $\Irr(\mathcal{C})$. 
    By semisimplicity of $\mathcal{C}$, every morphism in $\Hom(\rho^l\gamma \rho^l,\gamma)$ is a linear combination of such fusion trees, and this proves the surjectivity. 
    To prove the asserted identity, first note that both sides evaluate to zero if the edge labels of $\ket{\eta}$ or $\ket{\xi}$ are incompatible. 
    Now take $a_1,\dots,a_{2l+1}\in \Irr(\mathcal{C})$, and choose 
    \begin{equation}
        \ket{\xi} = \bigotimes^{2l}_{i=1}\ket{\xi_i}, \ket{\eta} = \bigotimes^{2l}_{i=1}\ket{\eta_i}\in \mathcal{H}_{\Lambda}
    \end{equation}
    such that $\xi_{2k-1},\eta_{2k-1} \in \Hom(a_{2k-1}a_{2k},\rho)$ and $\xi_{2k},\eta_{2k} \in \Hom(\rho,a_{2k}a_{2k+1})$ for $1\leq k\leq l$. 
    Then we have 
    \begin{equation}
        \begin{aligned}
            & \Braket{\mathrm{eval}_{\Lambda}\left( \bigotimes^{2l}_{i=1} \eta_{i}\right)|\mathrm{eval}_{\Lambda}\left( \bigotimes^{2l}_{i=1} \xi_{i}\right)} = \frac{1}{\sqrt{d_{a_1} d_{a_{2l+1}}}}\, \tr_{\mathcal{C}}\left( \vcenter{\hbox{\begin{tikzpicture}
                \ThinLine[orange] (0:0) -- (-90:0.5);
            \ThinLine (0:0) -- (150:0.5);
            \ThinLine (0:0) -- (30:0.5);
            \draw[fill=black] (0:0) ellipse (0.05 and 0.05);
            \node at (0,0.35) {$\xi_1$};
            \begin{scope}[shift={(0.866,0.5)}]
                \ThinLine[orange] (0:0) -- (90:1.25);
            \ThinLine (0:0) -- (-150:0.5);
            \ThinLine (0:0) -- (-30:0.5);
            \draw[fill=black] (0:0) ellipse (0.05 and 0.05);
            \node at (0,-0.35) {$\xi_2$};
            \end{scope}
            \node at (1.73,0.25) {$\cdots$};
            \begin{scope}[shift={(2.596,0)}]
                \ThinLine[orange] (0:0) -- (-90:0.5);
            \ThinLine (0:0) -- (150:0.5);
            \ThinLine (0:0) -- (30:0.5);
            \draw[fill=black] (0:0) ellipse (0.05 and 0.05);
            \node at (0,0.425) {$\xi_{2l-1}$};
            \begin{scope}[shift={(0.866,0.5)}]
                \ThinLine[orange] (0:0) -- (90:1.25);
            \ThinLine (0:0) -- (-150:0.5);
            \ThinLine (0:0) -- (-30:0.5);
            \draw[fill=black] (0:0) ellipse (0.05 and 0.05);
            \node at (0,-0.35) {$\xi_{2l}$};
            \end{scope}
            \end{scope}
            \begin{scope}[shift={(0,-1.5)},yscale=-1]
                    \ThinLine[orange] (0:0) -- (-90:0.5);
            \ThinLine (0:0) -- (150:0.5);
            \ThinLine (0:0) -- (30:0.5);
            \draw[fill=black] (0:0) ellipse (0.05 and 0.05);
            \node at (0,0.35) {$\eta^*_1$};
            \begin{scope}[shift={(0.866,0.5)}]
                \ThinLine[orange] (0:0) -- (90:0.5);
            \ThinLine (0:0) -- (-150:0.5);
            \ThinLine (0:0) -- (-30:0.5);
            \draw[fill=black] (0:0) ellipse (0.05 and 0.05);
            \node at (0,-0.35) {$\eta^*_2$};
            \end{scope}
            \node at (1.73,0.25) {$\cdots$};
            \begin{scope}[shift={(2.596,0)}]
                \ThinLine[orange] (0:0) -- (-90:0.5);
            \ThinLine (0:0) -- (150:0.5);
            \ThinLine (0:0) -- (30:0.5);
            \draw[fill=black] (0:0) ellipse (0.05 and 0.05);
            \node at (0,0.35) {$\eta^*_{2l-1}$};
            \begin{scope}[shift={(0.866,0.5)}]
                \ThinLine[orange] (0:0) -- (90:0.5);
            \ThinLine (0:0) -- (-150:0.5);
            \ThinLine (0:0) -- (-30:0.5);
            \draw[fill=black] (0:0) ellipse (0.05 and 0.05);
            \node at (0,-0.35) {$\eta^*_{2l}$};
            \end{scope}
            \end{scope}
            \end{scope}
            \node [draw, minimum width=0.25cm, minimum height=0.25cm, fill = white] at (0,-0.75) {\scriptsize $\mathrm{D}_\rho$};
            \node [draw, minimum width=0.25cm, minimum height=0.25cm, fill = white] at (2.596,-0.75) {\scriptsize $\mathrm{D}_\rho$};
            \node [draw, minimum width=0.25cm, minimum height=0.25cm, fill = white] at (0.866,1.25) {\scriptsize $\mathrm{D}_\rho$};
            \node [draw, minimum width=0.25cm, minimum height=0.25cm, fill = white] at (3.462,1.25) {\scriptsize $\mathrm{D}_\rho$};
            \end{tikzpicture}}} \right)\\
            &= \frac{1}{\sqrt{d_{a_1} d_{a_{2l+1}}}}\frac{1}{d_{a_2}}\, \tr_{\mathcal{C}}\left( \vcenter{\hbox{\begin{tikzpicture}
                \ThinLine[orange] (0:0) -- (-90:0.5);
            \ThinLine (0:0) -- (150:0.5);
            \ThinLine (0:0) -- (30:0.5);
            \draw[fill=black] (0:0) ellipse (0.05 and 0.05);
            \node at (0,0.35) {$\xi_1$};
            \begin{scope}[shift={(0,-1)},yscale=-1]
                \ThinLine[orange] (0:0) -- (-90:0.5);
            \ThinLine (0:0) -- (150:0.5);
            \ThinLine (0:0) -- (30:0.5);
            \draw[fill=black] (0:0) ellipse (0.05 and 0.05);
            \node at (0,0.35) {$\eta^*_1$};
            \end{scope}
            \node [draw, minimum width=0.25cm, minimum height=0.25cm, fill = white] at (0,-0.5) {\scriptsize $\mathrm{D}_\rho$};
            \end{tikzpicture}}} \right) \tr_{\mathcal{C}}\left( \vcenter{\hbox{\begin{tikzpicture}
            \begin{scope}[shift={(0.866,0.5)}]
                \ThinLine[orange] (0:0) -- (90:1.25);
            \ThinLine (0:0) -- (-150:0.5);
            \ThinLine (0:0) -- (-30:0.5);
            \draw[fill=black] (0:0) ellipse (0.05 and 0.05);
            \node at (0,-0.35) {$\xi_2$};
            \end{scope}
            \node at (1.73,0.25) {$\cdots$};
            \begin{scope}[shift={(2.596,0)}]
                \ThinLine[orange] (0:0) -- (-90:0.5);
            \ThinLine (0:0) -- (150:0.5);
            \ThinLine (0:0) -- (30:0.5);
            \draw[fill=black] (0:0) ellipse (0.05 and 0.05);
            \node at (0,0.425) {$\xi_{2l-1}$};
            \begin{scope}[shift={(0.866,0.5)}]
                \ThinLine[orange] (0:0) -- (90:1.25);
            \ThinLine (0:0) -- (-150:0.5);
            \ThinLine (0:0) -- (-30:0.5);
            \draw[fill=black] (0:0) ellipse (0.05 and 0.05);
            \node at (0,-0.35) {$\xi_{2l}$};
            \end{scope}
            \end{scope}
            \begin{scope}[shift={(0,-1.5)},yscale=-1]
            \begin{scope}[shift={(0.866,0.5)}]
                \ThinLine[orange] (0:0) -- (90:0.5);
            \ThinLine (0:0) -- (-150:0.5);
            \ThinLine (0:0) -- (-30:0.5);
            \draw[fill=black] (0:0) ellipse (0.05 and 0.05);
            \node at (0,-0.35) {$\eta^*_2$};
            \end{scope}
            \node at (1.73,0.25) {$\cdots$};
            \begin{scope}[shift={(2.596,0)}]
                \ThinLine[orange] (0:0) -- (-90:0.5);
            \ThinLine (0:0) -- (150:0.5);
            \ThinLine (0:0) -- (30:0.5);
            \draw[fill=black] (0:0) ellipse (0.05 and 0.05);
            \node at (0,0.35) {$\eta^*_{2l-1}$};
            \begin{scope}[shift={(0.866,0.5)}]
                \ThinLine[orange] (0:0) -- (90:0.5);
            \ThinLine (0:0) -- (-150:0.5);
            \ThinLine (0:0) -- (-30:0.5);
            \draw[fill=black] (0:0) ellipse (0.05 and 0.05);
            \node at (0,-0.35) {$\eta^*_{2l}$};
            \end{scope}
            \end{scope}
            \end{scope}
            \node [draw, minimum width=0.25cm, minimum height=0.25cm, fill = white] at (2.596,-0.75) {\scriptsize $\mathrm{D}_\rho$};
            \node [draw, minimum width=0.25cm, minimum height=0.25cm, fill = white] at (0.866,1.25) {\scriptsize $\mathrm{D}_\rho$};
            \node [draw, minimum width=0.25cm, minimum height=0.25cm, fill = white] at (3.462,1.25) {\scriptsize $\mathrm{D}_\rho$};
            \end{tikzpicture}}} \right). 
        \end{aligned}
    \end{equation}
    In the last step we used the following identity: 
    \begin{equation}
        \vcenter{\hbox{
\begin{tikzpicture}[x=0.75pt,y=0.75pt,yscale=-1,xscale=1,every path/.style={line width=1pt},every node/.style={font=\small}]
%uncomment if require: \path (0,300); %set diagram left start at 0, and has height of 300
%Straight Lines [id:da16113389880320583] 
\draw    (77.85,62) -- (97.85,82) ;
%Straight Lines [id:da8028522241837003] 
\draw    (97.85,82) -- (117.85,62) ;
%Straight Lines [id:da010259258749105493] 
\draw[orange]    (97.85,82) -- (97.85,102) ;
\draw[fill=black] (97.85,82) ellipse (1.5 and 1.5);
%Straight Lines [id:da44263313631044565] 
\draw    (77.85,142) -- (97.85,122) ;
%Straight Lines [id:da26463439435718594] 
\draw    (97.85,122) -- (117.85,142) ;
%Straight Lines [id:da9879392152779412] 
\draw[orange]    (97.85,122) -- (97.85,102) ;
\draw[fill=black] (97.85,122) ellipse (1.5 and 1.5);
%Curve Lines [id:da2329477327898768] 
\draw    (77.85,62) .. controls (62.25,54.89) and (61.85,149.29) .. (77.85,142) ;
% Text Node
\draw (83,79.6) node [anchor=north west][inner sep=0.75pt]  {$\xi $};
% Text Node
\draw (83,113.2) node [anchor=north west][inner sep=0.75pt]  {$\eta^{*}$};
% Text Node
\draw (117,46.8) node [anchor=north west][inner sep=0.75pt]  {$c$};
% Text Node
\draw (116.25,143.4) node [anchor=north west][inner sep=0.75pt]  {$c$};
% Text Node
\draw (101.05,99) node [anchor=north west][inner sep=0.75pt]  {$\rho$};
\end{tikzpicture}
}} = \frac{1}{d_c}\tr_{\mathcal{C}}(\eta^* \xi)\cdot \vcenter{\hbox{
    \begin{tikzpicture}
        \draw[line width=1pt] (0,1) -- (0,-1);
        \node at (0.5,0) {$c$};
    \end{tikzpicture}
}},\quad c\in \Irr(\mathcal{C}). 
    \end{equation}
Since $ \frac{1}{\sqrt{d_{a_1} d_{a_{2l+1}}}}\frac{1}{d_{a_2}} = \frac{1}{\sqrt{d_{a_1} d_{a_2}}}\frac{1}{\sqrt{d_{a_2} d_{a_{2l+1}}}}$, 
by repeating the above step we get: 
\begin{equation}
    \begin{aligned}
        &\braket{ \eta_1|\xi_1}\, \frac{1}{\sqrt{d_{a_2}d_{a_{2l+1}}}}\, \tr_{\mathcal{C}}\left( \vcenter{\hbox{\begin{tikzpicture}
            \begin{scope}[shift={(0.866,0.5)}]
                \ThinLine[orange] (0:0) -- (90:1.25);
            \ThinLine (0:0) -- (-150:0.5);
            \ThinLine (0:0) -- (-30:0.5);
            \draw[fill=black] (0:0) ellipse (0.05 and 0.05);
            \node at (0,-0.35) {$\xi_2$};
            \end{scope}
            \node at (2.6,0.25) {$\cdots$};
            \begin{scope}[shift={(3.466,0)}]
                \ThinLine[orange] (0:0) -- (-90:0.5);
            \ThinLine (0:0) -- (150:0.5);
            \ThinLine (0:0) -- (30:0.5);
            \draw[fill=black] (0:0) ellipse (0.05 and 0.05);
            \node at (0,0.425) {$\xi_{2l-1}$};
            \begin{scope}[shift={(0.866,0.5)}]
                \ThinLine[orange] (0:0) -- (90:1.25);
            \ThinLine (0:0) -- (-150:0.5);
            \ThinLine (0:0) -- (-30:0.5);
            \draw[fill=black] (0:0) ellipse (0.05 and 0.05);
            \node at (0,-0.35) {$\xi_{2l}$};
            \end{scope}
            \end{scope}
            \begin{scope}[shift={(0,-1.5)},yscale=-1]
            \begin{scope}[shift={(0.866,0.5)}]
                \ThinLine[orange] (0:0) -- (90:0.5);
            \ThinLine (0:0) -- (-150:0.5);
            \ThinLine (0:0) -- (-30:0.5);
            \draw[fill=black] (0:0) ellipse (0.05 and 0.05);
            \node at (0,-0.35) {$\eta^*_2$};
            \end{scope}
            \node at (2.6,0.25) {$\cdots$};
            \begin{scope}[shift={(3.466,0)}]
                \ThinLine[orange] (0:0) -- (-90:0.5);
            \ThinLine (0:0) -- (150:0.5);
            \ThinLine (0:0) -- (30:0.5);
            \draw[fill=black] (0:0) ellipse (0.05 and 0.05);
            \node at (0,0.35) {$\eta^*_{2l-1}$};
            \begin{scope}[shift={(0.866,0.5)}]
                \ThinLine[orange] (0:0) -- (90:0.5);
            \ThinLine (0:0) -- (-150:0.5);
            \ThinLine (0:0) -- (-30:0.5);
            \draw[fill=black] (0:0) ellipse (0.05 and 0.05);
            \node at (0,-0.35) {$\eta^*_{2l}$};
            \end{scope}
            \end{scope}
            \end{scope}
            \node [draw, minimum width=0.25cm, minimum height=0.25cm, fill = white] at (3.462,-0.75) {\scriptsize $\mathrm{D}_\rho$};
            \node [draw, minimum width=0.25cm, minimum height=0.25cm, fill = white] at (0.866,1.25) {\scriptsize $\mathrm{D}_\rho$};
            \node [draw, minimum width=0.25cm, minimum height=0.25cm, fill = white] at (4.328,1.25) {\scriptsize $\mathrm{D}_\rho$};
            \begin{scope}[shift={(1.732,0)}]
                \ThinLine[orange] (0:0) -- (-90:0.5);
            \ThinLine (0:0) -- (150:0.5);
            \ThinLine (0:0) -- (30:0.5);
            \draw[fill=black] (0:0) ellipse (0.05 and 0.05);
            \node at (0,0.35) {$\xi_3$};
            \begin{scope}[shift={(0,-1.5)},yscale=-1]
                \ThinLine[orange] (0:0) -- (-90:0.5);
            \ThinLine (0:0) -- (150:0.5);
            \ThinLine (0:0) -- (30:0.5);
            \draw[fill=black] (0:0) ellipse (0.05 and 0.05);
            \node at (0,0.35) {$\eta^*_3$};
            \end{scope}
            \node [draw, minimum width=0.25cm, minimum height=0.25cm, fill = white] at (0,-0.75) {\scriptsize $\mathrm{D}_\rho$};
            \end{scope}
            \end{tikzpicture}}} \right)
    \end{aligned}
\end{equation}
\newpage
\begin{equation}
    = \braket{\eta_1|\xi_1}\braket{\eta_2|\xi_2}\frac{1}{\sqrt{d_{a_3} d_{a_{2l+1}}} }\, \tr_{\mathcal{C}}\left( \vcenter{\hbox{\begin{tikzpicture}
            \node at (2.6,0.25) {$\cdots$};
            \begin{scope}[shift={(3.466,0)}]
                \ThinLine[orange] (0:0) -- (-90:0.5);
            \ThinLine (0:0) -- (150:0.5);
            \ThinLine (0:0) -- (30:0.5);
            \draw[fill=black] (0:0) ellipse (0.05 and 0.05);
            \node at (0,0.425) {$\xi_{2l-1}$};
            \begin{scope}[shift={(0.866,0.5)}]
                \ThinLine[orange] (0:0) -- (90:1.25);
            \ThinLine (0:0) -- (-150:0.5);
            \ThinLine (0:0) -- (-30:0.5);
            \draw[fill=black] (0:0) ellipse (0.05 and 0.05);
            \node at (0,-0.35) {$\xi_{2l}$};
            \end{scope}
            \end{scope}
            \begin{scope}[shift={(0,-1.5)},yscale=-1]
            \node at (2.6,0.25) {$\cdots$};
            \begin{scope}[shift={(3.466,0)}]
                \ThinLine[orange] (0:0) -- (-90:0.5);
            \ThinLine (0:0) -- (150:0.5);
            \ThinLine (0:0) -- (30:0.5);
            \draw[fill=black] (0:0) ellipse (0.05 and 0.05);
            \node at (0,0.35) {$\eta^*_{2l-1}$};
            \begin{scope}[shift={(0.866,0.5)}]
                \ThinLine[orange] (0:0) -- (90:0.5);
            \ThinLine (0:0) -- (-150:0.5);
            \ThinLine (0:0) -- (-30:0.5);
            \draw[fill=black] (0:0) ellipse (0.05 and 0.05);
            \node at (0,-0.35) {$\eta^*_{2l}$};
            \end{scope}
            \end{scope}
            \end{scope}
            \node [draw, minimum width=0.25cm, minimum height=0.25cm, fill = white] at (3.462,-0.75) {\scriptsize $\mathrm{D}_\rho$};
            \node [draw, minimum width=0.25cm, minimum height=0.25cm, fill = white] at (4.328,1.25) {\scriptsize $\mathrm{D}_\rho$};
            \begin{scope}[shift={(1.732,0)}]
                \ThinLine[orange] (0:0) -- (-90:0.5);
            \ThinLine (0:0) -- (150:0.5);
            \ThinLine (0:0) -- (30:0.5);
            \draw[fill=black] (0:0) ellipse (0.05 and 0.05);
            \node at (0,0.35) {$\xi_3$};
            \begin{scope}[shift={(0,-1.5)},yscale=-1]
                \ThinLine[orange] (0:0) -- (-90:0.5);
            \ThinLine (0:0) -- (150:0.5);
            \ThinLine (0:0) -- (30:0.5);
            \draw[fill=black] (0:0) ellipse (0.05 and 0.05);
            \node at (0,0.35) {$\eta^*_3$};
            \end{scope}
            \node [draw, minimum width=0.25cm, minimum height=0.25cm, fill = white] at (0,-0.75) {\scriptsize $\mathrm{D}_\rho$};
            \end{scope}
            \end{tikzpicture}}} \right) = \cdots = \prod^{2l}_{i=1} \braket{\eta_i|\xi_i}. 
\end{equation}
By linearity of $\mathrm{eval}$, this proves the claim. 
\end{proof}

Following the same procedure, we define the evaluation map $\mathrm{eval}_{\Lambda}$ for a region $\Lambda$ in one of the following forms:
\begin{equation}\label{eqn:: 2nd form of Lambda}
    \vcenter{\hbox{\begin{tikzpicture}
        \begin{scope}[yscale=-1]
            \ThinLine[orange] (0:0) -- (-90:0.5);
    \ThinLine (0:0) -- (150:0.5);
    \ThinLine (0:0) -- (30:0.5);
    \draw[fill=black] (0:0) ellipse (0.05 and 0.05);
    \node at (0,0.3) {$v_1$};
    \begin{scope}[shift={(0.866,0.5)}]
        \ThinLine[orange] (0:0) -- (90:0.5);
    \ThinLine (0:0) -- (-150:0.5);
    \ThinLine (0:0) -- (-30:0.5);
    \draw[fill=black] (0:0) ellipse (0.05 and 0.05);
    \node at (0,-0.3) {$v_2$};
    \end{scope}
    \node at (1.73,0.25) {$\cdots$};
    \begin{scope}[shift={(2.596,0)}]
        \ThinLine[orange] (0:0) -- (-90:0.5);
    \ThinLine (0:0) -- (150:0.5);
    \ThinLine (0:0) -- (30:0.5);
    \draw[fill=black] (0:0) ellipse (0.05 and 0.05);
    \node at (0,0.4) {$v_{2l-1}$};
    \begin{scope}[shift={(0.866,0.5)}]
        \ThinLine[orange] (0:0) -- (90:0.5);
    \ThinLine (0:0) -- (-150:0.5);
    \ThinLine (0:0) -- (-30:0.5);
    \draw[fill=black] (0:0) ellipse (0.05 and 0.05);
    \node at (0,-0.3) {$v_{2l}$};
    \end{scope}
    \end{scope}
    \end{scope}
    \begin{scope}[overlay]
            \fill[
                white,
                even odd rule,
                rounded corners=5pt
            ]
                (-0.56,-1.06) rectangle (3.96,0.56)
                (-0.2,-0.85) --
                ( 3.88,-0.85) --
                ( 3.58, 0.35) --
                (-0.5, 0.35) -- cycle;
            \draw[
                densely dashed,
                rounded corners=5pt,
                line width=0.5pt
            ]
                (-0.2,-0.85) --
                ( 3.88,-0.85) --
                ( 3.58, 0.35) --
                (-0.5, 0.35) -- cycle;
        \end{scope}
    \end{tikzpicture}}}\quad 
    \vcenter{\hbox{\begin{tikzpicture}
        \begin{scope}[yscale=-1]
    \begin{scope}[shift={(0.866,0.5)}]
        \ThinLine[orange] (0:0) -- (90:0.5);
    \ThinLine (0:0) -- (-150:0.5);
    \ThinLine (0:0) -- (-30:0.5);
    \draw[fill=black] (0:0) ellipse (0.05 and 0.05);
    \end{scope}
    \node at (1.73,0.25) {$\cdots$};
    \begin{scope}[shift={(2.596,0)}]
        \ThinLine[orange] (0:0) -- (-90:0.5);
    \ThinLine (0:0) -- (150:0.5);
    \ThinLine (0:0) -- (30:0.5);
    \draw[fill=black] (0:0) ellipse (0.05 and 0.05);
    \node at (0,0.4) {$v_{2l-2}$};
    \begin{scope}[shift={(0.866,0.5)}]
        \ThinLine[orange] (0:0) -- (90:0.5);
    \ThinLine (0:0) -- (-150:0.5);
    \ThinLine (0:0) -- (-30:0.5);
    \draw[fill=black] (0:0) ellipse (0.05 and 0.05);
    \end{scope}
    \end{scope}
    \end{scope}
    \begin{scope}[overlay]
    % Mask the portions of the edges outside the trapezoid
    \fill[
        white,
        even odd rule,
        rounded corners=5pt
    ]
        (0.14,-1.06) rectangle (4.19,0.56)
        (0.35,-0.85) --
        (3.98,-0.85) --
        (3.287,0.35) --
        (1.043,0.35) -- cycle;
    % Rounded, densely dashed trapezoidal boundary
    \draw[
        densely dashed,
        rounded corners=5pt,
        line width=0.5pt
    ]
        (0.35,-0.85) --
        (3.98,-0.85) --
        (3.287,0.35) --
        (1.043,0.35) -- cycle;
    % Restore labels covered by the mask
    \node at (0.866,-0.2) {$v_1$};
    \node at (3.462,-0.2) {$v_{2l-1}$};
\end{scope}
    \end{tikzpicture}}} \quad 
    \vcenter{\hbox{\begin{tikzpicture}
        \begin{scope}[yscale=-1]
            \ThinLine[orange] (0:0) -- (-90:0.5);
    \ThinLine (0:0) -- (150:0.5);
    \ThinLine (0:0) -- (30:0.5);
    \draw[fill=black] (0:0) ellipse (0.05 and 0.05);
    \begin{scope}[shift={(0.866,0.5)}]
        \ThinLine[orange] (0:0) -- (90:0.5);
    \ThinLine (0:0) -- (-150:0.5);
    \ThinLine (0:0) -- (-30:0.5);
    \draw[fill=black] (0:0) ellipse (0.05 and 0.05);
    \node at (0,-0.3) {$v_2$};
    \end{scope}
    \node at (1.73,0.25) {$\cdots$};
    \begin{scope}[shift={(2.596,0)}]
        \ThinLine[orange] (0:0) -- (-90:0.5);
    \ThinLine (0:0) -- (150:0.5);
    \ThinLine (0:0) -- (30:0.5);
    \draw[fill=black] (0:0) ellipse (0.05 and 0.05);
    \end{scope}
    \end{scope}
    \begin{scope}[overlay]
    % Tight mask outside the height-adjusted trapezoid
    \fill[
        white,
        even odd rule,
        rounded corners=5pt
    ]
        (-0.50,-1.10) rectangle (3.10,0.56)
        ( 0.317,-1.05) --
        ( 2.279,-1.05) --
        ( 2.972, 0.15) --
        (-0.376, 0.15) -- cycle;
    % Rounded, densely dashed trapezoidal boundary
    \draw[
        densely dashed,
        rounded corners=5pt,
        line width=0.5pt
    ]
        ( 0.317,-1.05) --
        ( 2.279,-1.05) --
        ( 2.972, 0.15) --
        (-0.376, 0.15) -- cycle;
    % Restore the label covered by the mask
    \node at (0,-0.3) {$v_1$};
    \node at (0.866,-0.2) {$v_2$};
    \node at (2.596,-0.4) {$v_{2l-1}$};
\end{scope}
% Make the bounding box follow the visible trapezoid
\pgfresetboundingbox
\path[use as bounding box]
    (-0.50,-1.10) rectangle (3.10,0.20);
    \end{tikzpicture}}}
\end{equation}
so that the corresponding string-net subspace is identified with $\Hom(\rho^l \gamma,\gamma\rho^l)$, $\Hom(\gamma \rho^{l-1}\gamma,\rho^l)$, and $\Hom(\rho^l,\gamma\rho^{l-1}\gamma)$, respectively.
The proof of Lemma \ref{lemma:: eval in a single row} also shows that $\mathrm{eval}_{\Lambda}$ is a surjective partial isometry. 

The identifications of string-net subspaces in a row with morphism spaces in $\mathcal{C}$ allow us to lift operators in $\End(\rho^l)$ to local operators. 
To do so, let $\pi_L$ and $\pi_R$ denote the left and right actions of $\End(\rho^{l})$ on $\Hom(\gamma \rho^l,\rho^l\gamma)$ defined by
\begin{equation}
    \pi_L(f)\pi_R(g) h = (f\otimes \mathrm{Id}_\gamma) h(\mathrm{Id}_\gamma \otimes g),\quad h\in \Hom(\gamma \rho^l,\rho^l\gamma).
\end{equation}
These formulas define representations $\pi_L$ and $\pi_R$ of $\End(\rho^l)$ and $\End(\rho^l)^{\text{op}}$, respectively. 
They are not $*$-representations, since
\begin{equation}
    \begin{aligned}
        \pi_L(f)^\dagger &= \pi_L\left( \left( \mathrm{D}_{\rho^l} \right)^{-1} f^* \mathrm{D}_{\rho^l}\right)\\
        \pi_R(f)^\dagger &= \pi_R\left(  \mathrm{D}_{\rho^l}  f^* \left( \mathrm{D}_{\rho^l} \right)^{-1} \right)
    \end{aligned}
\end{equation}
These in turn define the following local operators in a region $\Lambda$ as in \eqref{fig:: 1st form of Lambda}:
\begin{equation}\label{eqn:: bimodule structure of Tree}
    \begin{aligned}
        \mathrm{L}_{\Lambda}(f) &= \mathrm{eval}_{\Lambda}^\dagger \circ \pi_L(f) \circ \mathrm{eval}_{\Lambda},\\
    \mathrm{R}_{\Lambda}(f) &= \mathrm{eval}_{\Lambda}^\dagger \circ \pi_R(f) \circ \mathrm{eval}_{\Lambda}.
    \end{aligned}
\end{equation}
where $\mathrm{eval}_{\Lambda}^\dagger$ is the adjoint with respect to the inner products defined in \eqref{eqn:: inner product on Hom}. 
By Lemma \ref{lemma:: eval in a single row}, $\mathrm{eval}_{\Lambda}\mathrm{eval}^\dagger_{\Lambda}$ is the identity map on $\Hom(\gamma\rho^l,\rho^l\gamma)$. 
So the maps $\mathrm{L}_{\Lambda}(\cdot )$ and $\mathrm{R}_{\Lambda}(\cdot )$ preserve/reverse the multiplication on $\End(\rho^l)$. 
For the three region types in \eqref{eqn:: 2nd form of Lambda}, we similarly define $\mathrm{L}_{\Lambda}$/$\mathrm{R}_{\Lambda}$ to be the left/right actions of $\End(\rho^l)$, $\End(\rho^l)$/$\End(\rho^{l-1})$, and $\End(\rho^{l-1})$/$\End(\rho^l)$ respectively. 
Note that the operators $\mathrm{L}_{\Lambda}(f)$ and $\mathrm{R}_{\Lambda}(f)$ act by $0$ on the complement of the string-net subspace. 

\begin{definition}\label{def:: generating set}
    Let $\mathcal{C}$ be a unitary fusion category and $\rho\in \mathcal{C}$ be a self-dual object. 
    For $k>1$, we call a finite linearly independent set of morphisms $\mathcal{G}\subset \End(\rho^k)$ a weight-$k$ generating set if the following conditions are satisfied:
    \begin{enumerate}
        \item for any $f\in \mathcal{G}$, $f^* \in \mathcal{G}$; 
        \item for any $f\in \mathcal{G}$, there are simple objects $a_1,\dots, a_k$ and $b_1,\dots ,b_k$ in $\Irr(\mathcal{C})$ such that 
        \begin{equation}
            f = (q_{b_1}\otimes \cdots\otimes q_{b_k}) f (q_{a_1}\otimes \cdots\otimes q_{a_k});
        \end{equation}
        \item for any $n\geq k$, the $C^*$-algebra $\End(\rho^n)$ is generated by the set
        \begin{equation}
            \left\{\mathrm{Id}^{l}_{\rho}\otimes f\otimes \mathrm{Id}^{n-l-k}_{\rho} | f\in \mathcal{G}, 0\leq l\leq n-k \right\}. 
        \end{equation}
    \end{enumerate}
\end{definition}

Fixing a generating set $\mathcal{G}$, we define the second type of interaction as follows. 
Let $S$ a set of $(k-1)$ consecutive plaquettes between two rows $\Lambda,\Lambda'$. 
We define the operator $F_S$ by
\begin{equation}\label{eqn:: F_S}
    F_S = \sum_{f\in \mathcal{G}} \left\vert \mathrm{R}_{\Lambda} \left( \mathrm{D}_{\rho^k} f \left( \mathrm{D}_{\rho^k} \right)^{-1} \right) -  \mathrm{L}_{\Lambda'}\left( \left( \mathrm{D}_{\rho^k} \right)^{-1} f \mathrm{D}_{\rho^k} \right) \right\vert^2, 
\end{equation}
where $\vert A\vert^2 = A^\dagger A$. 
The following figure illustrate the support of the operator $F_S$ made from a generating set of weight $3$. 
\begin{figure}[hb]
    \centering
    \definecolor{bondorange}{RGB}{245,177,67}
\definecolor{bondblack}{RGB}{25,25,25}
\definecolor{contourmagenta}{RGB}{205,35,135}

\begin{tikzpicture}[
    x=1cm,
    y=1cm,
    line cap=round,
    line join=round,
    orange bond/.style={
        draw=bondorange,
        line width=0.95pt
    },
    black bond/.style={
        draw=bondblack,
        line width=0.70pt
    },
    contour/.style={
        draw=contourmagenta,
        line width=1.25pt,
        % dash pattern=on 5pt off 3pt,
        rounded corners=12pt
    }
]

    % Landscape crop, similar to the supplied picture.
    % For the original square crop, replace both rectangles
    % below by (0,0) rectangle (8,8).
    \path[use as bounding box]
        (0,1.2) rectangle (8,6.5);
    \clip
        (0,1.2) rectangle (8,6.5);

    \begin{scope}[xshift=0.25cm]

        \def\dx{0.73}
        \def\dy{0.44}

        % -----------------------------------------------
        % Orange vertical bonds
        % Each shared vertical bond is drawn only once.
        % -----------------------------------------------
        \foreach \cy/\xoffset in {
            -0.10/0,
             1.22/0.73,
             2.54/0,
             3.86/0.73,
             5.18/0,
             6.50/0.73,
             7.82/0,
             9.14/0.73
        }{
            \foreach \xb in {
                -2.92,-1.46,0,1.46,2.92,4.38,
                5.84,7.30,8.76,10.22
            }{
                \pgfmathsetmacro{\cx}{\xb+\xoffset}

                \draw[orange bond]
                    ({\cx+\dx},{\cy-\dy})
                    --
                    ({\cx+\dx},{\cy+\dy});
            }
        }

        % -----------------------------------------------
        % Black diagonal bonds
        % -----------------------------------------------
        \foreach \cy/\xoffset in {
            -0.10/0,
             1.22/0.73,
             2.54/0,
             3.86/0.73,
             5.18/0,
             6.50/0.73,
             7.82/0,
             9.14/0.73
        }{
            \foreach \xb in {
                -2.92,-1.46,0,1.46,2.92,4.38,
                5.84,7.30,8.76,10.22
            }{
                \pgfmathsetmacro{\cx}{\xb+\xoffset}

                \draw[black bond]
                    ({\cx-\dx},{\cy+\dy})
                    --
                    (\cx,{\cy+2*\dy})
                    --
                    ({\cx+\dx},{\cy+\dy});

                \draw[black bond]
                    ({\cx-\dx},{\cy-\dy})
                    --
                    (\cx,{\cy-2*\dy})
                    --
                    ({\cx+\dx},{\cy-\dy});
            }
        }

        % -----------------------------------------------
        % Black vertices
        % These are the endpoints of the vertical bonds.
        % -----------------------------------------------
        \foreach \cy/\xoffset in {
            -0.10/0,
             1.22/0.73,
             2.54/0,
             3.86/0.73,
             5.18/0,
             6.50/0.73,
             7.82/0,
             9.14/0.73
        }{
            \foreach \xb in {
                -2.92,-1.46,0,1.46,2.92,4.38,
                5.84,7.30,8.76,10.22
            }{
                \pgfmathsetmacro{\cx}{\xb+\xoffset}

                \fill[bondblack]
                    ({\cx+\dx},{\cy-\dy})
                    circle[radius=2pt];

                \fill[bondblack]
                    ({\cx+\dx},{\cy+\dy})
                    circle[radius=2pt];
            }
        }

        % -----------------------------------------------
        % Rounded contour
        %
        % All six control vertices are hexagon centers.
        % The top and bottom sides each cross two orange
        % bonds. Each sloping side crosses one black bond.
        %
        % Every crossing is at the exact bond midpoint.
        % -----------------------------------------------
        \coordinate (Cleft)        at (2.19,3.86);
        \coordinate (Cupperleft)   at (2.92,5.18);
        \coordinate (Cupperright)  at (5.84,5.18);
        \coordinate (Cright)       at (6.57,3.86);
        \coordinate (Clowerright)  at (5.84,2.54);
        \coordinate (Clowerleft)   at (2.92,2.54);

        % Start on a straight section so that the closing
        % point does not interfere with corner rounding.
        \draw[contour]
            (4.38,5.18)
            -- (Cupperright)
            -- (Cright)
            -- (Clowerright)
            -- (Clowerleft)
            -- (Cleft)
            -- (Cupperleft)
            -- cycle;

         % -----------------------------------------------
        % Region and row labels
        % -----------------------------------------------

        % Label the entire enclosed region near its upper boundary.
        \node[
            anchor=north,
            font=\large,
            inner sep=2pt
        ] at (4.38,5.10) {$S$};

         % Upper-row label: shifted slightly upward.
        \node[
            anchor=west,
            font=\large,
            inner sep=1pt
        ] at (6.85,4.78) {$\Lambda'$};

        % Lower-row label: shifted slightly downward.
        \node[
            anchor=west,
            font=\large,
            inner sep=1pt
        ] at (6.85,2.94) {$\Lambda$};
    \end{scope}
\end{tikzpicture}
\caption{The support of the operator $F_S$ defined from a set of generating set of weight $k=3$, which occupies two consecutive plaquettes in a row.}
\end{figure}
In particular, when $k=2$, the operator $F_S$ is supported in a single plaquette, so a genuine plaquette operator. 
For a region $\Lambda$, denote by $\mathcal{P}_{k-1}(\Lambda)$ the collection of $k-1$ consecutive plaquettes in its interior. 
Then the interaction of the model is given by
\begin{equation}\label{eqn: interaction of new model}
    \varPhi(D) =\begin{cases}
        \mathrm{I} - Q_e, & \text{if $D = e$}\\
        F_S, & \text{if $D \in \mathcal{P}_{k-1}(\Lambda)$}\\
        0, & \text{otherwise}
    \end{cases}
\end{equation}

By construction, the local Hamiltonian is a sum of positive local operators. 
Note that the edge projections $Q_e$ associated with uncolored edges commute with each other and with $F_S$. 
However, $F_S$ and $F_{S'}$ do not commute in general. 

\begin{remark}
    The choice of the generating set $\mathcal{G}$ is not unique, and different choices give rise to different Hamiltonians. 
    However, as we will see, local Hamiltonians defined by different sets of generators all share the same local ground state subspace for sufficiently large regions. 
\end{remark}

\section{Solving Local Ground States}\label{Sec:: Solving Local Ground States}

In this section, we solve the local ground states of the Hamiltonian defined in \eqref{eqn: interaction of new model}. 
We fix throughout the unitary fusion category $\mathcal{C}$ and the chosen object $\rho$. 
We show that the Hamiltonian is frustration-free, and the local ground state subspace can mapped to certain morphism spaces in $\mathcal{C}$. 

For a finite region $\Lambda\subset \Gamma$, we denote by $\mathfrak{A}(\Lambda) = \bigotimes_{i\in \Lambda}\mathcal{B}(\mathcal{H}_i)$ the algebra of local operators in $\Lambda$. 
The local Hamiltonian $H_{\Lambda}\in \mathfrak{A}(\Lambda)$ defined by the interaction as in \eqref{eqn: interaction of new model} is
\begin{equation}
    H_{\Lambda} = \sum_{D\subset \Lambda}\varPhi(D) = \sum_{e\in E(\Lambda)} (\mathrm{I} - Q_e) + \sum_{S\in \mathcal{P}_{k-1}(\Lambda)} F_S 
\end{equation}
the local Hamiltonian associated to $\Lambda$.
Denote by $\Pi(\Lambda)\in \mathfrak{A}(\Lambda)$ the ground state projection of $H_{\Lambda}$. 

We choose to solve ground states for a parameterized family of regions described below. 
Let $s$ be a vertex in the dual graph of $\Gamma$. 
For $l,h\geq 1$, a region $\Lambda$ in $\mathbb{R}^2$ is said to have shape $(l,h,s)$, if $\Lambda$ is bounded by two line segments emanating from $s$, such that the first goes through $l$ colored edges to the right and the second goes through $h$ uncolored edges to the upper right (Fig. \ref{fig:: based parallelogram}). 
The parallelogram region with shape $(l,h,s)$ contains $2lh$ vertices in its interior, and its boundary intersects with $\Gamma$ at $2l$ colored edges and $2h$ uncolored edges. 
We write $(l,h)$ when the choice of the base point is irrelevant. 
\begin{figure}[htbp]
    \centering
    % Darker versions of the colors in the original figure
    \definecolor{edgeorange}{RGB}{215,135,15}
    \definecolor{regionblue}{RGB}{45,110,195}
    \definecolor{regiongray}{RGB}{95,95,95}
    \definecolor{latticeblack}{RGB}{10,10,10}
    \begin{tikzpicture}[
        x=0.65cm,
        y=-0.72cm, % negative so coordinates increase downward
        line cap=round,
        line join=round
    ]

        % Fixed visible region
        \path[use as bounding box] (0,-0.55) rectangle (11,8.50);
        \clip (0,-0.55) rectangle (11,8.50);

        % Shift the complete picture to the right. The clipping window
        % remains fixed, exposing the leftmost vertical edges fully.
        \begin{scope}[xshift=0.28cm]

            % ---------------------------------------------------------
            % Orange vertical edges
            % ---------------------------------------------------------
            % Each honeycomb row has vertical period 3.
            \foreach \r in {-1,0,1,2,3} {
                \pgfmathsetmacro{\yy}{3*\r}

                % Vertical edges at even x-coordinates
                \foreach \x in {-2,0,2,4,6,8,10,12} {
                    \draw[
                        edgeorange,
                        line width=0.95pt
                    ]
                    (\x,{\yy+0.5}) -- (\x,{\yy+1.5});
                }

                % Vertical edges at odd x-coordinates
                \foreach \x in {-1,1,3,5,7,9,11,13} {
                    \draw[
                        edgeorange,
                        line width=0.95pt
                    ]
                    (\x,{\yy+2}) -- (\x,{\yy+3});
                }
            }

            % ---------------------------------------------------------
            % Black diagonal edges
            % ---------------------------------------------------------
            \foreach \r in {-1,0,1,2,3} {
                \pgfmathsetmacro{\yy}{3*\r}

                \foreach \x in {-3,-1,1,3,5,7,9,11} {

                    % Upper pair of sloping edges
                    \draw[
                        latticeblack,
                        line width=0.85pt
                    ]
                    (\x,\yy)
                    --
                    ({\x+1},{\yy+0.5})
                    --
                    ({\x+2},\yy);

                    % Lower pair of sloping edges
                    \draw[
                        latticeblack,
                        line width=0.85pt
                    ]
                    (\x,{\yy+2})
                    --
                    ({\x+1},{\yy+1.5})
                    --
                    ({\x+2},{\yy+2});
                }
            }

            % ---------------------------------------------------------
            % Black vertices
            % ---------------------------------------------------------
            \foreach \r in {-1,0,1,2,3} {
                \pgfmathsetmacro{\yy}{3*\r}

                % Vertices at odd x-coordinates
                \foreach \x in {-3,-1,1,3,5,7,9,11,13} {
                    \fill[latticeblack]
                        (\x,\yy) circle[radius=2.25pt];

                    \fill[latticeblack]
                        (\x,{\yy+2}) circle[radius=2.25pt];
                }

                % Vertices at even x-coordinates
                \foreach \x in {-2,0,2,4,6,8,10,12} {
                    \fill[latticeblack]
                        (\x,{\yy+0.5}) circle[radius=2.25pt];

                    \fill[latticeblack]
                        (\x,{\yy+1.5}) circle[radius=2.25pt];
                }
            }

            % ---------------------------------------------------------
            % Dashed parallelograms
            % ---------------------------------------------------------

            % Gray, more steeply tilted parallelogram
            \draw[
                regiongray,
                line width=1.15pt,
                dash pattern=on 2.2pt off 3.4pt,
                rounded corners=5pt
            ]
                (2,5.50)
                --
                (4,2.50)
                --
                (8,2.50)
                --
                (6,5.50)
                --
                cycle;

            % Blue, more horizontal parallelogram
            \draw[
                regionblue,
                line width=1.15pt,
                dash pattern=on 2.2pt off 3.4pt,
                rounded corners=5pt
            ]
                (2,5.50)
                --
                (3,4.00)
                --
                (9,4.00)
                --
                (8,5.50)
                --
                cycle;

            % Region label
            \node[
                anchor=east,
                inner sep=1pt,
                text=latticeblack
            ] at (1.72,5.75) {$s$};

        \end{scope}
    \end{tikzpicture}
    \caption{The parallelogram region in dashed gray has shape $(2,2,s)$, and the one in dashed blue has shape $(3,1,s)$.}
    \label{fig:: based parallelogram}
\end{figure}

The building block of parallelograms are those of shape $(l,1)$, which contain vertices in a single row. 
Since the local Hamiltonian $H_{\Lambda}$ contains no plaquette terms, the local ground state subspace of $H_{\Lambda}$ is just $\mathcal{H}^{\mathrm{st.n.}}_{\Lambda}$. 
By Lemma \ref{lemma:: eval in a single row}, we thus obtain that for such $\Lambda$: 
\begin{equation}\label{eqn:: ground state projection in a row}
    \Pi(\Lambda) = \prod_{e\subset \Lambda} Q_e = \mathrm{eval}^\dagger_{\Lambda}\mathrm{eval}_{\Lambda}.
\end{equation}
For a configuration $\ket{\xi}$ on some $\Lambda$ that has shape $(l,1)$, we write $\ket{\xi} = \ket{\xi_1,\xi_2,,\dots \xi_{2l}}$ where $\xi_i\in \mathcal{H}_{v_i}$ and $v_1,v_2,\dots v_{2l}$ are the vertices in $\Lambda$ numbered from left to right. 

\begin{lemma}\label{lemma:: propagate actions to bigger regions}
    Let $\Lambda$ be a region of shape $(l,1)$, and number the vertices in $\Lambda$ from left to right by $v_1,v_2,\dots v_{2l}$. 
    Let $\Lambda'\subseteq \Lambda$ be a region of shape $(k,1)$ with $k\leq l$, and suppose $\Lambda'$ contains the vertices $v_{2i+1},\dots, v_{2i+2k}$. 
    Then for any $f\in \End(\rho^k)$, we have 
    \begin{equation}\label{eqn:: propagate left actions}
        \mathrm{eval}_{\Lambda} \mathrm{L}_{\Lambda'}(f) = \pi_L(\mathrm{Id}^i_\rho\otimes f\otimes \mathrm{Id}^{l-i-k}_\rho)\mathrm{eval}_{\Lambda}.
    \end{equation}
    Similarly, we have 
    \begin{equation}\label{eqn:: propagate right actions}
        \mathrm{eval}_{\Lambda} \mathrm{R}_{\Lambda'}(f) = \pi_R(\mathrm{Id}^i_\rho\otimes f\otimes \mathrm{Id}^{l-i-k}_\rho)\mathrm{eval}_{\Lambda}.
    \end{equation}
\end{lemma}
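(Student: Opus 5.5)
We prove \eqref{eqn:: propagate left actions}; the right-action identity \eqref{eqn:: propagate right actions} follows by the mirror-image argument. The idea is to factor $\mathrm{eval}_{\Lambda}$ through $\mathrm{eval}_{\Lambda'}$. Write $\Lambda = \Lambda_1\cup\Lambda'\cup\Lambda_2$, where $\Lambda_1$ contains $v_1,\dots,v_{2i}$ and $\Lambda_2$ contains $v_{2i+2k+1},\dots,v_{2l}$. These are regions of shapes $(i,1)$ and $(l-i-k,1)$, possibly empty. Gluing diagrams gives a bilinear map
\begin{equation*}
\mathrm{glue}:\ \Hom(\gamma\rho^i,\rho^i\gamma)\otimes\Hom(\gamma\rho^k,\rho^k\gamma)\otimes\Hom(\gamma\rho^{l-i-k},\rho^{l-i-k}\gamma)\to \Hom(\gamma\rho^l,\rho^l\gamma).
\end{equation*}
It is defined by composing along the uncolored $\gamma$ strands, projecting each shared strand onto matching simple labels, which is what the edge projections $Q_e$ enforce. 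By construction of the evaluation maps,
\begin{equation*}
\mathrm{eval}_{\Lambda} = \mathrm{glue}\circ(\mathrm{eval}_{\Lambda_1}\otimes \mathrm{eval}_{\Lambda'}\otimes\mathrm{eval}_{\Lambda_2}).
\end{equation*}
Since $\pi_L(f)$ acts only on the $\rho^k$ strands of the middle piece, graphical calculus (isotopy of $f$ along the top $\rho$ strands) gives
\begin{equation*}
\mathrm{glue}\circ(\mathrm{Id}\otimes\pi_L(f)\otimes\mathrm{Id}) = \pi_L(\mathrm{Id}^i_\rho\otimes f\otimes\mathrm{Id}^{l-i-k}_\rho)\circ\mathrm{glue}.
\end{equation*}

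From these two identities it remains to show
\begin{equation*}
\mathrm{eval}_{\Lambda}\,\mathrm{L}_{\Lambda'}(f) = \mathrm{glue}\circ(\mathrm{eval}_{\Lambda_1}\otimes \pi_L(f)\,\mathrm{eval}_{\Lambda'}\otimes \mathrm{eval}_{\Lambda_2}).
\end{equation*}
Since $\mathrm{L}_{\Lambda'}(f)=\mathrm{eval}_{\Lambda'}^\dagger\pi_L(f)\mathrm{eval}_{\Lambda'}$ acts on the middle factor only, this reduces to the claim
\begin{equation*}
\mathrm{glue}\circ(\mathrm{eval}_{\Lambda_1}\otimes\mathrm{eval}_{\Lambda'}\mathrm{eval}_{\Lambda'}^\dagger\mathrm{eval}_{\Lambda'}\otimes\mathrm{eval}_{\Lambda_2}) = \mathrm{glue}\circ(\mathrm{eval}_{\Lambda_1}\otimes\mathrm{eval}_{\Lambda'}\otimes\mathrm{eval}_{\Lambda_2}),
\end{equation*}
applied with $\pi_L(f)$ inserted between the middle $\mathrm{eval}_{\Lambda'}$ and $\mathrm{eval}^\dagger_{\Lambda'}$. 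After Lemma \ref{lemma:: eval in a single row}, which gives $\mathrm{eval}_{\Lambda'}\mathrm{eval}_{\Lambda'}^\dagger=\mathrm{Id}$ on $\Hom(\gamma\rho^k,\rho^k\gamma)$, this holds trivially.

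The main obstacle is whether $\mathrm{L}_{\Lambda'}(f)$ is really "local". The operator $\mathrm{eval}_{\Lambda'}^\dagger$ lands in $\mathcal{H}_{\Lambda'}$, but the boundary labels of the result, namely the simple objects on the two outer uncolored edges of $\Lambda'$, must agree with those already fixed by $\Lambda_1$ and $\Lambda_2$. We would check that $\pi_L(f)$ preserves the decomposition $\Hom(\gamma\rho^k,\rho^k\gamma)=\bigoplus_{a,b}\Hom(a\rho^k,\rho^k b)$, and that $\mathrm{eval}_{\Lambda'}^\dagger$ maps the $(a,b)$ summand into configurations with outer labels $a$ and $b$. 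The second point holds because the inner product \eqref{eqn:: inner product on Hom} makes distinct $(a,b)$ summands orthogonal, and Lemma \ref{lemma:: eval in a single row} identifies inner products compatibly. Hence $\mathrm{L}_{\Lambda'}(f)$ commutes with the label projections on the outer edges, and gluing respects everything above.

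It also helps to note a normalization point. The factor $1/\sqrt{d_ad_b}$ in \eqref{eqn:: inner product on Hom} causes no trouble, because the isometry property is only used inside $\Lambda'$, while $\mathrm{glue}$ is a plain composition of morphisms with no weights.
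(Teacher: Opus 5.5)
Your proposal is correct and follows essentially the same route as the paper. Both arguments factor $\mathrm{eval}_{\Lambda}$ as the composite along the $\gamma$ strands of the evaluations on $\Lambda_-$, $\Lambda'$ and $\Lambda_+$, use $\mathrm{eval}_{\Lambda'}\mathrm{eval}^\dagger_{\Lambda'}=\mathrm{Id}$ from Lemma \ref{lemma:: eval in a single row}, and slide $f$ past the outer pieces by the interchange law. Your ``main obstacle'' paragraph is unnecessary: the gluing factorization is linear on all of $\mathcal{H}_{\Lambda_-}\otimes\mathcal{H}_{\Lambda'}\otimes\mathcal{H}_{\Lambda_+}$, so it holds whatever boundary labels $\mathrm{eval}^\dagger_{\Lambda'}$ produces.
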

\begin{proof}
    By linearity, it suffices to consider a configuration $\ket{\xi}\in \mathcal{H}_{\Lambda}$. 
    Partition $\Lambda$ into $\Lambda_-\cup \Lambda'\cup \Lambda_+$ from left to right, and write $\ket{\xi}=\ket{\xi^-}\otimes\ket{\xi'}\otimes\ket{\xi^+}$, where $\ket{\xi'}=\ket{\xi_{2i+1},\dots,\xi_{2i+2k}}$. 
    By \eqref{eqn:: bimodule structure of Tree}, 
    \begin{equation}\label{eqn:: expanding the LHS}
        \begin{aligned}
            \mathrm{eval}_{\Lambda} \left( \mathrm{L}_{\Lambda'}(f)\otimes \mathrm{I}_{\Lambda\backslash\Lambda'} \right)\ket{\xi} = \mathrm{eval}_{\Lambda}\left( \xi^{-}\otimes \mathrm{eval}^\dagger_{\Lambda'}\left( \pi_{L}(f)\mathrm{eval}_{\Lambda'}(\xi') \right) \otimes \xi^{+} \right).
        \end{aligned}
    \end{equation}
    For any $h\in \Hom(a\rho^k,\rho^k d)$, we have 
    \begin{equation}
        \begin{aligned}
            &\mathrm{eval}_{\Lambda}\left( \xi^{-}\otimes \mathrm{eval}^\dagger_{\Lambda'}(h) \otimes \xi^{+} \right)\\
            &= \left[ \mathrm{Id}^{\otimes i+k}_{\rho}\otimes \mathrm{eval}_{\Lambda_+}(\xi^{+}) \right] \left[ \mathrm{Id}^{\otimes i}_{\rho} \otimes h\otimes \mathrm{Id}^{\otimes l-k-i}_{\rho} \right] \left[ \mathrm{eval}_{\Lambda_-}(\xi^{-}) \otimes \mathrm{Id}^{\otimes l-i}_{\rho} \right].
        \end{aligned}
    \end{equation}
    Write $\widetilde f=\mathrm{Id}^{i}_{\rho}\otimes f\otimes \mathrm{Id}^{l-i-k}_{\rho}$.
    Substituting
    $h=\pi_L(f)\mathrm{eval}_{\Lambda'}(\xi')=(f\otimes\mathrm{Id}_d)\mathrm{eval}_{\Lambda'}(\xi')$
    into the preceding expansion gives
    \begin{equation}
        \begin{aligned}
            &\mathrm{eval}_{\Lambda}\left( \xi^{-}\otimes \mathrm{eval}^\dagger_{\Lambda'}\left( \pi_{L}(f)\mathrm{eval}_{\Lambda'}(\xi') \right) \otimes \xi^{+} \right) = \left(\widetilde f\otimes\mathrm{Id}_{\gamma}\right)\mathrm{eval}_{\Lambda}(\xi) = \pi_L\left(\widetilde f\right)\mathrm{eval}_{\Lambda}(\xi).
        \end{aligned}
    \end{equation}
    Together with \eqref{eqn:: expanding the LHS}, this proves \eqref{eqn:: propagate left actions}. 
    Similarly, using $\pi_R(f)\mathrm{eval}_{\Lambda'}(\xi') = \mathrm{eval}_{\Lambda'}\left( \xi'(\mathrm{Id}_a\otimes f) \right)$ in the same expansion gives
    \begin{equation}
        \begin{aligned}
            &\mathrm{eval}_{\Lambda}\left(\mathrm{R}_{\Lambda'}(f)\otimes\mathrm{I}_{\Lambda\backslash\Lambda'}\right)\ket{\xi} = \mathrm{eval}_{\Lambda}(\xi)\left(\mathrm{Id}_{\gamma}\otimes\widetilde f\right) = \pi_R\left(\widetilde f\right)\mathrm{eval}_{\Lambda}(\xi).
        \end{aligned}
    \end{equation}
    This proves \eqref{eqn:: propagate right actions}. 
\end{proof}

\begin{corollary}\label{corollary:: commutativity between left/right actions}
    Let $\Lambda$ be a region of shape $(l,1)$. 
    Suppose $\Lambda = \Lambda_1\cup \Lambda_2$ where $\Lambda_i$ is a region of shape $(k_i,1)$ for $i=1,2$. 
    Then for any $f\in \End(\rho^{k_1})$ and $g\in \End(\rho^{k_2})$, 
    \begin{equation}
        [\mathrm{L}_{\Lambda_1}(f), \mathrm{R}_{\Lambda_2}(g)] = 0. 
    \end{equation}
\end{corollary}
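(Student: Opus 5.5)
The plan is to push both operators through $\mathrm{eval}_{\Lambda}$ using Lemma \ref{lemma:: propagate actions to bigger regions}, and then use that left and right composition commute. Write $\Lambda_1$ as occupying vertices $v_{2i_1+1},\dots,v_{2i_1+2k_1}$ and $\Lambda_2$ as occupying $v_{2i_2+1},\dots,v_{2i_2+2k_2}$. (In the decomposition $\Lambda=\Lambda_1\cup\Lambda_2$ we will have $i_1=0$, $i_2=k_1$ or the symmetric arrangement, but the argument should not depend on this.) Set
\[
\widetilde f=\mathrm{Id}^{i_1}_\rho\otimes f\otimes \mathrm{Id}^{l-i_1-k_1}_\rho,\qquad \widetilde g=\mathrm{Id}^{i_2}_\rho\otimes g\otimes \mathrm{Id}^{l-i_2-k_2}_\rho .
\]
Lemma \ref{lemma:: propagate actions to bigger regions} gives $\mathrm{eval}_\Lambda \mathrm{L}_{\Lambda_1}(f)=\pi_L(\widetilde f)\mathrm{eval}_\Lambda$ and $\mathrm{eval}_\Lambda \mathrm{R}_{\Lambda_2}(g)=\pi_R(\widetilde g)\mathrm{eval}_\Lambda$. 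Applying these twice yields
\[
\mathrm{eval}_\Lambda \mathrm{L}_{\Lambda_1}(f)\mathrm{R}_{\Lambda_2}(g)=\pi_L(\widetilde f)\pi_R(\widetilde g)\,\mathrm{eval}_\Lambda,
\]
and the same expression with $\pi_R(\widetilde g)\pi_L(\widetilde f)$ for the other ordering. Since $\pi_L$ acts by postcomposition and $\pi_R$ by precomposition on $\Hom(\gamma\rho^l,\rho^l\gamma)$, these commute by associativity of composition. Therefore $\mathrm{eval}_\Lambda[\mathrm{L}_{\Lambda_1}(f),\mathrm{R}_{\Lambda_2}(g)]=0$.

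The remaining step is to remove the $\mathrm{eval}_\Lambda$ in front. By \eqref{eqn:: ground state projection in a row}, $\mathrm{eval}_\Lambda^\dagger\mathrm{eval}_\Lambda=\Pi(\Lambda)=\prod_{e\in E(\Lambda)}Q_e$. Multiplying on the left by $\mathrm{eval}_\Lambda^\dagger$ therefore gives $\Pi(\Lambda)[\mathrm{L}_{\Lambda_1}(f),\mathrm{R}_{\Lambda_2}(g)]=0$. I expect this to be the main obstacle: the commutator must also vanish off the string-net subspace of $\Lambda$, and the range of the commutator must lie inside it.

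I would address this with the following support argument. Each $\mathrm{L}_{\Lambda_1}(f)$ acts as $0$ outside $\mathcal{H}^{\mathrm{st.n.}}_{\Lambda_1}$, and its range lies in that subspace, since it begins with $\mathrm{eval}^\dagger_{\Lambda_1}$. It preserves simple labels on the edges crossing the boundary of $\Lambda_1$, because $\pi_L(f)$ maps $\Hom(a\rho^{k_1},\rho^{k_1}b)$ to itself. It therefore commutes with every $Q_e$, $e\in E(\Lambda)$: for interior edges of $\Lambda_1$ both sides factor through the string-net projection, and edges outside or at the junction are diagonal in the labels that the operator preserves. The same holds for $\mathrm{R}_{\Lambda_2}(g)$.

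It follows that the commutator $C$ commutes with $\Pi(\Lambda)$, so $C=\Pi(\Lambda)C+(\mathrm{I}-\Pi(\Lambda))C$. The first term vanishes by the argument above. For the second term, on the complement of the string-net subspace at least one edge of $\Lambda$ has mismatched labels; call it $e$. If $e$ is internal to $\Lambda_1$ (respectively $\Lambda_2$), the operator on that side kills the vector, and $C$ commutes with $Q_e$, so $C$ kills the vector in either order. If $e$ is the junction edge between $\Lambda_1$ and $\Lambda_2$, both operators preserve its labels and each acts on a disjoint set of vertices, so they commute as tensor factors. Combining these cases gives $C=0$.
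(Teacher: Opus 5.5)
Your proposal is correct and follows essentially the paper's argument. You transport both operators through $\mathrm{eval}_{\Lambda}$ by Lemma~\ref{lemma:: propagate actions to bigger regions} and use that $\pi_L$ and $\pi_R$ commute to settle the string-net subspace; your $\Pi(\Lambda)C=0$ together with $[C,\Pi(\Lambda)]=0$ plays the role of the paper's ``$A,B$ preserve the string-net subspace and $\mathrm{eval}_{\Lambda}$ is injective there''. You then kill configurations with a mismatched edge by the same case analysis over $E(\Lambda_1)\cup E(\Lambda_2)$.

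Two small fixes. In the mismatched case, the fact you need is that the \emph{other} operator commutes with $Q_e$, so it keeps the vector in $\ker Q_e$, where the first operator vanishes; it is not enough that $C$ commutes with $Q_e$. You proved this fact earlier, so the argument goes through. You should also state that when $\Lambda_1,\Lambda_2$ overlap, every edge of $E(\Lambda)$ lies in $E(\Lambda_1)\cup E(\Lambda_2)$, which makes your case list exhaustive.
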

\begin{proof}
    Set $A=\mathrm{L}_{\Lambda_1}(f)\otimes\mathrm{I}_{\Lambda\backslash\Lambda_1}$ and $B=\mathrm{R}_{\Lambda_2}(g)\otimes\mathrm{I}_{\Lambda\backslash\Lambda_2}$. 
    If $\Lambda_1$ and $\Lambda_2$ have disjoint vertex sets, then $A$ and $B$ have disjoint supports and hence commute.
    We may therefore assume that $\Lambda_1$ and $\Lambda_2$ overlap.
    By \eqref{eqn:: bimodule structure of Tree}, each local operator annihilates a configuration with mismatched labels on an edge contained in the interior of the corresponding subrow and preserves the labels on the edges crossing its boundary.
    In particular, $A$ and $B$ preserve the string-net subspace of $\mathcal{H}_{\Lambda}$. 
    By Lemma \ref{lemma:: propagate actions to bigger regions}, 
    \begin{equation}
        \begin{aligned}
            \mathrm{eval}_{\Lambda}A
            &=\pi_L(\widetilde f)\mathrm{eval}_{\Lambda},\\
            \mathrm{eval}_{\Lambda}B
            &=\pi_R(\widetilde g)\mathrm{eval}_{\Lambda},
        \end{aligned}
    \end{equation}
    where $\widetilde f$ and $\widetilde g$ are obtained by tensoring $f$ and $g$ with identity morphisms.
    If $\ket{\xi}$ belongs to the string-net subspace, then
    \begin{equation}
        \begin{aligned}
            \mathrm{eval}_{\Lambda}AB\ket{\xi}
            &=\pi_L(\widetilde f)\pi_R(\widetilde g)\mathrm{eval}_{\Lambda}\ket{\xi}\\
            &=\pi_R(\widetilde g)\pi_L(\widetilde f)\mathrm{eval}_{\Lambda}\ket{\xi}
            =\mathrm{eval}_{\Lambda}BA\ket{\xi},
        \end{aligned}
    \end{equation}
    because the left and right actions commute.
    The restriction of $\mathrm{eval}_{\Lambda}$ to the string-net subspace is injective by Lemma \ref{lemma:: eval in a single row}; hence $AB\ket{\xi}=BA\ket{\xi}$.

    It remains to consider a configuration $\ket{\xi}$ with mismatched labels on some edge $e\in E(\Lambda)$.
    Since the two overlapping subrows cover $\Lambda$, we have $e\in E(\Lambda_1)\cup E(\Lambda_2)$.
    Suppose first that $e\in E(\Lambda_1)$.
    Then $A\ket{\xi}=0$, so $BA\ket{\xi}=0$.
    If $e\in E(\Lambda_2)$, then $B\ket{\xi}=0$; otherwise, $B$ preserves the mismatched labels on $e$, hence $AB\ket{\xi}=0$.
    If $e\notin E(\Lambda_1)$, then $e\in E(\Lambda_2)$, and the same argument with $A$ and $B$ interchanged shows that both compositions vanish.
    This proves the claim. 
\end{proof}

\subsection{Local ground state subspace}

We now solve the local ground states of $H_{\Lambda}$ for a region $\Lambda$ of shape $(l,h)$ with $h>1$. 
Consider a partition of $\Lambda$ into $h$ parallelograms of shape $(l,1)$, denoted by $\Lambda_1,\Lambda_2,\dots \Lambda_h$ from top to bottom. 
We shall see that local ground states in a parallelogram $\Lambda$ of shape $(l,h)$ can be obtained by sewing those in regions of shape $(l,1)$, which are known to correspond to morphisms in $\Hom(\gamma\rho^l,\rho^l\gamma)$. 

\begin{figure}[htbp]
    \centering
    % Darker colors
\definecolor{edgeorange}{RGB}{205,125,8}
\definecolor{boundaryblue}{RGB}{25,95,170}
\definecolor{boundarymagenta}{RGB}{180,35,165}
\definecolor{boundaryred}{RGB}{205,45,55}
\definecolor{latticeblack}{RGB}{10,10,10}

\begin{tikzpicture}[
    x=0.72cm,
    y=-0.72cm, % Coordinates increase downward
    line cap=round,
    line join=round
]

    % Fixed visible region
    \path[use as bounding box]
        (0,-0.50) rectangle (11,8.35);
    \clip
        (0,-0.50) rectangle (11,8.35);

    % Shift the complete drawing slightly to the right
    \begin{scope}[xshift=0.32cm]

        % =========================================================
        % Honeycomb lattice
        % =========================================================

        % Orange vertical bonds
        \foreach \r in {-1,0,1,2,3} {
            \pgfmathsetmacro{\yb}{3*\r}

            % Vertical bonds at even x-coordinates
            \foreach \x in {-2,0,2,4,6,8,10,12} {
                \draw[
                    edgeorange,
                    line width=0.95pt
                ]
                    (\x,{\yb+0.5})
                    --
                    (\x,{\yb+1.5});
            }

            % Vertical bonds at odd x-coordinates
            \foreach \x in {-3,-1,1,3,5,7,9,11,13} {
                \draw[
                    edgeorange,
                    line width=0.95pt
                ]
                    (\x,{\yb+2})
                    --
                    (\x,{\yb+3});
            }
        }

        % Black diagonal bonds
        \foreach \r in {-1,0,1,2,3} {
            \pgfmathsetmacro{\yb}{3*\r}

            % Diagonal bonds from the upper odd vertices
            \foreach \x in {-3,-1,1,3,5,7,9,11,13} {
                \draw[
                    latticeblack,
                    line width=0.85pt
                ]
                    (\x,\yb)
                    --
                    ({\x-1},{\yb+0.5});

                \draw[
                    latticeblack,
                    line width=0.85pt
                ]
                    (\x,\yb)
                    --
                    ({\x+1},{\yb+0.5});
            }

            % Diagonal bonds from the lower odd vertices
            \foreach \x in {-3,-1,1,3,5,7,9,11,13} {
                \draw[
                    latticeblack,
                    line width=0.85pt
                ]
                    (\x,{\yb+2})
                    --
                    ({\x-1},{\yb+1.5});

                \draw[
                    latticeblack,
                    line width=0.85pt
                ]
                    (\x,{\yb+2})
                    --
                    ({\x+1},{\yb+1.5});
            }
        }

        % Black lattice vertices
        \foreach \r in {-1,0,1,2,3} {
            \pgfmathsetmacro{\yb}{3*\r}

            % Odd-column vertices
            \foreach \x in {-3,-1,1,3,5,7,9,11,13} {
                \fill[latticeblack]
                    (\x,\yb) circle[radius=2.25pt];

                \fill[latticeblack]
                    (\x,{\yb+2}) circle[radius=2.25pt];

                \fill[latticeblack]
                    (\x,{\yb+3}) circle[radius=2.25pt];
            }

            % Even-column vertices
            \foreach \x in {-2,0,2,4,6,8,10,12} {
                \fill[latticeblack]
                    (\x,{\yb+0.5}) circle[radius=2.25pt];

                \fill[latticeblack]
                    (\x,{\yb+1.5}) circle[radius=2.25pt];
            }
        }

        % =========================================================
        % Coordinates of the dashed parallelogram
        % =========================================================

        % Outer corners
        \coordinate (A) at (0,5.50);     % lower-left
        \coordinate (B) at (3,1.00);     % upper-left
        \coordinate (C) at (9.25,1.00);  % upper-right, expanded rightward
        \coordinate (D) at (6.25,5.50);  % lower-right, expanded rightward

        % Intersections with the first internal line
        \coordinate (UL) at (2.00,2.60);
        \coordinate (UR) at (8.25,2.60);

        % Intersections with the second internal line
        \coordinate (LL) at (1.00,4.10);
        \coordinate (LR) at (7.25,4.10);

        % =========================================================
        % Slightly expanded black outer boundary
        % =========================================================

        % The black outline is kept slightly outside the colored
        % outer boundary so that it does not cover the colored dashes.
        \coordinate (Ablack) at (-0.18,5.59);
        \coordinate (Bblack) at (2.94,0.91);
        \coordinate (Cblack) at (9.43,0.91);
        \coordinate (Dblack) at (6.31,5.59);

        % Draw the black outline first so the colored dashed lines
        % remain visible above it.
        \draw[
            latticeblack,
            line width=1.15pt,
            dash pattern=on 3pt off 3pt,
            rounded corners=6pt
        ]
            (Ablack) --
            (Bblack) --
            (Cblack) --
            (Dblack) --
            cycle;

        % =========================================================
        % Colored portions of the outer dashed boundary
        % =========================================================

        % Upper blue part, including the two upper corners
        \draw[
            boundaryblue,
            line width=1.25pt,
            dash pattern=on 3pt off 3pt,
            dash phase=3pt,
            rounded corners=6pt
        ]
            (UL) -- (B) -- (C) -- (UR);

        % Middle magenta portions of the sloping sides
        \draw[
            boundarymagenta,
            line width=1.25pt,
            dash pattern=on 3pt off 3pt,
            dash phase=3pt
        ]
            (UL) -- (LL);

        \draw[
            boundarymagenta,
            line width=1.25pt,
            dash pattern=on 3pt off 3pt,
            dash phase=3pt
        ]
            (UR) -- (LR);

        % Lower red part, including both lower corners
        \draw[
            boundaryred,
            line width=1.25pt,
            dash pattern=on 3pt off 3pt,
            dash phase=3pt,
            rounded corners=6pt
        ]
            (LL) -- (A) -- (D) -- (LR);

        % =========================================================
        % Internal dashed boundaries
        % =========================================================

        % Boundary between Lambda_1 and Lambda_2:
        % alternating blue and magenta dashes
        \draw[
            boundaryblue,
            line width=1.20pt,
            dash pattern=on 3pt off 3pt
        ]
            (UL) -- (UR);

        \draw[
            boundarymagenta,
            line width=1.20pt,
            dash pattern=on 3pt off 3pt,
            dash phase=3pt
        ]
            (UL) -- (UR);

        % Boundary between Lambda_2 and Lambda_3:
        % alternating magenta and red dashes
        \draw[
            boundarymagenta,
            line width=1.20pt,
            dash pattern=on 3pt off 3pt
        ]
            (LL) -- (LR);

        \draw[
            boundaryred,
            line width=1.20pt,
            dash pattern=on 3pt off 3pt,
            dash phase=3pt
        ]
            (LL) -- (LR);

        % =========================================================
        % Labels
        % =========================================================

        \node[
            anchor=east,
            inner sep=1pt,
            text=latticeblack,
            font=\large
        ] at (8.25,2.10) {$\Lambda_{1}$};

        \node[
            anchor=east,
            inner sep=1pt,
            text=latticeblack,
            font=\large
        ] at (7.25,3.50) {$\Lambda_{2}$};

        \node[
            anchor=east,
            inner sep=1pt,
            text=latticeblack,
            font=\large
        ] at (6.25,4.95) {$\Lambda_{3}$};

        \node[
            anchor=north,
            inner sep=1pt,
            text=latticeblack,
            font=\large
        ] at (6.125,5.62) {$\Lambda$};

    \end{scope}
\end{tikzpicture}
    \caption{Example: a parallelogram $\Lambda$ of shape $(3,3)$ is partitioned into $3$ parallelograms of shape $(3,1)$, ordered from top to bottom.}
    \label{fig:: horizontal partition}
\end{figure}

We slightly extend the inner product defined in \eqref{eqn:: inner product on Hom} to morphism spaces of the form $\Hom(\gamma^m\rho^n,\rho^n \gamma^m)$ for $m,n\geq 0$. 
Namely, for $a_1,\dots,a_m$ and $b_1,\dots,b_m\in \mathrm{\Irr}(\mathcal{C})$, we define 
\begin{equation}
    \braket{\eta,\xi} = \prod^m_{i=1}\frac{1}{\sqrt{d_{a_i} d_{b_i}}}\, \tr_{\mathcal{C}}\left( \eta^* \left( \mathrm{D}_{\rho^n}\otimes \mathrm{Id}^{\otimes m}_{\gamma} \right)\xi \left( \mathrm{Id}^{\otimes m}_{\gamma}\otimes \mathrm{D}_{\rho^n} \right) \right),\quad \eta,\xi\in \Hom(\gamma^m\rho^n,\rho^n \gamma^m). 
\end{equation}
For $m=0$, this defines an inner product on $\End(\rho^n)$, with respect to which we shall fix an orthonormal basis $\mathrm{ONB}(\rho^n)$ of $\End(\rho^n)$. 
Hence we have 
\begin{equation}\label{eqn:: resolution with respect to ONB}
    f = \sum_{\eta\in \mathrm{ONB}(\rho^n)} \eta\tr_{\mathcal{C}}(\eta^* \mathrm{D}_{\rho^n} f \mathrm{D}_{\rho^n}), \quad f\in \End(\rho^n).
\end{equation}
The adjoint $f\mapsto f^*$ is anti-unitary with respect to $\braket{\cdot ,\cdot}$:
\begin{equation}
    \braket{\xi^*,\eta^*} = \tr_{\mathcal{C}}(\xi \mathrm{D}_{\rho^n}\eta^* \mathrm{D}_{\rho^n}) = \tr_{\mathcal{C}}( \eta^* \mathrm{D}_{\rho^n} \xi \mathrm{D}_{\rho^n}) = \braket{\eta,\xi},\quad \eta,\xi\in \End(\rho^n).
\end{equation}
Hence $\{f^*\vert f\in \mathrm{ONB}(\rho^n)\}$ is also a orthonormal basis, and \eqref{eqn:: resolution with respect to ONB} holds with $\eta$ and $\eta^*$ interchanged. 

\begin{definition}\label{def:: diamond composition}
    Let $n,m\geq 0$, for morphisms $f\in \Hom(\gamma^nx,y\gamma^n)$ and $g\in \Hom(\gamma^my,z\gamma^m)$, we define their graded composition $g\diamond f\in \Hom(\gamma^{m+n}x,z\gamma^{m+n})$ to be 
\begin{equation}\label{eqn:: the diamond product}
    \vcenter{\hbox{\begin{tikzpicture}
        \ThinLine[midarrow] (-0.25,0.75) -- (-0.25,0.25);
        \ThinLine[midarrow] (0.25,0.75) -- (0.25,0.25);
        \ThinLine[midarrow] (0.25,-0.25) -- (0.25,-0.75);
        \ThinLine[midarrow] (-0.25,-0.25) -- (-0.25,-0.75);
        \node [draw, minimum width=1cm, minimum height=0.25cm, fill = white] at (0,0) {$g$};
        \node at (-0.25,1) {$\gamma^m$};
        \node at (0.25,1) {$y$};
        \node at (-0.25,-1) {$z$};
        \node at (0.25,-1) {$\gamma^m$};
    \end{tikzpicture}}}\diamond \vcenter{\hbox{\begin{tikzpicture}
        \ThinLine[midarrow] (-0.25,0.75) -- (-0.25,0.25);
        \ThinLine[midarrow] (0.25,0.75) -- (0.25,0.25);
        \ThinLine[midarrow] (0.25,-0.25) -- (0.25,-0.75);
        \ThinLine[midarrow] (-0.25,-0.25) -- (-0.25,-0.75);
        \node [draw, minimum width=1cm, minimum height=0.25cm, fill = white] at (0,0) {$f$};
        \node at (-0.25,1) {$\gamma^n$};
        \node at (0.25,1) {$x$};
        \node at (-0.25,-1) {$y$};
        \node at (0.25,-1) {$\gamma^n$};
    \end{tikzpicture}}} := (g\otimes \mathrm{Id}_{\gamma^n})(\mathrm{Id}_{\gamma^m}\otimes f) = \vcenter{\hbox{\begin{tikzpicture}
        \ThinLine[midarrow] (-0.25,2.25) -- (-0.25,0.25);
        \ThinLine[midarrow] (0.25,-0.25) -- (0.25,-0.75);
        \ThinLine[midarrow] (-0.25,-0.25) -- (-0.25,-0.75);
        \node [draw, minimum width=1cm, minimum height=0.25cm, fill = white] at (0,0) {$g$};
        \node at (-0.25,2.5) {$\gamma^m$};
        \node at (-0.25,-1) {$z$};
        \node at (0.25,-1) {$\gamma^m$};
        \begin{scope}[shift={(0.5,1.5)}]
            \ThinLine[midarrow] (-0.25,0.75) -- (-0.25,0.25);
            \ThinLine[midarrow] (0.25,0.75) -- (0.25,0.25);
            \ThinLine[midarrow] (0.25,-0.25) -- (0.25,-2.25);
            \ThinLine[midarrow] (-0.25,-0.25) -- (-0.25,-1.25);
            \node [draw, minimum width=1cm, minimum height=0.25cm, fill = white] at (0,0) {$f$};
            \node at (-0.25,1) {$\gamma^n$};
            \node at (0.25,1) {$x$};
            \node at (0.25,-2.5) {$\gamma^n$};
        \end{scope}
    \end{tikzpicture}}}. 
\end{equation}
\end{definition}

Note that $\diamond$ is associative, and for $f,g\in \End(\rho^l)$, $\xi\in \Hom(\gamma\rho^l,\rho^l\gamma)$:
\begin{equation}
    \pi_L(f)\pi_R(g) \xi = f\diamond \xi\diamond g. 
\end{equation}
Using this, the inner product \eqref{eqn:: inner product on Hom} can then be expressed as
\begin{equation}\label{eqn:: skein module inner product with diamond}
    \braket{g,f} = \frac{1}{\sqrt{d_a d_b}}\tr_{\mathcal{C}}\left( g^* \mathrm{D}_{\rho^l}\diamond f \diamond \mathrm{D}_{\rho^l} \right),\quad f,g\in \Hom(a \rho^l,\rho^l b). 
\end{equation}

\begin{proposition}\label{prop::sewing two rows}
    Consider two adjacent regions $\Lambda_1$ and $\Lambda_2$ of shape $(l,1)$ with $\Lambda_1$ above $\Lambda_2$, as in Fig. \ref{fig:: horizontal partition}. 
    Define the operator $\chi(\Lambda_2,\Lambda_1)$ on $\mathcal{H}_{\Lambda_2}\otimes \mathcal{H}_{\Lambda_1}$ as 
    \begin{equation}\label{eqn:: def. of chi-operator}
        \chi(\Lambda_2,\Lambda_1)  = \sum_{\eta\in \mathrm{ONB}(\rho^l)} \mathrm{R}_{\Lambda_2} \left(\mathrm{D}_{\rho^l}\eta^* \left( \mathrm{D}_{\rho^l} \right)^{-1} \right)\otimes \mathrm{L}_{\Lambda_1}(\eta). 
    \end{equation}
    Then for any simple objects $a_1,a_2,b_1,b_2$ and morphisms $h_1,h_1'\in \Hom(a_1\rho^l,\rho^lb_1)$ and $h_2,h_2'\in \Hom(a_2\rho^l,\rho^lb_2)$, 
    \begin{align}
        &\braket{\mathrm{eval}^\dagger_{\Lambda_2}(h'_2)\otimes \mathrm{eval}^\dagger_{\Lambda_1}(h'_1)|\chi(\Lambda_2,\Lambda_1)|\mathrm{eval}^\dagger_{\Lambda_2}(h_2)\otimes \mathrm{eval}^\dagger_{\Lambda_1}(h_1)} = \braket{h'_2\diamond h'_1,h_2\diamond h_1}. 
    \end{align}
    Consequently, $\chi(\Lambda_2,\Lambda_1)\geq 0$ and is independent of the choice of the basis. 
    % \begin{equation}
    %     \left( \pi^{\Lambda_2}_R(h)\otimes \mathrm{I}_{\Lambda_1} -  \mathrm{I}_{\Lambda_2}\otimes \pi^{\Lambda_1}_L(h) \right)\sum_{\eta\in \mathrm{ONB}(\rho^l)} \pi^{\Lambda_2}_R(\eta^*)\otimes \pi^{\Lambda_1}_L(h\eta) = 0,\quad h\in \End(\rho^l).
    % \end{equation}
\end{proposition}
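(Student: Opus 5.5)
The matrix element factorizes over the two rows, so I will reduce it to a single identity in $\mathcal{C}$. Since $\mathrm{eval}_{\Lambda}\mathrm{eval}_{\Lambda}^\dagger=\mathrm{Id}$ by Lemma \ref{lemma:: eval in a single row}, the definitions \eqref{eqn:: bimodule structure of Tree} give, for $\Lambda_2$,
\begin{equation*}
\braket{\mathrm{eval}^\dagger_{\Lambda_2}(h'_2)|\mathrm{R}_{\Lambda_2}(g)|\mathrm{eval}^\dagger_{\Lambda_2}(h_2)} = \braket{h'_2,\pi_R(g)h_2} = \braket{h_2', h_2\diamond g},
\end{equation*}
and similarly $\braket{\mathrm{eval}^\dagger_{\Lambda_1}(h'_1)|\mathrm{L}_{\Lambda_1}(\eta)|\mathrm{eval}^\dagger_{\Lambda_1}(h_1)}=\braket{h_1',\eta\diamond h_1}$. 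Therefore the left-hand side equals
\begin{equation*}
\sum_{\eta\in\mathrm{ONB}(\rho^l)}\braket{h_2',h_2\diamond \mathrm{D}_{\rho^l}\eta^*\mathrm{D}_{\rho^l}^{-1}}\,\braket{h_1',\eta\diamond h_1}.
\end{equation*}

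The main step is to recognize this sum as a resolution of the identity in $\End(\rho^l)$, which I will do in three moves.
\begin{enumerate}
\item Write each factor as a trace using \eqref{eqn:: skein module inner product with diamond}. The first factor is $\frac{1}{\sqrt{d_{a_2}d_{b_2}}}\tr_{\mathcal{C}}\big(h_2'^*\mathrm{D}_{\rho^l}\diamond h_2\diamond \mathrm{D}_{\rho^l}\eta^*\big)$, and the $\mathrm{D}_{\rho^l}^{-1}$ cancels the trailing $\mathrm{D}_{\rho^l}$.
\item Use the $\eta$-linear factor to define $F\in\End(\rho^l)$ by partially tracing out $a_2,b_2$. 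By cyclicity and sphericality, the first factor becomes $\tr_{\mathcal{C}}(F\,\eta^*)$ for some $F$ built from $h_2,h_2'$. The second factor is similarly anti-linear in $\eta$ through a partial trace over $a_1,b_1$, giving $\overline{\tr_{\mathcal{C}}(\eta^*\mathrm{D}_{\rho^l}G\mathrm{D}_{\rho^l})}$-type expressions for some $G$ built from $h_1,h_1'$.
\item Sum over $\eta$ with \eqref{eqn:: resolution with respect to ONB} (or its $\eta\leftrightarrow\eta^*$ version). This collapses the sum to a single trace in which the partial traces over the $\rho^l$ strand are glued back together. That glued trace is exactly the diagram $\tr_{\mathcal{C}}\big((h_2'\diamond h_1')^*\mathrm{D}_{\rho^l}\diamond(h_2\diamond h_1)\diamond \mathrm{D}_{\rho^l}\big)$ with prefactor $\prod_i (d_{a_i}d_{b_i})^{-1/2}$, i.e. $\braket{h_2'\diamond h_1',h_2\diamond h_1}$.
\end{enumerate}

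The obstacle is bookkeeping the $\mathrm{D}_{\rho^l}$ insertions so that exactly one $\mathrm{D}$ sits on each side of the middle $\rho^l$ strand, which is where the $\mathrm{D}\eta^*\mathrm{D}^{-1}$ twist in \eqref{eqn:: def. of chi-operator} is used. An alternative I would keep in reserve is the nondegenerate pairing $(x,y)\mapsto\tr_{\mathcal{C}}(x\mathrm{D}y\mathrm{D})$: expand $\sum_\eta \eta\otimes \mathrm{D}\eta^*\mathrm{D}^{-1}$ as the canonical copairing of $\End(\rho^l)$, then apply the standard fact that in a semisimple category inserting the copairing of $\End(\rho^l)$ reconstitutes the identity on the strand $\rho^l$.

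Both consequences follow from the identity. Since $\mathrm{eval}^\dagger$ is surjective onto the string-net subspace of each row, and $\chi$ vanishes off the tensor product of string-net subspaces (as $\mathrm{L}$ and $\mathrm{R}$ act by $0$ there), the matrix elements computed above determine $\chi$ completely. They form the Gram form of the linear map $h_2\otimes h_1\mapsto h_2\diamond h_1$, so $\chi=\Phi^\dagger\Phi$ with $\Phi=(h_2\otimes h_1\mapsto h_2\diamond h_1)\circ(\mathrm{eval}_{\Lambda_2}\otimes\mathrm{eval}_{\Lambda_1})$. Hence $\chi\geq0$, and $\chi$ is basis-independent because the right-hand side of the identity does not involve the basis.
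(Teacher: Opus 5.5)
Your proposal is correct and follows the paper's proof: factor the matrix element row by row, rewrite the $\eta$-dependence through a $\tr_{\mathcal{C}}$-preserving partial trace, collapse the sum with the ONB resolution, and regroup. The only difference is that the paper partial-traces just the $\Lambda_1$ factor over $b_1$ and keeps the $\Lambda_2$ factor as $\braket{h_2',h_2\diamond\,\cdot\,}$, while you trace out both sides symmetrically. The $\mathrm{D}$-bookkeeping you flagged does close up: the sum equals $\frac{1}{\sqrt{d_{a_1}d_{a_2}d_{b_1}d_{b_2}}}\tr_{\mathcal{C}}\bigl(\mathbb{E}'[(h_2')^*(\mathrm{D}_{\rho^l}\otimes\mathrm{Id}_{b_2})h_2]\,\mathbb{E}[h_1(\mathrm{Id}_{a_1}\otimes\mathrm{D}_{\rho^l})(h_1')^*]\bigr)$, with $\mathbb{E}'$ and $\mathbb{E}$ the partial traces over $a_2$ and $b_1$, which is exactly the right-hand side. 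Your factorization $\chi=\Phi^\dagger\Phi$ is a cleaner version of the paper's brief positivity and basis-independence remark. (Minor slip: the $\Lambda_1$ factor is linear, not anti-linear, in $\eta$, though the expression you wrote for it is the right one.)
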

\begin{proof}
    By Lemma \ref{lemma:: eval in a single row} and \eqref{eqn:: bimodule structure of Tree}, we have 
    \begin{equation}
        \begin{aligned}
            &\braket{\mathrm{eval}^\dagger_{\Lambda_2}(h'_2)\otimes \mathrm{eval}^\dagger_{\Lambda_1}(h'_1)|\chi(\Lambda_2,\Lambda_1)|\mathrm{eval}^\dagger_{\Lambda_2}(h_2)\otimes \mathrm{eval}^\dagger_{\Lambda_1}(h_1)}\\
        &= \sum_{\eta \in \mathrm{ONB}(\rho^l)} \braket{\mathrm{eval}^\dagger_{\Lambda_2}(h'_2)|\mathrm{eval}^\dagger_{\Lambda_2}(\pi_R(\mathrm{D}_{\rho^l} \eta^* (\mathrm{D}_{\rho^l})^{-1} )h_2 )} \braket{\mathrm{eval}^\dagger_{\Lambda_1}(h'_1)|\mathrm{eval}^\dagger_{\Lambda_1}(\eta h_1)}\\
        &= \sum_{\eta \in \mathrm{ONB}(\rho^l)} \braket{h'_2, \pi_R(\mathrm{D}_{\rho^l} \eta^* (\mathrm{D}_{\rho^l})^{-1} )h_2} \braket{h'_1, \pi_L(\eta) h_1}. 
        \end{aligned}
    \end{equation}
    Denote by $\mathbb{E}: \End(\rho^l b_1)\rightarrow \End(\rho^l)$ the $\tr_{\mathcal{C}}$-preserving partial trace (adding a right cap). 
    Then 
    \begin{equation}
        \begin{aligned}
            \braket{h'_1, \pi_L(\eta) h_1} &= \frac{1}{\sqrt{d_{a_1} d_{b_1}}}\, \tr_{\mathcal{C}}\left( (h'_1)^*\left( \mathrm{D}_{\rho^l}\eta \otimes \mathrm{Id}_{b_1} \right) h_1 \left( \mathrm{Id}_{a_1}\otimes \mathrm{D}_{\rho^l} \right) \right)\\
            &= \frac{1}{\sqrt{d_{a_1} d_{b_1}}}\, \tr_{\mathcal{C}} \left(  \eta \mathbb{E}\left[ h_1 \left( \mathrm{Id}_{a_1}\otimes \mathrm{D}_{\rho^l} \right) (h'_1)^* \right] \mathrm{D}_{\rho^l}\right)\\
            &= \frac{1}{\sqrt{d_{a_1} d_{b_1}}} \braket{\eta^*, (\mathrm{D}_{\rho^l})^{-1}\mathbb{E}\left[ h_1 \left( \mathrm{Id}_{a_1}\otimes \mathrm{D}_{\rho^l} \right) (h'_1)^* \right]}.
        \end{aligned}
    \end{equation}
    Plugging in, we obtain
    \begin{equation}
        \begin{aligned}
            &\sum_{\eta \in \mathrm{ONB}(\rho^l)} \braket{h'_2, \pi_R(\mathrm{D}_{\rho^l} \eta^* (\mathrm{D}_{\rho^l})^{-1} )h_2} \braket{h'_1, \pi_L(\eta) h_1}\\
            &= \frac{1}{\sqrt{d_{a_1} d_{b_1}}} \braket{h'_2, h_2 \diamond \mathbb{E}\left[ h_1 \left( \mathrm{Id}_{a_1}\otimes \mathrm{D}_{\rho^l} \right) (h'_1)^* \right](\mathrm{D}_{\rho^l})^{-1}}\\
            &= \frac{1}{\sqrt{d_{a_1} d_{b_1}}}\frac{1}{\sqrt{d_{a_2} d_{b_2}}}\, \tr_{\mathcal{C}}\left( (h'_2)^* \left( \mathrm{D}_{\rho^l}\otimes \mathrm{Id}_{b_2} \right) h_2\diamond \mathbb{E}\left[ h_1 \left( \mathrm{Id}_{a_1}\otimes \mathrm{D}_{\rho^l} \right) (h'_1)^* \right] \right)\\
            &= \braket{h'_2\diamond h'_1,h_2\diamond h_1},
        \end{aligned}
    \end{equation}
    where in the last equality we used again that $\mathbb{E}$ preserves the trace. 
    Since $\chi(\Lambda_2,\Lambda_1)$ is zero on the complement of $\mathcal{H}^{\mathrm{st.n.}}_{\Lambda_2}\otimes \mathcal{H}^{\mathrm{st.n.}}_{\Lambda_1}$, $\chi(\Lambda_2,\Lambda_1)$ is a positive operator that is independent of the choice of basis. 
\end{proof}

\begin{lemma}\label{lemma:: key central element}
    For $n\geq 0$, the operator 
    \begin{equation}
        \Delta_n = \sum _{\eta\in \mathrm{ONB}(\rho^n)} \eta^* \mathrm{D}^{-1}_{\rho^n}\eta \mathrm{D}_{\rho^n} 
    \end{equation}
    is independent of the choice of $\mathrm{ONB}(\rho^n)$, strictly positive, and central in $\End(\rho^n)$. 
\end{lemma}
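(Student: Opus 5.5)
The plan is to treat the three assertions in the order: basis independence, centrality, positivity. Centrality is what makes the positivity argument work. Throughout I write $\mathrm{D}=\mathrm{D}_{\rho^n}$. The tool is the resolution \eqref{eqn:: resolution with respect to ONB} together with its variant obtained by interchanging $\eta$ and $\eta^*$:
\begin{equation*}
f=\sum_{\eta}\eta\,\tr_{\mathcal{C}}(\eta^*\mathrm{D} f\mathrm{D}),\qquad f=\sum_{\xi}\xi^*\,\tr_{\mathcal{C}}(\xi\mathrm{D} f\mathrm{D}).
\end{equation*}

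\emph{Basis independence.} Let $\eta'_j=\sum_i U_{ij}\eta_i$ be another orthonormal basis, with $U$ unitary. Since $*$ is antilinear, $(\eta'_j)^*=\sum_i\overline{U_{ij}}\eta_i^*$. Hence
\begin{equation*}
\sum_j(\eta'_j)^*\otimes\eta'_j=\sum_{i,k}\Big(\sum_j\overline{U_{ij}}U_{kj}\Big)\eta_i^*\otimes\eta_k=\sum_i\eta_i^*\otimes\eta_i .
\end{equation*}
Applying the bilinear map $x\otimes y\mapsto x\mathrm{D}^{-1}y\mathrm{D}$ shows that $\Delta_n$ does not depend on the basis.

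\emph{Centrality.} Fix $x\in\End(\rho^n)$.
\begin{enumerate}
\item Expand $x\eta^*$ using the second form of the resolution: $x\eta^*=\sum_\xi\xi^*\,\tr_{\mathcal{C}}(\xi\mathrm{D} x\eta^*\mathrm{D})$.
\item Rewrite the coefficient by cyclicity of the trace:
\begin{equation*}
\tr_{\mathcal{C}}(\xi\mathrm{D} x\eta^*\mathrm{D})=\tr_{\mathcal{C}}\big(\eta^*\mathrm{D}(\xi\mathrm{D} x\mathrm{D}^{-1})\mathrm{D}\big).
\end{equation*}
\item Sum over $\eta$ using the first form of the resolution. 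This gives $\sum_\eta\tr_{\mathcal{C}}(\cdots)\,\eta=\xi\mathrm{D} x\mathrm{D}^{-1}$.
\item Substitute into $x\Delta_n$:
\begin{equation*}
x\Delta_n=\sum_{\xi}\xi^*\mathrm{D}^{-1}(\xi\mathrm{D} x\mathrm{D}^{-1})\mathrm{D}=\sum_\xi\xi^*\mathrm{D}^{-1}\xi\mathrm{D}\,x=\Delta_n x .
\end{equation*}
\end{enumerate}
The only care needed is the order of the trace factors, since $\mathrm{D}$ is not central in $\End(\rho^n)$.

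\emph{Strict positivity.} Set $K=\sum_\eta\eta^*\mathrm{D}^{-1}\eta\geq 0$. Then $\Delta_n=K\mathrm{D}$, so $K=\Delta_n\mathrm{D}^{-1}$.
\begin{enumerate}
\item By centrality, $\Delta_n$ commutes with $\mathrm{D}$, hence $K$ commutes with $\mathrm{D}^{\pm1/2}$.
\item Therefore $\Delta_n=K\mathrm{D}=\mathrm{D}^{1/2}K\mathrm{D}^{1/2}\geq 0$.
\item For strictness, suppose $\Delta_n v=0$. Then $0=\langle v,\Delta_n v\rangle=\sum_\eta\|\mathrm{D}^{-1/2}\eta\mathrm{D}^{1/2}v\|^2$, so $\eta\mathrm{D}^{1/2}v=0$ for every basis element $\eta$.
\item The basis spans $\End(\rho^n)$, which contains $\mathrm{Id}$, so $\mathrm{D}^{1/2}v=0$. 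Since $\mathrm{D}$ is invertible, $v=0$.
\end{enumerate}

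I expect the main obstacle to be the centrality computation, because every step there must respect the non-central weight $\mathrm{D}$. Once centrality holds, positivity is immediate.
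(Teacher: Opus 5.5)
Your proof is correct and follows the paper's argument. The paper also proves centrality by expanding $x\eta^*$ in the basis $\{\xi^*\}$, using cyclicity of $\tr_{\mathcal{C}}$ and resumming over $\eta$, and it also gets positivity and strictness from the sum-of-squares form $\Delta_n=\sum_\eta(\mathrm{D}^{-1/2}\eta\mathrm{D}^{1/2})^*(\mathrm{D}^{-1/2}\eta\mathrm{D}^{1/2})$, which holds by centrality. The differences are cosmetic: the paper runs the centrality computation with two different bases so that taking $f=\mathrm{Id}$ gives basis independence at once (where you use a unitary change of basis), and it argues strictness with the kernel projection rather than vectors in a faithful representation.
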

\begin{proof}
    Let $\mathcal{B}$ and $\mathcal{B}'$ be two orthonormal bases of $\End(\rho^n)$, and write $\Delta_n^{\mathcal{B}}$ and $\Delta_n^{\mathcal{B}'}$ for the corresponding elements.
    For any $f\in \End(\rho^n)$, by \eqref{eqn:: resolution with respect to ONB}, 
    \begin{equation}
        \begin{aligned}
            f\Delta_n^{\mathcal{B}}
            = \sum_{\substack{\eta\in\mathcal{B}\\ \xi\in\mathcal{B}'}} \xi^*\tr_{\mathcal{C}}\left(\xi \mathrm{D}_{\rho^n} f\eta^* \mathrm{D}_{\rho^n}\right) \mathrm{D}^{-1}_{\rho^n} \eta \mathrm{D}_{\rho^n}= \sum_{\xi\in\mathcal{B}'} \xi^*\mathrm{D}^{-1}_{\rho^n} \left( \xi \mathrm{D}_{\rho^n} f \mathrm{D}^{-1}_{\rho^n} \right) \mathrm{D}_{\rho^n} = \Delta_n^{\mathcal{B}'}f.
        \end{aligned}
    \end{equation}
    Taking $f=\mathrm{Id}_{\rho^n}$ shows that $\Delta_n^{\mathcal{B}}=\Delta_n^{\mathcal{B}'}$, so $\Delta_n$ is independent of the choice of orthonormal basis.
    The same identity for arbitrary $f$ then gives $f\Delta_n=\Delta_n f$, proving that $\Delta_n$ is central.
    In particular, $\Delta_n$ commutes with $\mathrm{D}_{\rho^n}^{1/2}$, and hence
    \begin{equation}
        \begin{aligned}
            \Delta_n
            &= \mathrm{D}_{\rho^n}^{1/2}\Delta_n\mathrm{D}_{\rho^n}^{-1/2}= \sum_{\eta\in\mathrm{ONB}(\rho^n)}
            \left(\mathrm{D}_{\rho^n}^{-1/2}\eta\mathrm{D}_{\rho^n}^{1/2}\right)^*
            \left(\mathrm{D}_{\rho^n}^{-1/2}\eta\mathrm{D}_{\rho^n}^{1/2}\right),
        \end{aligned}
    \end{equation}
    so $\Delta_n$ is positive.
    Let $p$ be the kernel projection of $\Delta_n$.
    Since $\Delta_n$ is central, so is $p$, and the preceding sum of squares implies $\mathrm{D}_{\rho^n}^{-1/2}\eta\mathrm{D}_{\rho^n}^{1/2}p=0$ for any $\eta\in\mathrm{ONB}(\rho^n)$. 
    The elements $\mathrm{D}_{\rho^n}^{-1/2}\eta\mathrm{D}_{\rho^n}^{1/2}$ span $\End(\rho^n)$ because conjugation by $\mathrm{D}_{\rho^n}^{1/2}$ is an invertible linear map.
    Therefore $p=0$,and $\Delta_n$ is strictly positive.
\end{proof}

\begin{lemma}\label{lemma:: Key identity for local ground states}
    Let $\Lambda_1$ and $\Lambda_2$ be as in Proposition \ref{prop::sewing two rows}. 
    Then for any $\ket{\psi}$ in the $\mathcal{H}^{\mathrm{st.n.}}_{\Lambda_2}\otimes \mathcal{H}^{\mathrm{st.n.}}_{\Lambda_1}$, we have
    \begin{equation}\label{eqn:: pulling through}
         \mathrm{L}_{\Lambda_1}(h)\ket{\psi} = \mathrm{R}_{\Lambda_2}(\mathrm{D}^2_{\rho^l} h \mathrm{D}^{-2}_{\rho^l})\ket{\psi},\quad h\in \End(\rho^l) 
    \end{equation}
    if and only if $\ket{\psi}$ is in the range of $\chi(\Lambda_2,\Lambda_1)$. 
\end{lemma}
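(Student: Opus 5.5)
The plan is to transport everything to morphism spaces through $\mathrm{eval}$ and reduce the statement to a fact about balanced tensor products over $\End(\rho^l)$. By Lemma \ref{lemma:: eval in a single row}, $\mathrm{eval}_{\Lambda_2}\otimes\mathrm{eval}_{\Lambda_1}$ restricts to a unitary from $\mathcal{H}^{\mathrm{st.n.}}_{\Lambda_2}\otimes \mathcal{H}^{\mathrm{st.n.}}_{\Lambda_1}$ onto $V:=\Hom(\gamma\rho^l,\rho^l\gamma)\otimes \Hom(\gamma\rho^l,\rho^l\gamma)$, equipped with the product inner product. Under this unitary, $\mathrm{L}_{\Lambda_1}(h)$ becomes $\mathrm{I}\otimes\pi_L(h)$ and $\mathrm{R}_{\Lambda_2}(g)$ becomes $\pi_R(g)\otimes \mathrm{I}$. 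Define $M:V\to \Hom(\gamma^2\rho^l,\rho^l\gamma^2)$ by $M(h_2\otimes h_1)=h_2\diamond h_1$. Proposition \ref{prop::sewing two rows} then says exactly that $\chi(\Lambda_2,\Lambda_1)$ corresponds to $M^\dagger M$, so its range is $\range(M^\dagger)=(\ker M)^{\perp}$.

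Next I set $X_h:=\mathrm{L}_{\Lambda_1}(h)-\mathrm{R}_{\Lambda_2}(\mathrm{D}^2_{\rho^l}h\mathrm{D}^{-2}_{\rho^l})$. The condition \eqref{eqn:: pulling through} for all $h$ says $\ket{\psi}\in\bigcap_h\ker X_h=\big(\mathrm{span}_h\,\range X_h^\dagger\big)^{\perp}$. So the lemma is equivalent to the identity $\ker M=\mathrm{span}_h\,\range(X_h^\dagger)$ inside $V$. To compute $X_h^\dagger$, I use the adjoint formulas for $\pi_L,\pi_R$ stated after their definition. Writing $g=\mathrm{D}^{-1}_{\rho^l}h^*\mathrm{D}_{\rho^l}$, one gets $\mathrm{L}_{\Lambda_1}(h)^\dagger=\mathrm{L}_{\Lambda_1}(g)$. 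Similarly, $\mathrm{R}_{\Lambda_2}(\mathrm{D}^2h\mathrm{D}^{-2})^\dagger=\mathrm{R}_{\Lambda_2}(\mathrm{D}\,\mathrm{D}^{-2}h^*\mathrm{D}^{2}\mathrm{D}^{-1})=\mathrm{R}_{\Lambda_2}(g)$. Hence $X_h^\dagger=\mathrm{L}_{\Lambda_1}(g)-\mathrm{R}_{\Lambda_2}(g)$, and since $h\mapsto g$ is a bijection of $\End(\rho^l)$, $\mathrm{span}_h\range X_h^\dagger$ is the span of the vectors $h_2\otimes (g\diamond h_1)-(h_2\diamond g)\otimes h_1$ with $g\in\End(\rho^l)$.

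The inclusion $\supseteq$ is immediate from associativity of $\diamond$: $h_2\diamond(g\diamond h_1)=(h_2\diamond g)\diamond h_1$. The reverse inclusion $\ker M\subseteq \mathrm{span}\{\dots\}$ says that the induced map $\Hom(\gamma\rho^l,\rho^l\gamma)\otimes_{\End(\rho^l)}\Hom(\gamma\rho^l,\rho^l\gamma)\to\Hom(\gamma^2\rho^l,\rho^l\gamma^2)$ is injective, and I expect this to be the main obstacle.

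For that step, I would use semisimplicity. Decompose $\rho^l=\bigoplus_x x\otimes \mathbb{C}^{m_x}$, so that $\End(\rho^l)\cong\bigoplus_x M_{m_x}(\mathbb{C})$. By Frobenius reciprocity, rewrite $h_1$ as an element of $\Hom(\rho^l,\bar\gamma\rho^l\gamma)$, a left $\End(\rho^l)$-module through the source $\rho^l$, and $h_2$ as an element of $\Hom(\bar\gamma\rho^l\gamma,\ \bar\gamma\bar\gamma\rho^l\gamma\gamma)$ precomposed appropriately. In this form $\diamond$ becomes ordinary composition through the middle object $\rho^l$, up to an invertible isomorphism. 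For composition, the standard fact applies: $\Hom(\rho^l,B)\otimes_{\End(\rho^l)}\Hom(A,\rho^l)\cong\bigoplus_x\Hom(x,B)\otimes\Hom(A,x)$. Here the matrix-algebra factors $M_{m_x}$ are absorbed by Morita equivalence, and this sum embeds into $\Hom(A,B)$ by semisimplicity, since the sum over $x$ of composition through $x$ is an isomorphism onto the morphisms factoring through $\rho^l$. Injectivity of the balanced map follows, which gives $\ker M=\mathrm{span}_h\range X_h^\dagger$ and completes the proof. The delicate point is matching the left and right $\End(\rho^l)$-actions through the Frobenius rotations so that the balanced relation is exactly $g\diamond h_1\sim h_2\diamond g$. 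If the direct argument gets messy, a fallback is to show directly that $M^\dagger M$ restricted to the balanced subspace equals a multiple of the central positive element $\Delta_l$ from Lemma \ref{lemma:: key central element}, and use its invertibility.
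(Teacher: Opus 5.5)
Your argument is correct, and it takes a genuinely different route from the paper.

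\textbf{The paper's route.} The paper never characterizes the constraint space abstractly. For the ``if'' direction it expands $h\eta$ in the orthonormal basis to get the intertwining relation $\mathrm{L}_{\Lambda_1}(h)\chi(\Lambda_2,\Lambda_1)=\mathrm{R}_{\Lambda_2}(\mathrm{D}^2_{\rho^l}h\mathrm{D}^{-2}_{\rho^l})\chi(\Lambda_2,\Lambda_1)$. For ``only if'' it substitutes the pulling-through relation into the definition of $\chi$ to get $\chi(\Lambda_2,\Lambda_1)\psi=\mathrm{R}_{\Lambda_2}(\Delta_l)\psi$. Strict positivity and centrality of $\Delta_l$ (Lemma \ref{lemma:: key central element}) then give $\psi=\chi(\Lambda_2,\Lambda_1)\mathrm{R}_{\Lambda_2}(\Delta_l^{-1})\psi$; this is essentially your fallback. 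That route is short and stays inside $\End(\rho^l)$. It also directly yields $\chi^2=\mathrm{R}_{\Lambda_2}(\Delta_l)\chi$, which is what produces the explicit projection $\Pi(\Lambda_2\cup\Lambda_1)$ in Proposition \ref{prop:: local ground states in a row}.

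\textbf{What your route buys.} Using $\chi=M^\dagger M$ from Proposition \ref{prop::sewing two rows}, you show conceptually that the two-row ground states are $(\ker M)^\perp$. That is, they form a copy of the balanced tensor product $\Hom(\gamma\rho^l,\rho^l\gamma)\otimes_{\End(\rho^l)}\Hom(\gamma\rho^l,\rho^l\gamma)$. Your argument replaces $\Delta_l$ by semisimplicity of $\mathcal{C}$ plus Frobenius reciprocity. Conversely, read through your reduction, the paper's computation is an independent proof of the injectivity you need.

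\textbf{A bookkeeping fix in the injectivity step.} The left action $g\diamond h_1=(g\otimes\mathrm{Id}_\gamma)h_1$ acts on the \emph{target} $\rho^l$. So you should rotate the output $\gamma$ of $h_1$ into its source, giving $\tilde h_1\in\Hom(\gamma\rho^l\overline{\gamma},\rho^l)$. Likewise rotate the input $\gamma$ of $h_2$ into its target, giving $\tilde h_2\in\Hom(\rho^l,\overline{\gamma}\rho^l\gamma)$, where $h_2\diamond g$ becomes $\tilde h_2\circ g$. In your version, $h_1\in\Hom(\rho^l,\overline{\gamma}\rho^l\gamma)$, and its source $\rho^l$ carries the right action, not the left.

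With this fix, the interchange law shows that the two-sided rotation $\Hom(\gamma^2\rho^l,\rho^l\gamma^2)\cong\Hom(\gamma\rho^l\overline{\gamma},\overline{\gamma}\rho^l\gamma)$ sends $h_2\diamond h_1$ to $\tilde h_2\circ\tilde h_1$. Injectivity of $\Hom(\rho^l,B)\otimes_{\End(\rho^l)}\Hom(A,\rho^l)\to\Hom(A,B)$ then finishes the argument.
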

\begin{proof}
    For any $h\in \End(\rho^l)$, we have 
    \begin{equation}
        \begin{aligned}
           \mathrm{L}_{\Lambda_1}(h) \chi(\Lambda_2,\Lambda_1) &= \sum_{\eta\in \mathrm{ONB}(\rho^l)} \mathrm{R}_{\Lambda_2} \left(\mathrm{D}_{\rho^l}\eta^*  \mathrm{D}^{-1}_{\rho^l} \right)\otimes \mathrm{L}_{\Lambda_1}(h\eta) \\
        &= \sum_{\eta,\xi\in \mathrm{ONB}(\rho^l)} \tr_{\mathcal{C}}(\xi^* \mathrm{D}_{\rho^l}h\eta \mathrm{D}_{\rho^l}) \mathrm{R}_{\Lambda_2} \left(\mathrm{D}_{\rho^l}\eta^*  \mathrm{D}^{-1}_{\rho^l} \right)\otimes \mathrm{L}_{\Lambda_1} \left(\xi \right)\\
        &= \sum_{\eta,\xi\in \mathrm{ONB}(\rho^l)} \braket{\eta^*,\xi^*\mathrm{D}_{\rho^l} h \mathrm{D}^{-1}_{\rho^l}} \mathrm{R}_{\Lambda_2} \left(\mathrm{D}_{\rho^l}\eta^*  \mathrm{D}^{-1}_{\rho^l}  \right)\otimes \mathrm{L}_{\Lambda_1} \left(\xi \right)\\
        &= \sum_{\xi\in \mathrm{ONB}(\rho^l)} \mathrm{R}_{\Lambda_2} \left(\mathrm{D}_{\rho^l}\left( \xi^*\mathrm{D}_{\rho^l} h \mathrm{D}^{-1}_{\rho^l} \right) \mathrm{D}^{-1}_{\rho^l} \right)\otimes \mathrm{L}_{\Lambda_1} \left(\xi \right)\\
        &=  \mathrm{R}_{\Lambda_2}(\mathrm{D}^2_{\rho^l} h \mathrm{D}^{-2}_{\rho^l}) \chi(\Lambda_2,\Lambda_1). 
        \end{aligned}
    \end{equation}
    So every state in the range of $\chi(\Lambda_2,\Lambda_1)$ satisfies the desired equality. 
    Conversely, for any $\ket{\psi}$ in the string-net subspace of $\mathcal{H}^{\mathrm{st.n.}}_{\Lambda_2}\otimes \mathcal{H}^{\mathrm{st.n.}}_{\Lambda_1}$ that satisfies \eqref{eqn:: pulling through}, we have
    \begin{equation}
        \begin{aligned}
        \chi(\Lambda_2,\Lambda_1)\ket{\psi} &= \sum_{\eta\in \mathrm{ONB}(\rho^l)} \mathrm{R}_{\Lambda_2} \left(\mathrm{D}_{\rho^l}\eta^*  \mathrm{D}^{-1}_{\rho^l} \right) \mathrm{R}_{\Lambda_2}(\mathrm{D}^2_{\rho^l} \eta \mathrm{D}^{-2}_{\rho^l})\otimes \mathrm{I}_{\Lambda_1}\ket{\psi} \\
        &= \sum_{\eta\in \mathrm{ONB}(\rho^l)}  \mathrm{R}_{\Lambda_2}\left(\mathrm{D}^2_{\rho^l} \eta \mathrm{D}^{-1}_{\rho^l}\eta^*  \mathrm{D}^{-1}_{\rho^l} \right)\ket{\psi}\\
        &= \mathrm{R}_{\Lambda_2}(\Delta_l)\ket{\psi}. 
        \end{aligned}
    \end{equation}
    In the last equality, we used that $\{\eta^*\}_{\eta\in \mathrm{ONB}(\rho^l)}$ is a basis, and that $\mathrm{D}^2_{\rho^l} \Delta_l\mathrm{D}^{-2}_{\rho^l} = \Delta_l$. 
    By Lemma \ref{lemma:: key central element}, $\Delta_l$ is a strictly positive operator in the center of $\End(\rho^l)$, $\mathrm{R}_{\Lambda_2}(\Delta_l)\otimes \mathrm{I}_{\Lambda_1}$ commutes with $\chi(\Lambda_2,\Lambda_1)$. 
    Thus $\ket{\psi}$ is in the range of $\chi(\Lambda_2,\Lambda_1)$, and the claim follows.  
\end{proof}

Given a Hamiltonian decomposed as a sum of operators $H = \sum_j h_j$, we call $H$ \emph{frustration-free}, if there exists a (and therefore every) ground state $\ket{\psi}$ of $H$ that is also a ground state of $h_j$ for every $j$. 
We say an interaction is frustration free, if all local Hamiltonians defined by it are. 

\begin{proposition}\label{prop:: local ground states in a row}
    Let $\Lambda_1$, $\Lambda_2$ be two adjacent parallelograms of shape $(l,1)$, $l\geq k$, with $\Lambda_1$ on top of $\Lambda_2$. 
    Then the local Hamiltonian $H_{\Lambda_2\cup \Lambda_1}$ is frustration-free, and its ground state projection is given by:
    \begin{equation}
        \Pi(\Lambda_2\cup \Lambda_1) = \mathrm{R}_{\Lambda_2}(\Delta^{-1}_l)\chi(\Lambda_2,\Lambda_1). 
    \end{equation}
    Consequently, the map 
    \begin{equation}\label{eqn:: isometry from local ground states to morphism space in a row}
        \mathcal{U}: \chi(\Lambda_2,\Lambda_1)^{1/2}\ket{\xi_2\otimes \xi_1}\mapsto \mathrm{eval}_{\Lambda_2}(\xi_2)\diamond \mathrm{eval}_{\Lambda_1}(\xi_1),
    \end{equation}
     where $\xi_i\in \mathcal{H}_{\Lambda_i}$ for $i=1,2$, extends to an isometry from the ground state subspace of $H_{\Lambda_2\cup \Lambda_1}$ to $\Hom(\gamma^2\rho^l,\rho^l\gamma^2)$. 
\end{proposition}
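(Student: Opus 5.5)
The plan is to identify the common kernel of all interaction terms in $\Lambda=\Lambda_2\cup\Lambda_1$ with the range of $\chi(\Lambda_2,\Lambda_1)$, and then to show that this range is nonzero. Nonzero-ness gives frustration-freeness, since every term $\varPhi(D)$ is positive and so a common zero vector is a ground state of each term. The region $\Lambda$ contains the edge projections of both rows, the edges joining the two rows, and the operators $F_S$ for the $(k-1)$-plaquette windows $S$ between the rows.

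First, I will treat the edge terms. The edges inside each row force $\ket{\psi}\in\mathcal{H}^{\mathrm{st.n.}}_{\Lambda_2}\otimes\mathcal{H}^{\mathrm{st.n.}}_{\Lambda_1}$. The $Q_e$ for the colored vertical edges between the rows are not present, since colored edges carry no interaction. The matching of labels on the shared edges should therefore be automatic, or else absent. I will double-check that the uncolored edges of $\Lambda$ are exactly those inside the two rows.

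Second, I will treat the plaquette terms. The vector $\ket{\psi}$ is annihilated by $F_S=\sum_f|A_f|^2$ iff $A_f\ket{\psi}=0$ for every $f\in\mathcal{G}$. Here $A_f$ is the operator in \eqref{eqn:: F_S}, restricted to the subrows of $\Lambda_2$ and $\Lambda_1$ bounding $S$. I would use Lemma \ref{lemma:: propagate actions to bigger regions} to express these restricted operators through $\mathrm{eval}_{\Lambda_i}$, as $\pi_R(\mathrm{Id}^i\otimes \mathrm{D}f\mathrm{D}^{-1}\otimes\mathrm{Id})$ and $\pi_L(\mathrm{Id}^i\otimes\mathrm{D}^{-1}f\mathrm{D}\otimes\mathrm{Id})$. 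The factors $\mathrm{D}_{\rho^l}=\mathrm{D}_\rho^{\otimes l}$ factor across tensor positions, so conjugation by $\mathrm{D}_{\rho^l}$ commutes with padding by identities. Setting $h=\mathrm{D}^{-1}\tilde f\mathrm{D}$ with $\tilde f=\mathrm{Id}^i\otimes f\otimes\mathrm{Id}$, the condition $A_f\ket\psi=0$ becomes exactly \eqref{eqn:: pulling through} for this $h$.

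The set of $h$ satisfying \eqref{eqn:: pulling through} is closed under products, since $\mathrm{L}$ is multiplicative, $\mathrm{R}$ is anti-multiplicative, and $\mathrm{L}_{\Lambda_1}$ commutes with $\mathrm{R}_{\Lambda_2}$ because they act on different tensor factors. It is also closed under linear combinations and contains the identity. This set is therefore a unital subalgebra. The $\tilde f$ generate $\End(\rho^l)$ because $l\ge k$, $\mathcal{G}$ is closed under $*$, and condition 3 of Definition \ref{def:: generating set} holds. Conjugation by $\mathrm{D}$ is an algebra automorphism, so the $h$'s also generate $\End(\rho^l)$. Hence \eqref{eqn:: pulling through} holds for all $h$.

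I expect the main obstacle here to be the order reversal. A product $h_1h_2$ yields $\mathrm{L}(h_1)\mathrm{L}(h_2)\psi=\mathrm{L}(h_1)\mathrm{R}(\cdot h_2\cdot)\psi=\mathrm{R}(\cdot h_2\cdot)\mathrm{L}(h_1)\psi=\mathrm{R}(\cdot h_2\cdot)\mathrm{R}(\cdot h_1\cdot)\psi$. By anti-multiplicativity this equals $\mathrm{R}(\cdot h_1h_2\cdot)\psi$, so the closure argument works. Lemma \ref{lemma:: Key identity for local ground states} then shows that the ground space equals $\mathrm{Range}\,\chi(\Lambda_2,\Lambda_1)$. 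Conversely, the same computation shows that every vector in this range is killed by all terms. The range is nonzero because, by Proposition \ref{prop::sewing two rows}, $\chi\neq0$ whenever some $h_2\diamond h_1\ne0$, for instance $h_i$ built from identities on a label $\mathbb 1$ with $n_c>0$.

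Third, I will compute the projection. From the proof of Lemma \ref{lemma:: Key identity for local ground states}, $\chi\ket\psi=\mathrm{R}_{\Lambda_2}(\Delta_l)\ket\psi$ on the range. The operator $\mathrm{R}_{\Lambda_2}(\Delta_l)$ is self-adjoint, because $\Delta_l$ is central and so commutes with $\mathrm{D}$. It commutes with $\chi$ and is positive invertible on the string-net subspace. It follows that $\chi^2=\mathrm{R}(\Delta_l)\chi$, so $\mathrm{R}(\Delta_l^{-1})\chi$ is a self-adjoint idempotent with range $\mathrm{Range}\,\chi$, which is $\Pi$.

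Finally, the isometry statement is read off from Proposition \ref{prop::sewing two rows}. That proposition gives $\langle\chi^{1/2}\xi',\chi^{1/2}\xi\rangle=\langle\mathrm{eval}(\xi_2')\diamond\mathrm{eval}(\xi_1'),\mathrm{eval}(\xi_2)\diamond\mathrm{eval}(\xi_1)\rangle$, after using $\mathrm{eval}\,\mathrm{eval}^\dagger=\mathrm{id}$ and $\chi=\chi\,(\Pi_{\mathrm{st.n.}}\otimes\Pi_{\mathrm{st.n.}})$. Hence $\mathcal{U}$ is well defined and isometric on $\mathrm{Range}\,\chi^{1/2}=\mathrm{Range}\,\chi$, which is the ground space.
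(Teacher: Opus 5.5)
Your proposal is correct and follows essentially the same route as the paper. Like the paper, you reduce to the product string-net subspace, turn each $F_S$ into the pulling-through identity via Lemma~\ref{lemma:: propagate actions to bigger regions}, close under products to reach all of $\End(\rho^l)$, use Lemma~\ref{lemma:: Key identity for local ground states} to get $\chi^2=\mathrm{R}_{\Lambda_2}(\Delta_l)\chi$, and read off the isometry from Proposition~\ref{prop::sewing two rows}. You also spell out two points the paper's proof leaves implicit: that the range of $\chi(\Lambda_2,\Lambda_1)$ is nonzero (taking the $h_i$ to be $\mathrm{Id}_{\rho^l}$ in Proposition~\ref{prop::sewing two rows}, which the paper only supplies in Theorem~\ref{thm:: local ground states in a parallelogram}), and that $*$-closure of $\mathcal{G}$ is needed to pass from $C^*$-generation to algebraic generation.
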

\begin{proof}
    Since every term in 
    \begin{equation}
        H_{\Lambda_2\cup\Lambda_1} = \sum_{e\subset \Lambda_1}\mathrm{I} - Q_e + \sum_{e\subset \Lambda_2}\mathrm{I} - Q_e + \sum_{S\in \mathcal{P}_{k-1}(\Lambda_2\cup \Lambda_1)} F_S
    \end{equation}
    is positive, its ground state space is the intersection of the kernels of its edge and plaquette-region terms.
    The edge terms first restrict a ground state $\ket{\psi}$ to $\mathcal{H}^{\mathrm{st.n.}}_{\Lambda_2}\otimes\mathcal{H}^{\mathrm{st.n.}}_{\Lambda_1}$.
    For $0\leq i\leq l-k$ and $f\in\mathcal{G}$, set
    \begin{equation}
        \widetilde{f_i}=\mathrm{Id}_{\rho}^{\otimes i}\otimes f\otimes \mathrm{Id}_{\rho}^{\otimes(l-i-k)}\in\End(\rho^l).
    \end{equation}
    Now let $S$ be a sequence of $k-1$ plaquettes in $\Lambda_2\cup \Lambda_1$ beginning at vertex $v_{2i+1}$ in each row. 
    By \eqref{eqn:: F_S} and Lemma \ref{lemma:: propagate actions to bigger regions}, a ground state $\ket{\psi}$ of $\sum_{e\in E(\Lambda_2\cup \Lambda_1)} (\mathrm{I} - Q_e)$ is also a ground state of $F_S$ if and only if 
    \begin{equation}\label{eqn:: identity imposed by absence of frustration}
        \mathrm{R}_{\Lambda_2}\left(\mathrm{D}_{\rho^l}\widetilde{f_i}\mathrm{D}_{\rho^l}^{-1}\right) \ket{\psi} = \mathrm{L}_{\Lambda_1}\left(\mathrm{D}_{\rho^l}^{-1}\widetilde{f_i}\mathrm{D}_{\rho^l}\right)\ket{\psi}, \quad f\in \mathcal{G}. 
    \end{equation}
    Observe that if the above identity holds for $f,g\in\End(\rho^l)$, then it also holds $fg$, as
    \begin{equation}
        \begin{aligned}
            &\mathrm{R}_{\Lambda_2}\left(\mathrm{D}_{\rho^l}fg\mathrm{D}_{\rho^l}^{-1}\right) \ket{\psi} = \mathrm{R}_{\Lambda_2}\left(\mathrm{D}_{\rho^l}g\mathrm{D}_{\rho^l}^{-1}\right)\mathrm{R}_{\Lambda_2}\left(\mathrm{D}_{\rho^l}f\mathrm{D}_{\rho^l}^{-1}\right) \ket{\psi}\\
            &= \mathrm{R}_{\Lambda_2}\left(\mathrm{D}_{\rho^l}g\mathrm{D}_{\rho^l}^{-1}\right) \mathrm{L}_{\Lambda_1}\left( \mathrm{D}^{-1}_{\rho^l} f \mathrm{D}_{\rho^l} \right) \ket{\psi} \\
            &= \mathrm{L}_{\Lambda_1}\left( \mathrm{D}^{-1}_{\rho^l} f \mathrm{D}_{\rho^l} \right) \mathrm{R}_{\Lambda_2}\left(\mathrm{D}_{\rho^l}g\mathrm{D}_{\rho^l}^{-1}\right)\ket{\psi} \\
            &= \mathrm{L}_{\Lambda_1}\left( \mathrm{D}^{-1}_{\rho^l} f \mathrm{D}_{\rho^l} \right) \mathrm{L}_{\Lambda_1}\left( \mathrm{D}^{-1}_{\rho^l} g \mathrm{D}_{\rho^l} \right)\ket{\psi} = \mathrm{L}_{\Lambda_1}\left( \mathrm{D}^{-1}_{\rho^l} fg \mathrm{D}_{\rho^l} \right)\ket{\psi}. 
        \end{aligned}
    \end{equation} 
    Since $\mathcal{G}$ is a generating set, $\ket{\psi}$ is a frustration-free ground state of $H_{\Lambda_2\cup \Lambda_1}$ if and only if \eqref{eqn:: identity imposed by absence of frustration} holds with $\widetilde{f_i}$ replaced by any $h\in \End(\rho^l)$. 
    Note that for $h\in \End(\rho^l)$, we have $\mathrm{D}_{\rho^l} h \mathrm{D}^{-1}_{\rho^l} = \mathrm{D}^2_{\rho^l} \left( \mathrm{D}^{-1}_{\rho^l} h \mathrm{D}_{\rho^l} \right) \mathrm{D}^{-2}_{\rho^l}$. 
    Thus, $\ket{\psi}$ is a frustration-free ground state of $H_{\Lambda_2\cup \Lambda_1}$ if and only if
    \begin{equation}
        \left(\mathrm{R}_{\Lambda_2}\left(\mathrm{D}^2_{\rho^l} h\mathrm{D}_{\rho^l}^{-2}\right)- \mathrm{L}_{\Lambda_1}\left( h \right)\right)\ket{\psi}=0, \quad h\in \End(\rho^l). 
    \end{equation}
    By Lemma \ref{lemma:: Key identity for local ground states}, the preceding identity identifies the ground state space with the range of $\chi(\Lambda_2,\Lambda_1)$ and gives
    \begin{equation}
        \chi(\Lambda_2,\Lambda_1)^2 = \left(\mathrm{R}_{\Lambda_2}(\Delta_l)\otimes \mathrm{I}_{\Lambda_1} \right)\chi(\Lambda_2,\Lambda_1),
    \end{equation}
    By Lemma \ref{lemma:: key central element}, $\mathrm{R}_{\Lambda_2}(\Delta_l)\otimes \mathrm{I}_{\Lambda_1}$ is strictly positive on the product string-net subspace and commutes with $\chi(\Lambda_2,\Lambda_1)$.
    Therefore, $\Pi(\Lambda_2\cup \Lambda_1) = \left(\mathrm{R}_{\Lambda_2}(\Delta^{-1}_l)\otimes \mathrm{I}_{\Lambda_1} \right)\chi(\Lambda_2,\Lambda_1)$ is the ground state projection of $H_{\Lambda_2\cup \Lambda_1}$.
    Finally, that $\mathcal{U}$ is an isometry follows from Proposition \ref{prop::sewing two rows}. 
\end{proof}

\begin{theorem}\label{thm:: local ground states in a parallelogram}
    Let $\Lambda$ be a parallelogram of shape $(l,h)$ with $l\geq k$ and $h\geq 2$. Decompose $\Lambda$ into regions  $\Lambda_1,\Lambda_2,\dots, \Lambda_h$ of shape $(l,1)$ such that $\Lambda = \bigcup^h_{i=1} \Lambda_i$, and $\Lambda_{i}$ is on top of $\Lambda_{i+1}$ for $1\leq i\leq h-1$. 
    Then the local Hamiltonian $H_{\Lambda}$ is frustration-free, with ground state projection $\Pi(\Lambda)$ given by
    \begin{equation}
        \Pi(\Lambda) = \prod^{h-1}_{i=1} \Pi(\Lambda_{i+1}\cup\Lambda_i). 
    \end{equation}
    Moreover, the map
    \begin{equation}
        \mathcal{U}_{h:1}: \prod^{h-1}_{i=1}\chi(\Lambda_{i+1},\Lambda_i)^{1/2} \ket{\xi_{h}\otimes \cdots\otimes \xi_1}\mapsto \mathrm{eval}_{\Lambda_h}(\xi_h)\diamond \cdots \diamond \mathrm{eval}_{\Lambda_1}(\xi_1)
    \end{equation}
    extends to an isometry from the ground state subspace of $H_{\Lambda}$ to $\Hom(\gamma^h \rho^l,\rho^l\gamma^h)$. 
\end{theorem}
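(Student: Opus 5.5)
The plan is to reduce everything to Proposition \ref{prop:: local ground states in a row} applied to consecutive pairs of rows, and then show that the resulting projections commute.

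\textbf{Decomposing the Hamiltonian.} First observe that every interaction term of $H_\Lambda$ lies inside some $\Lambda_{i+1}\cup\Lambda_i$. An edge term $\mathrm{I}-Q_e$ with $e\in E(\Lambda)$ is either an uncolored edge inside a row, or (being colored) carries no term. A term $F_S$ with $S\in\mathcal{P}_{k-1}(\Lambda)$ consists of consecutive plaquettes between two adjacent rows. Hence $H_\Lambda$ and $\sum_{i=1}^{h-1}H_{\Lambda_{i+1}\cup\Lambda_i}$ have the same summands up to positive multiplicities, since interior-row edges are counted at most twice. Because all terms are positive, $\ker H_\Lambda=\bigcap_i \ker H_{\Lambda_{i+1}\cup\Lambda_i}$. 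Frustration-freeness of $H_\Lambda$ then amounts to this intersection being nonzero, and $\Pi(\Lambda)$ is the projection onto it.

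\textbf{Commutation of the pair projections.} Next I show that the projections $\Pi_i:=\Pi(\Lambda_{i+1}\cup\Lambda_i)=\mathrm{R}_{\Lambda_{i+1}}(\Delta_l^{-1})\chi(\Lambda_{i+1},\Lambda_i)$ pairwise commute; then $\prod_i\Pi_i$ is exactly the projection onto the intersection.
\begin{itemize}
\item For $|i-j|\ge2$ the supports are disjoint, so $\Pi_i$ and $\Pi_j$ commute trivially.
\item For $j=i+1$, the operator $\Pi_i$ acts on row $\Lambda_{i+1}$ only through operators $\mathrm{R}_{\Lambda_{i+1}}(\cdot)$. The operator $\Pi_{i+1}$ acts on $\Lambda_{i+1}$ only through operators $\mathrm{L}_{\Lambda_{i+1}}(\cdot)$. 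By Corollary \ref{corollary:: commutativity between left/right actions}, applied with $\Lambda_1=\Lambda_2=\Lambda_{i+1}$, these commute.
\item The edge projections commute with everything, so no further cases arise.
\end{itemize}
By the same reasoning, the $\chi(\Lambda_{i+1},\Lambda_i)$ commute with each other and with the central factors $\mathrm{R}(\Delta_l^{-1})$. Consequently $\prod_i\chi(\Lambda_{i+1},\Lambda_i)^{1/2}$ is well defined, and its range equals the range of $\prod_i\chi(\Lambda_{i+1},\Lambda_i)$, which is the range of $\prod_i\Pi_i$.

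\textbf{The $h$-row sewing identity.} The core step is to generalize Proposition \ref{prop::sewing two rows} to $h$ rows. For $h_i,h_i'\in\Hom(a_i\rho^l,\rho^lb_i)$ I want
\begin{equation*}
\Bigl\langle \textstyle\bigotimes_i\mathrm{eval}^\dagger_{\Lambda_i}(h_i')\Big|\prod_{i=1}^{h-1}\chi(\Lambda_{i+1},\Lambda_i)\Big|\bigotimes_i\mathrm{eval}^\dagger_{\Lambda_i}(h_i)\Bigr\rangle=\braket{h'_h\diamond\cdots\diamond h'_1,\,h_h\diamond\cdots\diamond h_1}.
\end{equation*}
The argument proceeds as follows.
\begin{itemize}
\item Expand each $\chi$ over an orthonormal basis. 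The matrix element then factorizes row by row. Each middle row contributes a factor $\braket{h_i',\pi_L(\eta_i)\pi_R(\mathrm{D}\eta_{i-1}^*\mathrm{D}^{-1})h_i}$, using Lemma \ref{lemma:: eval in a single row} to pass from states to morphisms.
\item Eliminate the sums one row at a time, starting from the top row $\Lambda_1$. As in the proof of Proposition \ref{prop::sewing two rows}, rewrite the top factor via the trace-preserving partial trace $\mathbb{E}$ as an inner product with $\eta^*$.
\item Resum over the basis using \eqref{eqn:: resolution with respect to ONB}. This replaces $h_2$ by $h_2\diamond\mathbb{E}[\cdots]\mathrm{D}^{-1}$, i.e.\ it absorbs row $1$ into row $2$.
\item By induction on $h$, the expression reduces to the two-row identity.
\end{itemize}
This bookkeeping is where I expect the main difficulty: the $\mathrm{D}_{\rho^l}$ conjugations and the $\frac{1}{\sqrt{d_ad_b}}$ normalizations must telescope correctly through the $\diamond$-composition of the extended inner product.

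\textbf{Conclusion.} Granting the sewing identity, $\mathcal{U}_{h:1}$ preserves inner products on the spanning vectors, so it extends to a well-defined isometry on the range of $\prod_i\chi(\Lambda_{i+1},\Lambda_i)^{1/2}$, which is the ground space. Nonemptiness follows because there exist string-net states whose $\diamond$-chain is nonzero. For instance, take $a_i=b_i=\mathbb{1}$ and $h_i=\mathrm{Id}_{\rho^l}$, which lie in the image of each $\mathrm{eval}_{\Lambda_i}$ by surjectivity. The sewing identity then gives a positive expectation value of $\prod_i\chi(\Lambda_{i+1},\Lambda_i)$, so its range is nonzero. Hence $H_\Lambda$ is frustration-free and $\Pi(\Lambda)=\prod_i\Pi(\Lambda_{i+1}\cup\Lambda_i)$.
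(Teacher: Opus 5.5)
Your proposal is correct and follows the paper's proof essentially step for step. It uses the same reduction of $\ker H_\Lambda$ to adjacent row pairs, the same $\mathrm{L}$/$\mathrm{R}$ commutation argument for the pair projections and the $\chi$'s, the same partial-trace computation for the sewing identity, and the same nonvanishing check via $h_i=\mathrm{Id}_{\rho^l}$ in the unit sector. The one organizational difference is in the sewing identity. You expand all $\chi$'s at once and peel rows off from the top. The paper instead inducts by adding a row at the bottom, using the isometry of $\mathcal{U}_{m:1}$ and the intertwining $\mathcal{U}_{m:1}\mathrm{L}_{\Lambda_m}(f)=f\diamond\mathcal{U}_{m:1}$. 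The telescoping you flag as the main difficulty amounts to naturality of the right partial trace. Nested partial traces over the $\gamma$-labels $b_1,\dots,b_m$ compose to the single partial trace over $\gamma^m$ that the paper uses.
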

\begin{proof}
    By Proposition \ref{prop:: local ground states in a row} and Corollary \ref{corollary:: commutativity between left/right actions}, projections associated with adjacent pairs of rows commute, while those associated with disjoint pairs act on disjoint tensor factors and hence also commute.
    Therefore, $\displaystyle \Pi_0 = \prod^{h-1}_{i=1} \Pi(\Lambda_{i+1}\cup\Lambda_i)$ is an orthogonal projection onto the common ground-state subspace of the local Hamiltonians $H_{\Lambda_{i+1}\cup \Lambda_i}$.
    Every interaction term supported in $\Lambda$ is contained in some adjacent pair of rows, so positivity of the interaction terms gives
    \begin{equation}
        \ker H_{\Lambda}=\bigcap_{i=1}^{h-1}\ker H_{\Lambda_{i+1}\cup\Lambda_i}=\range(\Pi_0).
    \end{equation}
    Moreover, Proposition \ref{prop:: local ground states in a row} identifies each $\Pi(\Lambda_{i+1}\cup\Lambda_i)$ as the support projection of $\chi(\Lambda_{i+1},\Lambda_i)$; these positive operators commute by the same argument, so the range of $\prod_{i=1}^{h-1}\chi(\Lambda_{i+1},\Lambda_i)^{1/2}$ is $\range(\Pi_0)$.

    We prove the second statement by induction on $h$. 
    The base case, for which $h=2$, is proved in Proposition \ref{prop::sewing two rows}. 
    Now take $h>2$, and assume $\mathcal{U}_{m:1}$ is an isometry for some $2\leq m\leq h-1$. 
    By Corollary \ref{corollary:: commutativity between left/right actions} and the expression of $\chi(\Lambda_m,\Lambda_{m-1})$ in \eqref{eqn:: def. of chi-operator}, for $f\in\End(\rho^l)$ we have 
    \begin{equation}
        \begin{aligned}
            &\mathcal{U}_{m:1}\mathrm{L}_{\Lambda_m}(f)\left( \prod^{m-1}_{i=1}\chi(\Lambda_{i+1},\Lambda_i)^{1/2} \ket{\xi_{m}\otimes \cdots\otimes \xi_1} \right)\\
            &= \mathcal{U}_{m:1}\left( \prod^{m-1}_{i=1}\chi(\Lambda_{i+1},\Lambda_i)^{1/2} \ket{\mathrm{L}_{\Lambda_m}(f)\xi_{m}\otimes \cdots\otimes \xi_1} \right)\\
            &= f\diamond \mathcal{U}_{m:1}\left( \prod^{m-1}_{i=1}\chi(\Lambda_{i+1},\Lambda_i)^{1/2} \ket{\xi_{m}\otimes \cdots\otimes \xi_1} \right). 
        \end{aligned}
    \end{equation}
    Since the $\chi$-operators commute, for $\xi_i,\eta_i\in \mathcal{H}^{\mathrm{st.n.}}_{\Lambda_i}$ with $1\leq i\leq m+1$, the inner product of the corresponding vectors in the domain of $\mathcal{U}_{m+1:1}$ is
    \begin{equation}
        \begin{aligned}
            &\Braket{\eta_{m+1}\otimes \eta_{m}\otimes \cdots \otimes \eta_1|\chi(\Lambda_{m+1},\Lambda_m) \prod^{m-1}_{i=1}\chi(\Lambda_{i+1},\Lambda_i)|\xi_{m+1}\otimes \xi_{m}\otimes \cdots \otimes \xi_1} \\
            &= \sum_{\zeta\in \mathrm{ONB}(\rho^l)} \Braket{\mathrm{eval}_{\Lambda_m}(\eta_{m})\diamond \cdots \diamond \mathrm{eval}_{\Lambda_1}(\eta_1),\zeta\diamond\mathrm{eval}_{\Lambda_m}(\xi_m)\diamond \cdots \diamond \mathrm{eval}_{\Lambda_1}(\xi_1)}\\
            &\times \Braket{\mathrm{eval}_{\Lambda_{m+1}}(\eta_{m+1}),\mathrm{eval}_{\Lambda_{m+1}}(\xi_{m+1})\diamond \left( \mathrm{D}_{\rho^l} \zeta^* \mathrm{D}^{-1}_{\rho^l}\right)}. 
        \end{aligned}
    \end{equation}
    By the same calculation as in the proof of Proposition \ref{prop::sewing two rows}, with $\mathbb{E}$ the categorical partial trace over $\gamma^m$ from $\End(\rho^l \gamma^m)$ to $\End(\rho^l)$, the above sum reduces to 
    \begin{equation}
        \Braket{\mathrm{eval}_{\Lambda_{m+1}}(\eta_{m+1})\diamond \mathrm{eval}_{\Lambda_m}(\eta_{m})\diamond \cdots \diamond \mathrm{eval}_{\Lambda_1}(\eta_1),\mathrm{eval}_{\Lambda_{m+1}}(\xi_{m+1}) \diamond\mathrm{eval}_{\Lambda_m}(\xi_m)\diamond \cdots \diamond \mathrm{eval}_{\Lambda_1}(\xi_1)}.
    \end{equation}
    Thus $\mathcal{U}_{m+1:1}$ is an isometry. 
    Finally, surjectivity of each $\mathrm{eval}_{\Lambda_i}$ allows us to choose $\xi_i$ such that $\mathrm{eval}_{\Lambda_i}(\xi_i) = \mathrm{Id}_{\rho^l}$. 
    As their composition is nonzero, we obtain $\Pi_0\neq 0$. 
    Hence $H_{\Lambda}$ is frustration-free and $\Pi(\Lambda)=\Pi_0$. 
\end{proof}

\begin{corollary}\label{corollary:: frustration-freeness}
    The interaction $\varPhi$ defined in \eqref{eqn: interaction of new model} is frustration-free.
\end{corollary}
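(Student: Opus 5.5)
The plan is to reduce frustration-freeness of every local Hamiltonian to the case of parallelograms, which is Theorem \ref{thm:: local ground states in a parallelogram}. Recall that an interaction is frustration-free if every local Hamiltonian $H_\Lambda$ it defines is. Since all interaction terms $\varPhi(D)$ are positive, $H_\Lambda$ is frustration-free exactly when the joint kernel $\bigcap_{D\subset\Lambda}\ker\varPhi(D)$ is nonzero. In that case the ground state space of $H_\Lambda$ coincides with this joint kernel, so every ground state minimizes each term.

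Let $\Lambda$ be an arbitrary finite region. I would choose a parallelogram $\Lambda'$ of shape $(l,h)$ with $l\geq k$ and $h\geq 2$ that contains $\Lambda$. This is possible because $V(\Lambda)$ is finite and parallelograms of large shape exhaust the plane. Every interaction term supported in $\Lambda$ is then also supported in $\Lambda'$: edges in $E(\Lambda)$ lie in $E(\Lambda')$, and plaquette strings in $\mathcal{P}_{k-1}(\Lambda)$ lie in $\mathcal{P}_{k-1}(\Lambda')$. By Theorem \ref{thm:: local ground states in a parallelogram}, $\Pi(\Lambda')\neq 0$ and its range lies in $\ker\varPhi(D)$ for every $D\subset\Lambda'$.

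The remaining step is to transfer a nonzero common ground state on $\Lambda'$ to one on $\Lambda$. Take a nonzero $\ket{\psi}\in\range\Pi(\Lambda')$ and decompose it as
\[
\ket{\psi}=\sum_j \ket{\phi_j}\otimes\ket{\chi_j},
\]
with $\{\ket{\chi_j}\}$ linearly independent in $\mathcal{H}_{\Lambda'\setminus\Lambda}$. Every term $\varPhi(D)$ with $D\subset\Lambda$ acts as $\varPhi(D)\otimes \mathrm{I}$. The equation $(\varPhi(D)\otimes\mathrm{I})\ket{\psi}=0$, together with the linear independence of the $\ket{\chi_j}$, forces $\varPhi(D)\ket{\phi_j}=0$ for every $j$. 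Since $\ket{\psi}\neq 0$, some $\ket{\phi_j}$ is nonzero. This $\ket{\phi_j}$ lies in the kernel of every term of $H_\Lambda$, so $H_\Lambda$ is frustration-free. Equivalently, one can apply the same argument to the reduced density matrix of $\ket{\psi}$ on $\Lambda$.

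I expect the main (mild) obstacle to be the geometric bookkeeping: checking that every interaction term of $\Lambda$ really appears among those of $\Lambda'$, i.e.\ that the boundary conventions of regions are compatible with containment. The algebraic content is already carried by Theorem \ref{thm:: local ground states in a parallelogram}.
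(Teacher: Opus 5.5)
Your proposal is correct and follows essentially the same route as the paper: embed the region in a parallelogram, take a nonzero common ground state there via Theorem \ref{thm:: local ground states in a parallelogram}, and pass it down to $\Lambda$ by a Schmidt-type decomposition across $\Lambda$ and its complement. Your explicit choice of $l\geq k$ and $h\geq 2$, needed to meet that theorem's hypotheses, is a detail the paper leaves implicit.
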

\begin{proof}
    By Theorem \ref{thm:: local ground states in a parallelogram}, for any parallelogram $\Lambda$, the local Hamiltonian $H_{\Lambda}$ is frustration-free. 
    For any finite region $\Lambda'$, we can find a parallelogram $\Lambda$ such that $\Lambda' \subseteq \Lambda$. 
    Take an arbitrary non-zero ground state $\ket{\psi}$ of $H_{\Lambda}$, then $\ket{\psi}$ minimizes the energy of the local terms in $H_{\Lambda'}$ as well. 
    Consider the Schmidt decomposition 
    \begin{equation}
        \ket{\psi} = \sum_i \lambda_i \ket{\psi_i}\otimes \ket{\phi_i}
    \end{equation}
    where $\ket{\psi_i}\in \mathcal{H}_{\Lambda'}$. 
    Since operators localized in $\Lambda'$ act as identity on $\ket{\phi_i}$, 
    the state $\ket{\psi_i}$ minimizes the energy of the local terms in $H_{\Lambda'}$, hence are frustration-free ground states of $H_{\Lambda'}$. 
\end{proof}

\section{Fusion Spin Chains from OS Reconstruction}\label{Sec:: Fusion Spin Chains from OS Reconstruction}

In this section, we use the framework developed in \cite{LiuZhao2025RPTO} to construct the boundary algebras of the model defined in \eqref{eqn: interaction of new model}. 
We then show that they are isomorphic to the anyon chain algebras determined by $\rho$. 
The intuition behind our construction originates from the path integral formulation. 
A quantum spin system defined on a $2d$ spatial lattice can be viewed as a (2+1)D Euclidean field theory via imaginary time evolution. When the Hamiltonian exhibits spatial reflection positivity, its correlation functions are positive with respect to spatial reflection across a chosen line. 
The Osterwalder-Schrader (OS) reconstruction applied to such a system acts effectively as a spatial path integral, yielding a Hilbert space that encodes the physical degrees of freedom at the boundary defined by the reflection line. 

\subsection{Reflection positivity}

Here we briefly recall the definition of reflection positivity (RP) in the context of quantum spin systems. 
Let $\theta:\mathbb{R}^2\rightarrow \mathbb{R}^2$ be the reflection with respect to the $x$-axis, and denote by $\mathbb{R}^2_+$/$\mathbb{R}^2_-$ the upper-/lower- half plane. 
Let $\Gamma$ be a lattice in $\mathbb{R}^2$. 
For $\Lambda\subseteq \Gamma$, we denote $\Lambda_{\pm} = \Lambda \cap \mathbb{R}^2_{\pm}$. 
We assume that $\Gamma = \Gamma_{-}\sqcup \Gamma_{+}$, and $\theta$ maps $\Gamma_\pm$ to $\Gamma_\mp$ bijectively. 
Consequently, we have $\theta(\Gamma_{+}) = \Gamma_{-}$ and $\theta(\Gamma_{-}) = \Gamma_{+}$. 

We consider a quantum spin systems with finite dimensional Hilbert spaces $\mathcal{H}_v$ assigned to each $v\in \Gamma$. 
We assume that the local Hilbert space has a tensor product structure: $\mathcal{H}_{\Lambda} = \bigotimes_{v\in \Lambda}\mathcal{H}_{v}$ for every finite $\Lambda\in \Gamma$. 
For a subset $\Gamma'\subseteq \Gamma$, we denote $\mathfrak{A}_{\text{loc}}(\Gamma')$ the $*$-algebra of observables supported in $\Gamma'$, which is a finite-dimensional matrix algebra when $\Gamma'$ is finite. 
We fix a family of anti-unitary operators $\widehat{\theta}_v: \mathcal{H}_v \rightarrow \mathcal{H}_{\theta(v)}, v\in \Gamma_{+}.$
For a local observable $X \in \mathfrak{A}_{\text{loc}}(\Gamma_{+})$ supported on $\Gamma_{+}$, we define its reflection (time reversal) in $\Gamma_{-}$ as
\begin{equation}
    \Theta(X) = \widehat{\theta}_{\mathrm{supp} X} X \widehat{\theta}_{\mathrm{supp} X}^{-1},\quad \widehat{\theta}_{\mathrm{supp} X} = \bigotimes_{i\in \mathrm{supp}X} \widehat{\theta}_i.
\end{equation}
Since the operators $\widehat{\theta}_i$ are anti-unitary, $\Theta$ is norm-preserving and preserves both the multiplicative and the $*$-structures. Therefore, it can be uniquely extended to an anti-linear $*$-isomorphism from $\mathfrak{A}(\Gamma_{+})$ to $\mathfrak{A}(\Gamma_{-})$. 

\begin{definition}
    Let $\mathcal{H}_-, \mathcal{H}_+$ be finite-dimensional Hilbert spaces, and let $\widehat{\theta}: \mathcal{H}_+\rightarrow \mathcal{H}_-$ be an anti-unitary. 
    A self-adjoint operator $H\in \mathcal{B}(\mathcal{H}_-\otimes \mathcal{H}_+)$ is called reflection positive (with respect to $\widehat{\theta}$), if 
    \begin{equation}
        \Tr(\mathrm{e}^{-\tau H}\Theta(X)\otimes X) \geq 0, \quad \tau \geq 0
    \end{equation}
    for every $X\in \mathcal{B}(\mathcal{H}_+)$. 
    Here $\Theta(X) = \widehat{\theta}X\widehat{\theta}^{-1}$.  
\end{definition}

The following is a sufficient condition for $H$ to be reflection positive: 

\begin{proposition}[\cite{FILS1978}]\label{prop:: creterion for RP}
    Let $H$ be a self-adjoint operator on $\mathcal{H}_- \otimes \mathcal{H}_+$.
    Suppose that there exist a self-adjoint operator $H_+$ on $\mathcal{H}_{+}$ and a finite set $\{O_j\}^m_{j=1}$ in $\mathcal{B}(\mathcal{H}_{+})$ such that
    \begin{equation}
        H = H_+ + \Theta(H_+) - \sum^m_{j=1}\Theta(O_j)\otimes O_j.
    \end{equation}
    Then $H$ is reflection positive with respect to $\Theta$.
\end{proposition}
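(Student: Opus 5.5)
We have to prove the criterion from \cite{FILS1978}: if $H = H_+ + \Theta(H_+) - \sum_j \Theta(O_j)\otimes O_j$, then $\Tr(\mathrm{e}^{-\tau H}\Theta(X)\otimes X)\geq 0$ for all $X\in\mathcal{B}(\mathcal{H}_+)$ and $\tau\geq 0$.

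The plan is to expand $\mathrm{e}^{-\tau H}$ by the Lie--Trotter product formula. Write $H = A - B$ with
\[
A = H_+ + \Theta(H_+) \quad\text{and}\quad B=\sum_j \Theta(O_j)\otimes O_j .
\]
Then
\[
\mathrm{e}^{-\tau H} = \lim_{N\to\infty}\bigl(\mathrm{e}^{-\tau A/N}\mathrm{e}^{\tau B/N}\bigr)^N .
\]
Since the cone of RP-positive functionals is closed, it suffices to show that each approximant $\bigl(\mathrm{e}^{-\tau A/N}\mathrm{e}^{\tau B/N}\bigr)^N$ gives a nonnegative pairing.

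First, because $H_+$ and $\Theta(H_+)$ act on different tensor factors, they commute. Moreover $\Theta(\mathrm{e}^{-sH_+}) = \mathrm{e}^{-s\Theta(H_+)}$ for real $s$, since $\Theta$ is an anti-linear $*$-isomorphism and the coefficients of the exponential series are real. Hence
\[
\mathrm{e}^{-\tau A/N} = \Theta(K)\otimes K, \qquad K = \mathrm{e}^{-\tau H_+/N}.
\]
Second, expand $\mathrm{e}^{\tau B/N}$ as a power series with nonnegative coefficients. Each power $B^n$ is a sum of terms $\Theta(O_{j_1}\cdots O_{j_n})\otimes O_{j_1}\cdots O_{j_n}$, using multiplicativity of $\Theta$. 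So $\mathrm{e}^{\tau B/N}$ is a norm-convergent sum, with nonnegative coefficients, of operators of the form $\Theta(Y)\otimes Y$. Multiplying out the product of $N$ factors then expresses the approximant as a convergent nonnegative combination of operators $\Theta(Z)\otimes Z$, where each $Z$ is a product of $K$'s and $O$'s.

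Third, the key identity: for $Z,X\in\mathcal{B}(\mathcal{H}_+)$,
\[
\Tr\bigl((\Theta(Z)\otimes Z)(\Theta(X)\otimes X)\bigr) = \Tr_{\mathcal{H}_-}(\Theta(ZX))\,\Tr_{\mathcal{H}_+}(ZX).
\]
For an anti-unitary $\widehat\theta$ we have $\Tr(\widehat\theta Y\widehat\theta^{-1}) = \overline{\Tr(Y)}$; this can be checked on an orthonormal basis $\{e_i\}$ and its image $\{\widehat\theta e_i\}$. The product is therefore $|\Tr(ZX)|^2\geq 0$. Summing with nonnegative coefficients and passing to the limit $N\to\infty$ gives the claim.

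The only delicate point is this trace-conjugation identity for anti-unitaries, together with the justification of the limit. The latter is routine here because everything is finite dimensional and $X$ is fixed, so $\Tr(\,\cdot\,\Theta(X)\otimes X)$ is a continuous linear functional. I expect no real obstacle beyond keeping track of anti-linearity, which is why I reduce to $\Tr\circ\Theta=\overline{\Tr}$ immediately.
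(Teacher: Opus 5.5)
Your proof is correct. The paper does not prove this proposition itself and only cites \cite{FILS1978}; your argument is essentially the standard proof from that reference: Lie--Trotter splitting with $\mathrm{e}^{-\tau(H_++\Theta(H_+))/N}=\Theta(K)\otimes K$, the nonnegative power series for $\mathrm{e}^{\tau B/N}$, closure of the operators $\Theta(Z)\otimes Z$ under products, and $\Tr\circ\Theta=\overline{\Tr}$ giving $|\Tr(ZX)|^2\geq 0$.
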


\begin{definition}
    Let $\varPhi$ be an interaction on $\Gamma$. 
    We say that $\varPhi$ is reflection positive with respect to $\Theta$ if the following conditions hold:
    \begin{itemize}
        \item For any finite subset $\Lambda \subseteq \Gamma_{+}$, we have $\Theta(\varPhi(\Lambda)) = \varPhi(\theta(\Lambda))$.
        \item For any finite subset $\Lambda$ such that $\Lambda_{+},\Lambda_{-}\neq\emptyset$, $\varPhi(\Lambda) = 0$ if $\theta(\Lambda)\neq \Lambda$. 
        Otherwise, there exist a self-adjoint operator $K_{\Lambda}\in \mathfrak{A}_{\Lambda_+}$ and a finite linearly independent set $\{O_{\Lambda,j}\}_{j=1}^{m_{\Lambda}}\subseteq \mathfrak{A}_{\Lambda_+}$ that is invariant under the adjoint operation such that
        \begin{equation}
            \varPhi(\Lambda)=\Theta(K_{\Lambda})+  K_{\Lambda}-\sum_{j=1}^{m_{\Lambda}}\Theta(O_{\Lambda,j})\otimes O_{\Lambda,j}.
        \end{equation}
    \end{itemize}
    Note that the second condition ensures that for $\Lambda\cap \Gamma_\pm\neq \emptyset$, $\varPhi(\Lambda)\neq 0$ only if $\Lambda$ is reflection symmetric. 
\end{definition}

We now demonstrate that the interaction $\varPhi$ defined in \eqref{eqn: interaction of new model} is reflection positive with respect to every horizontal reflection. 
We choose $\theta$ to be the horizontal reflection with respect to a line that passes through the middle of two rows of $\Gamma$. 
For each vertex $v\in \Gamma_{+}$, we define $\widehat{\theta}_v$ as 
\begin{equation}
    \widehat{\theta}_v\Ket{\vcenter{\hbox{\begin{tikzpicture}
    \ThinLine[orange] (0:0) -- (-90:0.5);
    \ThinLine (0:0) -- (150:0.5);
    \ThinLine (0:0) -- (30:0.5);
    \draw[fill=black] (0:0) ellipse (0.05 and 0.05);
    \node at (-0.2,-0.25) {$\xi$};
\end{tikzpicture}}}} = \Ket{\vcenter{\hbox{\begin{tikzpicture}
    \ThinLine[orange] (90:0.5) -- (0:0);
    \ThinLine (0:0) -- (-150:0.5);
    \ThinLine (0:0) -- (-30:0.5);
    \draw[fill=black] (0:0) ellipse (0.05 and 0.05);
    \node at (-0.2,0.25) {$\xi^*$};
\end{tikzpicture}}}},\quad \widehat{\theta}_v\Ket{\vcenter{\hbox{\begin{tikzpicture}
    \ThinLine[orange] (90:0.5) -- (0:0);
    \ThinLine (0:0) -- (-150:0.5);
    \ThinLine (0:0) -- (-30:0.5);
    \draw[fill=black] (0:0) ellipse (0.05 and 0.05);
    \node at (-0.2,0.25) {$\eta$};
\end{tikzpicture}}}} = \Ket{\vcenter{\hbox{\begin{tikzpicture}
    \ThinLine[orange] (0:0) -- (-90:0.5);
    \ThinLine (0:0) -- (150:0.5);
    \ThinLine (0:0) -- (30:0.5);
    \draw[fill=black] (0:0) ellipse (0.05 and 0.05);
    \node at (-0.2,-0.25) {$\eta^*$};
\end{tikzpicture}}}}.
\end{equation}
This choice of $\widehat{\theta}_v$ defines an anti-linear $*$-isomorphism $\Theta$ from $\mathfrak{A}(\Gamma_{+})$ to $\mathfrak{A}(\Gamma_{-})$ as above. 

\begin{lemma}\label{lemma:: reflection exchanges L and R}
    Let $\Lambda_+ \subset \Gamma_+$ be a region with shape $(l,1)$ and let $\Lambda_- = \theta(\Lambda_+)$ be its reflection. 
    Then we have 
    \begin{equation}
        \Theta(\mathrm{L}_{\Lambda_+}(f)) = \mathrm{R}_{\Lambda_-}(f^*),\quad f\in \End(\rho^l). 
    \end{equation}
\end{lemma}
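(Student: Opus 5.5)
The plan is to reduce the identity to a statement about morphism spaces by conjugating through the evaluation maps. Writing $\widehat\theta=\widehat\theta_{\Lambda_+}$, define the anti-linear map
\begin{equation}
    J:\Hom(\gamma\rho^l,\rho^l\gamma)\to\Hom(\rho^l\gamma,\gamma\rho^l),\qquad J(h)=h^*,
\end{equation}
where $\Hom(\rho^l\gamma,\gamma\rho^l)$ is the morphism space attached to the reflected region $\Lambda_-$ (the first form in \eqref{eqn:: 2nd form of Lambda}). The first step is to check, configuration by configuration, that
\begin{equation}
    \mathrm{eval}_{\Lambda_-}\circ\widehat\theta=J\circ\mathrm{eval}_{\Lambda_+}.
\end{equation}
This holds because reflecting the planar diagram across a horizontal line and replacing every vertex morphism $\xi_i$ by $\xi_i^*$ is exactly the graphical adjoint of the whole diagram. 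Anti-linearity is compatible on both sides, and by linearity (anti-linearity) the identity extends from configurations to all of $\mathcal{H}_{\Lambda_+}$.

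The second step is to show that $J$ is anti-unitary for the inner products \eqref{eqn:: inner product on Hom}, i.e.\ $\braket{J h,J h'}=\braket{h',h}$. For $h,h'\in\Hom(a\rho^l,\rho^l b)$, this follows from cyclicity of $\tr_{\mathcal{C}}$ together with self-adjointness of $\mathrm{D}_{\rho^l}$, since the prefactor $1/\sqrt{d_ad_b}$ is symmetric in $a,b$. It is the same computation the paper uses to show that $f\mapsto f^*$ is anti-unitary on $\End(\rho^n)$. Since $\widehat\theta$ is anti-unitary and the evaluation maps are surjective partial isometries, the first step then yields
\begin{equation}
    \widehat\theta\,\mathrm{eval}_{\Lambda_+}^\dagger=\mathrm{eval}_{\Lambda_-}^\dagger J,
\end{equation}
obtained by taking adjoints of the intertwining relation, with care for anti-linearity.

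The third step combines these:
\begin{equation}
    \Theta(\mathrm{L}_{\Lambda_+}(f))=\widehat\theta\,\mathrm{eval}_{\Lambda_+}^\dagger\pi_L(f)\,\mathrm{eval}_{\Lambda_+}\widehat\theta^{-1}=\mathrm{eval}_{\Lambda_-}^\dagger\,J\pi_L(f)J^{-1}\,\mathrm{eval}_{\Lambda_-}.
\end{equation}
It then suffices to compute the conjugated action:
\begin{equation}
    J\pi_L(f)J^{-1}(k)=\bigl((f\otimes\mathrm{Id}_\gamma)k^*\bigr)^*=k\,(f^*\otimes\mathrm{Id}_\gamma)=\pi_R(f^*)k
\end{equation}
for $k\in\Hom(\rho^l\gamma,\gamma\rho^l)$, where $\pi_R$ is the right action on the reflected morphism space. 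Hence the right-hand side equals $\mathrm{R}_{\Lambda_-}(f^*)$.

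The main obstacle is the first step. It requires a careful match between the reflected region's identification with $\Hom(\rho^l\gamma,\gamma\rho^l)$ and its conventions for $\mathrm{R}_{\Lambda_-}$: which $\gamma$ strand is on which side, and the use of $\rho$ self-dual so that the orientation reversal under $*$ does not introduce duals. One must also check that the normalizations $\mathrm{D}_\rho$ in the vertex inner products make $\widehat\theta_v$ genuinely anti-unitary, which follows from $\tr_{\mathcal{C}}(\eta^*\mathrm{D}_\rho\xi)=\overline{\tr_{\mathcal{C}}(\xi^*\mathrm{D}_\rho\eta)}$. With these conventions fixed, the remaining steps are formal.
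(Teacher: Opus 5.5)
Your proposal is correct and matches the paper's proof. Your $J$ is the inverse of the paper's anti-unitary $S$, the adjoint map $\Hom(\rho^l\gamma,\gamma\rho^l)\to\Hom(\gamma\rho^l,\rho^l\gamma)$; the paper likewise intertwines $\widehat\theta_{\Lambda_+}$ with the evaluation maps, takes adjoints to get $\widehat\theta_{\Lambda_+}\mathrm{eval}^\dagger_{\Lambda_+}=\mathrm{eval}^\dagger_{\Lambda_-}S^{-1}$, and concludes from $S^{-1}\pi_L(f)S=\pi_R(f^*)$, with your cyclicity checks of the anti-unitarity of $J$ and $\widehat\theta_v$ simply supplying details the paper asserts without proof.
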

\begin{proof}
    Let $S$ denote the adjoint from $\Hom(\rho^l\gamma,\gamma\rho^l)$ to $\Hom(\gamma\rho^l,\rho^l\gamma)$, so $S$ is anti-unitary. 
    By the definition of $\widehat{\theta}_{\Lambda_+}$, for any $\ket{\xi}\in \mathcal{H}_{\Lambda_-}$, we have
    \begin{equation}\label{eqn:: reflection intertwines evaluation}
        \mathrm{eval}_{\Lambda_+}\widehat{\theta}^{-1}_{\Lambda_+}\ket{\xi} = \left(\mathrm{eval}_{\Lambda_-}\ket{\xi}\right)^* = S\, \mathrm{eval}_{\Lambda_-}\ket{\xi}.
    \end{equation}
    Since $\widehat{\theta}_{\Lambda_+}$ is also anti-unitary, taking adjoints in \eqref{eqn:: reflection intertwines evaluation} gives $\widehat{\theta}_{\Lambda_+}\mathrm{eval}_{\Lambda_+}^\dagger=\mathrm{eval}_{\Lambda_-}^\dagger S^{-1}$. 
    Moreover, for $h\in\Hom(\rho^l\gamma,\gamma\rho^l)$, we have $S^{-1}\pi_L(f)S(h)=\pi_R(f^*)h$. 
    Therefore, 
    \begin{equation}
        \begin{aligned}
            \Theta(\mathrm{L}_{\Lambda_+}(f))\ket{\xi}
            &= \widehat{\theta}_{\Lambda_+}\mathrm{eval}^\dagger_{\Lambda_+}\pi_L(f)\mathrm{eval}_{\Lambda_+}\widehat{\theta}^{-1}_{\Lambda_+}\ket{\xi}\\
            &= \mathrm{eval}^\dagger_{\Lambda_-}S^{-1}\pi_L(f)S\mathrm{eval}_{\Lambda_-}\ket{\xi}\\
            &= \mathrm{eval}^\dagger_{\Lambda_-}\pi_R(f^*)\mathrm{eval}_{\Lambda_-}\ket{\xi}\\
            &= \mathrm{R}_{\Lambda_-}(f^*)\ket{\xi}.
        \end{aligned}
    \end{equation}
    Since $\ket{\xi}$ is arbitrary, the result follows. 
\end{proof}
By symmetry, we also have 
\begin{equation}
    \Theta(\mathrm{R}_{\Lambda_+}(f)) = \mathrm{L}_{\Lambda_-}(f^*),\quad f\in \End(\rho^l). 
\end{equation}
\begin{proposition}
    The interaction $\varPhi$ defined in \eqref{eqn: interaction of new model} is reflection positive with respect to $\Theta$.
\end{proposition}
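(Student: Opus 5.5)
We check the two conditions in the definition of a reflection positive interaction, term by term. The reflection line lies midway between two rows, so it crosses only colored (vertical) edges. Consequently, no uncolored edge $e$ straddles the line, and an edge term $\mathrm{I}-Q_e$ is supported entirely in $\Gamma_+$ or in $\Gamma_-$. We first verify covariance for these terms, namely $\Theta(\mathrm{I}-Q_e)=\mathrm{I}-Q_{\theta(e)}$. This holds because $\widehat{\theta}_v$ sends a summand $\Hom(ab,\rho)$ to $\Hom(\rho,ab)$ with the same labels on the uncolored legs. Hence the conjugation by $\widehat{\theta}$ maps the span of label-matching configurations on $e$ onto the corresponding span on $\theta(e)$.

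The same reasoning applies to an operator $F_S$ with $S$ strictly on one side. It involves two rows $\Lambda,\Lambda'$ inside the same half-plane, and we must show $\Theta(F_S)=F_{\theta(S)}$. Reflection reverses the vertical order: the reflection of the lower row $\Lambda$ becomes the upper row of $\theta(S)$. Lemma~\ref{lemma:: reflection exchanges L and R} and its symmetric counterpart give
\[
\Theta\bigl(\mathrm{R}_{\Lambda}(\mathrm{D}_{\rho^k}f\mathrm{D}_{\rho^k}^{-1})\bigr)=\mathrm{L}_{\theta(\Lambda)}(\mathrm{D}_{\rho^k}^{-1}f^*\mathrm{D}_{\rho^k}),
\]
and similarly for the other term, using that $\mathrm{D}_{\rho^k}$ is self-adjoint. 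Since $\Theta$ is a $*$-homomorphism, $\Theta(|A|^2)=|\Theta(A)|^2$, so $\Theta(F_S)$ is the sum over $f\in\mathcal{G}$ of the terms defining $F_{\theta(S)}$ with $f$ replaced by $f^*$. This equals $F_{\theta(S)}$ because $\mathcal{G}$ is closed under adjoints, which is condition 1 of Definition~\ref{def:: generating set}.

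The main case is a straddling term $F_S$, where $\Lambda\subset\Gamma_-$ and $\Lambda'=\theta(\Lambda)\subset\Gamma_+$, so $S$ is reflection symmetric. Write $A_f=\mathrm{L}_{\Lambda'}(\mathrm{D}_{\rho^k}^{-1}f\mathrm{D}_{\rho^k})$. By Lemma~\ref{lemma:: reflection exchanges L and R}, $\Theta(A_{f^*})=\mathrm{R}_{\Lambda}(\mathrm{D}_{\rho^k}f\mathrm{D}_{\rho^k}^{-1})$, so $F_S=\sum_f|\Theta(A_{f^*})-A_f|^2$. Expanding each square gives
\[
F_S=\sum_f\Bigl(\Theta(A_{f^*})^\dagger\Theta(A_{f^*})+A_f^\dagger A_f\Bigr)-\sum_f\Bigl(\Theta(A_{f^*}^\dagger)A_f+A_f^\dagger\,\Theta(A_{f^*})\Bigr).
\]
Reindexing $f\mapsto f^*$ shows that the first sum equals $\Theta(K)+K$ with $K=\sum_f A_f^\dagger A_f$, which is self-adjoint and supported in $\Gamma_+$. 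For the cross terms, we use $\pi_L(g)^\dagger=\pi_L(\mathrm{D}^{-1}g^*\mathrm{D})$ to compute $A_f^\dagger=\mathrm{L}_{\Lambda'}(\mathrm{D}^{-2}f^*\mathrm{D}^2)$.

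The main obstacle is rewriting the cross terms in the form $\sum_j\Theta(O_j)\otimes O_j$ with an adjoint-invariant, linearly independent family $\{O_j\}\subset\mathfrak{A}(\Lambda'_+)$. A first attempt is $O_f=A_f$: the cross term $\Theta(A_{f^*}^\dagger)A_f$ is of this form exactly when $A_{f^*}^\dagger=A_f$, i.e.\ $\mathrm{D}^{-2}f\mathrm{D}^2=\mathrm{D}^{-1}f\mathrm{D}$. This identity does hold: by condition 2 of Definition~\ref{def:: generating set}, each $f$ is homogeneous between simple labels $(a_i)$ and $(b_i)$, so $\mathrm{D}^{-1}f\mathrm{D}=c_f f$ with the scalar $c_f=\prod_i\sqrt{d_{b_i}/d_{a_i}}$. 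Then $\mathrm{D}^{-2}f\mathrm{D}^{2}=c_f^2f$ and $\mathrm{D}^{-1}f^*\mathrm{D}=c_f^{-1}f^*$. Hence $A_f=c_f\mathrm{L}_{\Lambda'}(f)$ and $A_{f^*}^\dagger=c_f^{-1}c_f^{2}\mathrm{L}_{\Lambda'}(f)=A_f$.

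Both cross terms therefore become $\Theta(A_f)A_f$ after reindexing, and $F_S=\Theta(K)+K-\sum_f\Theta(O_f)O_f$ with $O_f=\sqrt{2}A_f$. This family is adjoint invariant, since $O_f^\dagger=O_{f^*}$. It is also linearly independent, because $\mathcal{G}$ is linearly independent and $\mathrm{L}_{\Lambda'}$ is injective, $\mathrm{eval}_{\Lambda'}$ being surjective. The remaining straddling sets support no nonzero interaction, since every other nonzero interaction term lies in one half, and this completes the verification.
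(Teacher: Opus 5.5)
Your overall strategy is the paper's: covariance of the edge terms and of one-sided $F_S$ via Lemma~\ref{lemma:: reflection exchanges L and R} and adjoint invariance of $\mathcal{G}$; then expanding a straddling $F_S$ into $K+\Theta(K)$ minus cross terms; then using condition 2 of Definition~\ref{def:: generating set} to write the cross terms as $\sum_j\Theta(O_j)\otimes O_j$. The first two parts are fine, but the key step is wrong.

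You wrote $A_f^\dagger=\mathrm{L}_{\Lambda'}(\mathrm{D}^{-2}f^*\mathrm{D}^2)$, forgetting that $*$ reverses order. Since $(\mathrm{D}^{-1}f\mathrm{D})^*=\mathrm{D}f^*\mathrm{D}^{-1}$, you actually get $A_f^\dagger=\mathrm{L}_{\Lambda'}(\mathrm{D}^{-1}\mathrm{D}f^*\mathrm{D}^{-1}\mathrm{D})=\mathrm{L}_{\Lambda'}(f^*)$. Hence $A_{f^*}^\dagger=\mathrm{L}_{\Lambda'}(f)=c_f^{-1}A_f$, not $A_f$. The ``identity'' $\mathrm{D}^{-2}f\mathrm{D}^2=\mathrm{D}^{-1}f\mathrm{D}$ you invoke amounts to $c_f=1$, and your own next line ($c_f^2f$ versus $c_ff$) contradicts it.

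Concretely, the cross terms sum to $2\sum_f c_f\,\Theta(\mathrm{L}_{\Lambda'}(f))\otimes\mathrm{L}_{\Lambda'}(f)$, whereas $O_f=\sqrt2A_f$ produces $2\sum_f c_f^2\,\Theta(\mathrm{L}_{\Lambda'}(f))\otimes\mathrm{L}_{\Lambda'}(f)$. These elementary tensors are linearly independent. So your decomposition of $F_S$ is false as soon as some $f\in\mathcal{G}$ joins labels of different quantum dimensions (e.g.\ $\mathcal{G}_{\mathrm{LW}}$ for a non-pointed $\mathcal{C}$). In that case $\{A_f\}$ is not adjoint invariant either, since $A_f^\dagger=c_fA_{f^*}$. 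As written, your argument only covers the case where all $c_f=1$, such as simple $\rho$.

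The missing idea is to split the scalar $c_f$ evenly across the reflection line, as the paper does with $\mathrm{D}^{\pm1/2}$. Take
\[
O_f=\sqrt2\,\mathrm{L}_{\Lambda'}(\mathrm{D}^{-1/2}f\mathrm{D}^{1/2})=\sqrt2\,c_f^{1/2}\mathrm{L}_{\Lambda'}(f).
\]
Since $c_f>0$ is real, antilinearity of $\Theta$ is harmless, and
\[
\Theta(A_{f^*}^\dagger)A_f=c_f\,\Theta(\mathrm{L}_{\Lambda'}(f))\otimes\mathrm{L}_{\Lambda'}(f)=\tfrac12\Theta(O_f)\otimes O_f .
\]
The other cross term is $A_f^\dagger\Theta(A_{f^*})=c_{f^*}\Theta(\mathrm{L}_{\Lambda'}(f^*))\otimes\mathrm{L}_{\Lambda'}(f^*)$. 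After reindexing $f\mapsto f^*$ and using $c_{f^*}=c_f^{-1}$, it gives the same sum. The family is adjoint invariant:
\[
O_f^\dagger=\sqrt2\,c_f^{1/2}\mathrm{L}_{\Lambda'}(\mathrm{D}^{-1}f^*\mathrm{D})=\sqrt2\,c_f^{-1/2}\mathrm{L}_{\Lambda'}(f^*)=O_{f^*}.
\]
Your linear-independence argument (injectivity of $\mathrm{L}_{\Lambda'}$ plus independence of $\mathcal{G}$) applies to this family unchanged. With this replacement your proof coincides with the paper's.
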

\begin{proof}
    For any finite reflection-symmetric region $\Lambda$, the local Hamiltonian decomposes as
    \begin{equation*}
        H_{\Lambda} = \mathrm{I}_{\Lambda_-}\otimes H_{\Lambda_+} + H_{\Lambda_-}\otimes \mathrm{I}_{\Lambda_+} + \sum_{\substack{ S\in \mathcal{P}_{k-1}(\Lambda) \\
        S\cap \Lambda_{\pm}\neq \emptyset }} F_S. 
    \end{equation*}
    For each $S\in \mathcal{P}_{k-1}(\Lambda_+)$, Lemma \ref{lemma:: reflection exchanges L and R} and the adjoint invariance of $\mathcal{G}$ give $\Theta(F_S) = F_{\theta(S)}$. 
    Every edge is either fully contained in $\Gamma_{+}$ or $\Gamma_{-}$, and we also have $\Theta(Q_e) = Q_{\theta(e)}$. 
    This proves that $H_{\Lambda_-} = \Theta(H_{\Lambda_+})$. 
    By Proposition \ref{prop:: creterion for RP}, it remains to show that $F_S$ has the required form for each $S\in \mathcal{P}_{k-1}(\Lambda)$ overlapping both $\Lambda_+$ and $\Lambda_-$. 
    For such $S$, denote by $S_\pm = S\cap \Lambda_\pm$. 
    Then for each $f\in \mathcal{G}$, we have 
    \begin{equation}\label{eqn: expand F_S}
        \begin{aligned}
            &\left\vert \mathrm{R}_{S_-} \left( \mathrm{D}_{\rho^k} f \left( \mathrm{D}_{\rho^k} \right)^{-1} \right) - \mathrm{L}_{S_+}\left( \left( \mathrm{D}_{\rho^k} \right)^{-1} f \mathrm{D}_{\rho^k} \right) \right\vert^2\\
            &= - \mathrm{R}_{S_-}(f^*)\otimes \mathrm{L}_{S_+}\left( \mathrm{D}^{-1}_{\rho^k}  f \mathrm{D}_{\rho^k} \right) +  \mathrm{L}_{S_+}\left( f^*  \mathrm{D}^{-1}_{\rho^k} f \mathrm{D}_{\rho^k}\right)\\
            & - \mathrm{R}_{S_-} \left(\mathrm{D}_{\rho^k} f \mathrm{D}^{-1}_{\rho^k} \right)\otimes \mathrm{L}_{S_+}\left( f^* \right) +  \mathrm{R}_{S_-}\left( \mathrm{D}_{\rho^k} f \mathrm{D}^{-1}_{\rho^k}f^*  \right)
        \end{aligned}
    \end{equation}
The second condition in Definition \ref{def:: generating set} ensures that every $f\in \mathcal{G}$ is an eigenvector under the conjugation by $\mathrm{D}_{\rho^k}$ with positive eigenvalue, hence 
\begin{equation}
    \begin{aligned}
        \mathrm{R}_{S_-}(f^*)\otimes \mathrm{L}_{S_+}\left( \mathrm{D}^{-1}_{\rho^k}  f \mathrm{D}_{\rho^k} \right) &= \mathrm{R}_{S_-} \left(\mathrm{D}^{1/2}_{\rho^k} f^* \mathrm{D}^{-1/2}_{\rho^k} \right)\otimes \mathrm{L}_{S_+}\left( \mathrm{D}^{-1/2}_{\rho^k}  f \mathrm{D}^{1/2}_{\rho^k} \right)\\
        &= \Theta\left( \mathrm{L}_{S_+}\left( \mathrm{D}^{-1/2}_{\rho^k}  f \mathrm{D}^{1/2}_{\rho^k} \right) \right)\otimes \mathrm{L}_{S_+}\left( \mathrm{D}^{-1/2}_{\rho^k}  f \mathrm{D}^{1/2}_{\rho^k} \right),
    \end{aligned}
\end{equation}
where the second equality follows from Lemma \ref{lemma:: reflection exchanges L and R}. 
Similarly, 
\begin{equation}
    \mathrm{R}_{S_-} \left(\mathrm{D}_{\rho^k} f \mathrm{D}^{-1}_{\rho^k} \right)\otimes \mathrm{L}_{S_+}\left( f^* \right) = \Theta\left( \mathrm{L}_{S_+}\left( \mathrm{D}^{-1/2}_{\rho^k}  f^* \mathrm{D}^{1/2}_{\rho^k} \right) \right)\otimes \mathrm{L}_{S_+}\left( \mathrm{D}^{-1/2}_{\rho^k}  f^* \mathrm{D}^{1/2}_{\rho^k} \right). 
\end{equation}
Since $\mathcal{G}$ is invariant under taking adjoint, 
\begin{equation}
    \begin{aligned}
        &- \sum_{f\in \mathcal{G}}\mathrm{R}_{S_-}(f^*)\otimes \mathrm{L}_{S_+}\left( \mathrm{D}^{-1}_{\rho^k}  f \mathrm{D}_{\rho^k} \right) - \sum_{f\in \mathcal{G}}\mathrm{R}_{S_-} \left(\mathrm{D}_{\rho^k} f \mathrm{D}^{-1}_{\rho^k} \right)\otimes \mathrm{L}_{S_+}\left( f^* \right)\\
        &= - 2 \sum_{f\in \mathcal{G}}\Theta\left( \mathrm{L}_{S_+}\left( \mathrm{D}^{-1/2}_{\rho^k}  f \mathrm{D}^{1/2}_{\rho^k} \right) \right)\otimes \mathrm{L}_{S_+}\left( \mathrm{D}^{-1/2}_{\rho^k}  f \mathrm{D}^{1/2}_{\rho^k} \right). 
    \end{aligned}
\end{equation}
Again by the invariance of $\mathcal{G}$ under adjoint:
\begin{equation}
    \Theta\left( \sum_{f\in \mathcal{G}}\mathrm{L}_{S_+}\left( f^*  \mathrm{D}^{-1}_{\rho^k} f \mathrm{D}_{\rho^k}\right)  \right) =  \sum_{f\in \mathcal{G}}\mathrm{R}_{S_-}\left( \mathrm{D}_{\rho^k}f \mathrm{D}^{-1}_{\rho^k}f^*\right).
\end{equation}
Thus, we finally obtain:
\begin{equation}\label{eqn:: reflection positive expansion of F_S}
    \begin{aligned}
        &\sum_{f\in \mathcal{G}}\left\vert \mathrm{R}_{S_-} \left( \mathrm{D}_{\rho^k} f \left( \mathrm{D}_{\rho^k} \right)^{-1} \right) - \mathrm{L}_{S_+}\left( \left( \mathrm{D}_{\rho^k} \right)^{-1} f \mathrm{D}_{\rho^k} \right) \right\vert^2 \\
        &= \sum_{f\in \mathcal{G}}\mathrm{L}_{S_+}\left( f^*  \mathrm{D}^{-1}_{\rho^k} f \mathrm{D}_{\rho^k} \right) + \Theta\left( \sum_{f\in \mathcal{G}}\mathrm{L}_{S_+}\left( f^*  \mathrm{D}^{-1}_{\rho^k} f \mathrm{D}_{\rho^k} \right) \right) \\
        & -2\sum_{f\in \mathcal{G}}\Theta\left( \mathrm{L}_{S_+}\left( \mathrm{D}^{-1/2}_{\rho^k}  f \mathrm{D}^{1/2}_{\rho^k} \right) \right)\otimes \mathrm{L}_{S_+}\left( \mathrm{D}^{-1/2}_{\rho^k}  f \mathrm{D}^{1/2}_{\rho^k} \right),
    \end{aligned}
\end{equation}
This has the form required by Proposition \ref{prop:: creterion for RP}, completing the proof. 
\end{proof}

\subsection{Realizing fusion spin chains}

To construct the boundary algebra, we recall the Osterwalder-Schrader (OS) reconstruction for reflection positive, frustration-free models from \cite[Section 5]{LiuZhao2025RPTO}. 
Consider a finite, reflection-symmetric region $\Lambda = \Lambda_{-} \sqcup \Lambda_{+}$. 
We define the entanglement support of $H_{\Lambda}$ as the range projection of the reduced density matrix on $\Lambda_{+}$:
\begin{equation}
    \widehat{\Pi}(\Lambda) = \range(\Tr_{\mathcal{H}_{\Lambda_{-}}}(\Pi(\Lambda))) \in \mathfrak{A}(\Lambda_{+}).
\end{equation}
The reflection positivity of $\Pi(\Lambda)$ ensures that the sesquilinear form on $\mathfrak{A}(\Lambda_{+})$ defined by
\begin{equation}
    \braket{Y, X}_{0} = \frac{1}{\Tr(\Pi(\Lambda))} \Tr(\Pi(\Lambda)(\Theta(Y) \otimes X))
\end{equation}
is positive semi-definite on $\mathfrak{A}(\Lambda_{+})$. The quotient space $\mathfrak{H}_{\Lambda} = \mathfrak{A}(\Lambda_{+}) / \ker\braket{\cdot, \cdot}_{0}$ forms a Hilbert space. Left multiplication by an operator $T \in \mathfrak{A}(\Lambda_{+})$ that preserves the kernel induces a well-defined operator $\Phi(T)$ on $\mathfrak{H}_{\Lambda}$ via $\Phi(T)\psi(X) = \psi(TX)$, where $\psi$ is the quotient map. The OS reconstruction yields a finite-dimensional $C^*$-algebra $\mathcal{M}_{\Lambda}$ generated by all such well-defined $\Phi(T)$. 
% Note that since only the ground state projection is involved, this construction does not not require the local Hamiltonians to be reflection positive. 

In \cite{LiuZhao2025RPTO} it was proved that $\mathcal{M}_{\Lambda}$ is canonically isomorphic to a corner of the interaction algebra \cite{zanardi2000StabilizingQuantumInformation} of the ground state projection $\Pi(\Lambda)$. 
Given a local operator $O$, and a subset $D\subseteq\Gamma$, the interaction algebra $\mathcal{A}_{D}(O)$ is defined as the $C^*$-algebra generated by all operators supported on $D$ that is of the form $\Tr_{\overline{D}}(QO)$, where $Q$ is an operator supported on $\overline{D}$. 
Alternatively, defining the local symmetry algebra as
\begin{equation}
    \mathrm{Sym}_{\Lambda_{+}}(H_{\Lambda}) = \{X \in \mathfrak{A}(\Lambda_{+}) : [I_{\Lambda_{-}} \otimes X, H_{\Lambda}] = [I_{\Lambda_{-}} \otimes X^\dagger, H_{\Lambda}] = 0 \},
\end{equation}
then the interaction algebra is given by the relative commutant $\mathcal{A}_{+}(H_{\Lambda}) = \mathrm{Sym}_{\Lambda_{+}}(H_{\Lambda})' \cap \mathfrak{A}(\Lambda_{+})$. 
In \cite[Proposition 3.17]{LiuZhao2025RPTO}, it is shown that the entanglement support $\widehat{\Pi}(\Lambda)$ is central in both $\mathrm{Sym}_{\Lambda_+}(H_{\Lambda})$ and $\mathrm{Sym}_{\Lambda_+}(\Pi(\Lambda))$, and moreover 
\begin{equation}
    \mathrm{Sym}_{\Lambda_+}(H_{\Lambda}) \widehat{\Pi}(\Lambda)= \mathrm{Sym}_{\Lambda_+}(\Pi(\Lambda))\widehat{\Pi}(\Lambda). 
\end{equation}
Taking commutant, we obtain:
\begin{equation}\label{eqn:: reduce interaction algebra to entanglement support}
    \mathcal{A}_{+}(H_{\Lambda})\widehat{\Pi}(\Lambda) = \mathcal{A}_{+}(\Pi(\Lambda))\widehat{\Pi}(\Lambda)
\end{equation}
Moreover, there is a $*$-isomorphism from $\mathcal{A}_{+}(H_{\Lambda})\widehat{\Pi}(\Lambda)$ to $\mathcal{M}_{\Lambda}$. 
Due to this result, we will base our discussions around $\mathcal{A}_{+}(\Pi(\Lambda))\widehat{\Pi}(\Lambda)$. 

Consider $\mathcal{X}$ to be a upward-directed set of finite symmetric regions in $\Gamma$. 
By a local net of $C^*$-algebras over $\mathcal{X}$, we mean a family of $C^*$-algebras $\{\mathcal{A}_{\Lambda}\}_{\Lambda\in \mathcal{X}}$, equipped with unital $*$-inclusions $\iota_{\Lambda_2,\Lambda_1}: \mathcal{A}_{\Lambda_1}\rightarrow \mathcal{A}_{\Lambda_2}$ for each pair $\Lambda_1\subseteq\Lambda_2$, such that 
\begin{description}
    \item[Isotony] For each $\Lambda_1\subseteq \Lambda_2\subseteq \Lambda_3$, $\iota_{\Lambda_3,\Lambda_1} = \iota_{\Lambda_3,\Lambda_2}\circ \iota_{\Lambda_2 ,\Lambda_1}$;
    \item[Locality] For each $\Lambda_1,\Lambda_2\subseteq \Lambda_3$ with $\Lambda_1\cap \Lambda_2 = \emptyset$, $[\iota_{\Lambda_3,\Lambda_1}(\mathcal{A}_{\Lambda_1}),\iota_{\Lambda_3,\Lambda_2}(\mathcal{A}_{\Lambda_2})] = 0$. 
\end{description}

With these data, one can define a quasi local $C^*$-algebra $\mathcal{A}$ is defined as the norm-closure of the inductive limit: 
\begin{equation}
    \mathcal{A} = \overline{\varinjlim_{\Lambda\in \mathcal{X}}\mathcal{A}_{\Lambda}}^{\|\cdot \|}. 
\end{equation}
which fits into the definition of discrete net of $C^*$-algebra in \cite{Jones2024DHRBimodules}. 

In \cite[Section 5]{LiuZhao2025RPTO}, it was shown that using the frustration-free property of the interaction, one can organize the family $\{\mathcal{A}_{+}(H_{\Lambda})\widehat{\Pi}(\Lambda)\}_{\Lambda\in \mathcal{X}}$ into a local net of $C^*$-algebras over $\mathcal{X}$. 
For any pair $\Lambda_1 \subseteq \Lambda_2\in \mathcal{X}$, it was shown that  
\begin{equation}
    [\mathcal{A}_{+}(H_{\Lambda_1})\widehat{\Pi}(\Lambda_1), \widehat{\Pi}(\Lambda_2)] = 0.
\end{equation}
This allows us to define the $*$-homomorphisms $\iota_{\Lambda_2,\Lambda_1}: \mathcal{A}_{+}(H_{\Lambda_1})\widehat{\Pi}(\Lambda_1) \rightarrow \mathcal{A}_{+}(H_{\Lambda_2})\widehat{\Pi}(\Lambda_2)$ via:
\begin{equation}\label{eqn: inclusion map for the boundary algebra}
    \iota_{\Lambda_2,\Lambda_1}(x) = (x \otimes \mathrm{I}_{\Lambda_{2,+} \setminus \Lambda_{1,+}})\widehat{\Pi}(\Lambda_2),\quad x \in \mathcal{A}_{+}(H_{\Lambda_1})\widehat{\Pi}(\Lambda_1). 
\end{equation}
The maps $\iota_{\Lambda_2,\Lambda_1}$ satisfy the isotony property: $\iota_{\Lambda_3,\Lambda_2} \circ \iota_{\Lambda_2,\Lambda_1} = \iota_{\Lambda_3,\Lambda_1}$. 
The map $\iota_{\Lambda_2,\Lambda_1}$ is injective when the family of ground state projections $\{\Pi(\Lambda)\}_{\Lambda\in \mathcal{X}}$ satisfies the following \emph{extendability condition}:
\begin{align}
    \range(\Tr_{\overline{\Lambda'_+}}\widehat{\Pi}(\Lambda)) = \widehat{\Pi}(\Lambda'),\quad \Lambda'\subseteq \Lambda.
\end{align}
Moreover, this net satisfies locality in the sense that 
\begin{equation}
    [\iota_{\Lambda_3,\Lambda_1}(\mathcal{A}_{\Lambda_1}),\iota_{\Lambda_3,\Lambda_2}(\mathcal{A}_{\Lambda_2})],
\end{equation} 
for any symmetric regions $\Lambda_1,\Lambda_2 \subseteq \Lambda_3$ with $\Lambda_1 \cap \Lambda_2 = \emptyset$. 

Another feature of the above construction is that the local modular flow extends consistently to a global one on the quasi-local algebra. 
For $\Lambda\in \mathcal{X}$, maximally mixed state in the ground state subspace of $H_{\Lambda}$ defines a the vacuum state on $\mathcal{A}_{+}(\Pi(\Lambda))\widehat{\Pi}(\Lambda)$ as
\begin{equation}
    \omega_{\Lambda}(x) = \frac{1}{\Tr(\Pi(\Lambda))}\Tr(\left( x\otimes \mathrm{I}_{\Lambda_-} \right)\Pi(\Lambda)),\quad x\in \mathcal{A}_{+}(\Pi(\Lambda))\widehat{\Pi}(\Lambda). 
\end{equation}

\begin{theorem}\cite[Theorem 5.21]{LiuZhao2025RPTO}\label{thm:: dynamics in the inductive limit}
    For each $\Lambda\in \mathcal{X}$, let $\sigma^{\Lambda}_t$ denote the modular automorphism group associated with the vacuum state $\omega_{\Lambda}$ on $\mathcal{A}_{+}(\Pi(\Lambda))\widehat{\Pi}(\Lambda)$.
    Then for all $\Lambda_1,\Lambda_2\in \mathcal{X}$ satisfying $\Lambda_1\subseteq \Lambda_2$, we have 
    \begin{equation}
        \sigma^{\Lambda_2}_t\circ \iota_{\Lambda_2,\Lambda_1} = \iota_{\Lambda_2,\Lambda_1}\circ \sigma^{\Lambda_1}_t,\quad \forall t\in\mathbb{R}. 
    \end{equation}
\end{theorem}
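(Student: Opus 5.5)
The plan is to reduce the statement to Takesaki's criterion, using the block structure of the finite-dimensional algebras involved. Write $\mathcal{A}_{\Lambda}=\mathcal{A}_{+}(\Pi(\Lambda))\widehat{\Pi}(\Lambda)$ and $\rho_{\Lambda}=\Tr_{\mathcal{H}_{\Lambda_-}}(\Pi(\Lambda))/\Tr(\Pi(\Lambda))$.

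\emph{Modular group as conjugation.} By definition, $\mathcal{A}_{+}(\Pi(\Lambda))$ contains every operator $\Tr_{\Lambda_-}((Q\otimes \mathrm{I})\Pi(\Lambda))$. Taking $Q=\mathrm{I}$ shows $\rho_{\Lambda}\in\mathcal{A}_{+}(\Pi(\Lambda))$. Its range projection is $\widehat{\Pi}(\Lambda)$, so $\rho_\Lambda$ is an invertible positive element of $\mathcal{A}_{\Lambda}$. For $x\in\mathcal{A}_\Lambda$ we have $\omega_{\Lambda}(x)=\Tr(x\rho_{\Lambda})$, so $\rho_\Lambda$ is the density of $\omega_\Lambda$ relative to $\Tr$ restricted to $\mathcal{A}_\Lambda$. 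Since $\rho_\Lambda$ itself lies in $\mathcal{A}_\Lambda$, no conditional expectation is needed, and
\[
\sigma^{\Lambda}_t(x)=\rho_{\Lambda}^{it}\,x\,\rho_{\Lambda}^{-it},
\]
with $\rho^{it}_\Lambda$ computed on $\widehat{\Pi}(\Lambda)$. The theorem is therefore equivalent to
\[
\rho_{\Lambda_2}^{it}\,(x\otimes \mathrm{I})\widehat{\Pi}(\Lambda_2)\,\rho_{\Lambda_2}^{-it}=(\rho_{\Lambda_1}^{it}x\rho_{\Lambda_1}^{-it}\otimes \mathrm{I})\widehat{\Pi}(\Lambda_2),\qquad x\in\mathcal{A}_{\Lambda_1}.
\]
It suffices to show that $\rho_{\Lambda_2}=(\rho_{\Lambda_1}\otimes \mathrm{I})\,T$ on $\widehat{\Pi}(\Lambda_2)$, where $T\ge 0$ commutes with both $\iota_{\Lambda_2,\Lambda_1}(\mathcal{A}_{\Lambda_1})$ and $\rho_{\Lambda_1}\otimes \mathrm{I}$. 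Then $\rho_{\Lambda_2}^{it}=(\rho^{it}_{\Lambda_1}\otimes \mathrm{I})T^{it}$, and the factor $T^{it}$ cancels against $T^{-it}$.

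\emph{Block structure.} Decompose $\widehat{\Pi}(\Lambda_1)\mathcal{H}_{\Lambda_{1,+}}=\bigoplus_\alpha K_\alpha\otimes L_\alpha$ so that $\mathcal{A}_{\Lambda_1}=\bigoplus_\alpha \mathcal{B}(K_\alpha)\otimes 1$ and the commutant $\mathrm{Sym}_{\Lambda_{1,+}}(\Pi(\Lambda_1))\widehat{\Pi}(\Lambda_1)=\bigoplus_\alpha 1\otimes\mathcal{B}(L_\alpha)$. The reflection $\Theta$ gives the mirror decomposition $\bigoplus_\beta \overline{K_\beta}\otimes\overline{L_\beta}$ on $\Lambda_{1,-}$.

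The structural claim I would prove is the following. Every vector in the range of $\Pi(\Lambda_1)$ has the form $\sum_\alpha \Omega_\alpha\otimes\zeta_\alpha$, where $\Omega_\alpha\in\overline{K_\alpha}\otimes K_\alpha$ is a \emph{fixed} vector whose reduced density on $K_\alpha$ is proportional to the $\alpha$-block $r_\alpha$ of $\rho_{\Lambda_1}$. In other words, the interaction algebra sees only a fixed $K$-part, and all freedom sits in the $L$-factors.

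I would derive this in three steps.
\begin{enumerate}
\item Since $\Pi(\Lambda_1)$ commutes with $\Theta(\mathrm{Sym}_+)\otimes\mathrm{Sym}_+$, it lies in $\bigoplus_{\alpha,\beta}\mathcal{B}(\overline{K_\beta}\otimes K_\alpha)\otimes 1$.
\item The entanglement-support and reflection-positivity results quoted from \cite{LiuZhao2025RPTO} force the blocks to pair $\beta=\alpha$.
\item $\mathcal{A}_+(\Pi)$ is by definition generated by the partial traces $\Tr_-((Q\otimes\mathrm{I})\Pi)$, and it is all of $\bigoplus_\alpha\mathcal{B}(K_\alpha)$. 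Because $\Pi$ is a projection, this forces each $K$-block of $\Pi$ to have rank one.
\end{enumerate}
This third step is where I expect the main obstacle. If rank one fails, I would fall back on showing only that each block is of the form $P_\alpha\otimes 1$ with $\Tr_{\overline{K_\alpha}}P_\alpha\propto r_\alpha$, which is all the argument needs.

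\emph{Conclusion via frustration-freeness.} By frustration-freeness (Corollary \ref{corollary:: frustration-freeness}), $\Pi(\Lambda_2)\le\Pi(\Lambda_1)\otimes \mathrm{I}$. Hence every ground state on $\Lambda_2$ has the form $\sum_\alpha\Omega_\alpha\otimes\zeta'_\alpha$, with $\zeta'_\alpha$ living on the $L$-factors and on $\Lambda_2\setminus\Lambda_1$. Tracing out $\Lambda_{2,-}$ gives
\[
\rho_{\Lambda_2}=\bigoplus_\alpha r_\alpha\otimes\tau_\alpha
\]
for positive $\tau_\alpha$ acting on $L_\alpha\otimes\mathcal{H}_{\Lambda_{2,+}\setminus\Lambda_{1,+}}$. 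This is exactly the factorization $\rho_{\Lambda_2}=(\rho_{\Lambda_1}\otimes \mathrm{I})T$ with $T=\bigoplus_\alpha 1_{K_\alpha}\otimes\tau_\alpha$. Since $T$ lives on the $L$-factors and the exterior, it commutes with $\iota_{\Lambda_2,\Lambda_1}(\mathcal{A}_{\Lambda_1})$ and with $\rho_{\Lambda_1}\otimes \mathrm{I}$. Compressing by $\widehat{\Pi}(\Lambda_2)$ is harmless because $\widehat{\Pi}(\Lambda_2)$ commutes with $\iota(\mathcal{A}_{\Lambda_1})$, as recalled above.

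An equivalent route is Takesaki's theorem. The decomposition above gives $\omega_{\Lambda_2}\circ\iota_{\Lambda_2,\Lambda_1}=\omega_{\Lambda_1}$ together with an $\omega_{\Lambda_2}$-preserving conditional expectation onto $\iota(\mathcal{A}_{\Lambda_1})$ (take the partial trace against $\tau_\alpha$). Either way, everything reduces to the structural claim about the range of $\Pi(\Lambda_1)$.
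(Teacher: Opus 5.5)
The paper gives no proof of this statement; it imports it from \cite[Theorem 5.21]{LiuZhao2025RPTO}. I therefore assess your argument on its own terms.

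\textbf{The outer reduction is correct.} The density $\rho_\Lambda$ lies in $\mathcal{A}_+(\Pi(\Lambda))\widehat{\Pi}(\Lambda)$ and is invertible there, so $\sigma^\Lambda_t=\mathrm{Ad}\,\rho_\Lambda^{it}$. Now grant that $\range\Pi(\Lambda_1)=\bigoplus_\alpha\Omega_\alpha\otimes(\overline{L_\alpha}\otimes L_\alpha)$ with fixed vectors $\Omega_\alpha$. Then $\Pi(\Lambda_2)\le\Pi(\Lambda_1)\otimes\mathrm{I}$, together with orthogonality of the sectors on $\Lambda_{1,-}$, does give $\rho_{\Lambda_2}=\bigoplus_\alpha r_\alpha\otimes\tau_\alpha$, and the phases cancel. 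The block-dependent scalars relating $r_\alpha$ to the $\alpha$-block of $\rho_{\Lambda_1}$ are central, hence harmless.

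\textbf{The gap is in establishing the structural claim.} Step 3 does not follow. Being a projection with $\mathcal{A}_+(\Pi)=\bigoplus_\alpha\mathcal{B}(K_\alpha)\otimes 1$ does not force rank-one $K$-blocks.
\begin{itemize}
\item Take $\mathcal{H}_\pm=\mathbb{C}^2$ and let $\Pi$ project onto $\mathrm{span}\{e_0\otimes e_0+e_1\otimes e_1,\ e_1\otimes e_0\}$. A direct check gives $\mathrm{Sym}_+(\Pi)=\mathbb{C}$ and $\Tr_-\Pi=\mathrm{diag}(3/2,1/2)$. So $\widehat\Pi=\mathrm{I}$ and there is a single block with $K=\mathbb{C}^2$, $L=\mathbb{C}$, yet $\Pi$ has rank two.
\item The pairing $\beta=\alpha$ of step 2 also fails without positivity. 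The projection $\mathrm{I}-|e_1\otimes e_1\rangle\langle e_1\otimes e_1|$ is symmetric under complex conjugation but not reflection positive. It has diagonal $\mathrm{Sym}_+$ and nonzero blocks $(\beta,\alpha)=(0,1),(1,0)$.
\end{itemize}

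Both properties are consequences of reflection positivity. The ingredient that delivers them at once is the Perron--Frobenius result the paper itself quotes in its last lemma \cite[Proposition 3.13, Theorem 3.18]{LiuZhao2025RPTO}. It says $\range\Pi=\{(\mathrm{I}\otimes W)\phi_{\mathrm{PF}}: W\in\mathrm{Sym}_+(\Pi)\widehat\Pi\}$, where $\Xi=\mathcal{O}(\phi_{\mathrm{PF}})$ is strictly positive on $\widehat\Pi\mathcal{H}_+$ and commutes with $\mathrm{Sym}_+(\Pi)\widehat\Pi=\bigoplus_\alpha 1_{K_\alpha}\otimes\mathcal{B}(L_\alpha)$.
\begin{itemize}
\item Hence $\Xi=\bigoplus_\alpha\xi_\alpha\otimes 1_{L_\alpha}$.
\item Therefore $\range\Pi=\bigoplus_\alpha\Omega_\alpha\otimes(\overline{L_\alpha}\otimes L_\alpha)$ with $\mathcal{O}(\Omega_\alpha)=\xi_\alpha$, and $\Omega_\alpha$ has reduced density $\xi_\alpha^2$ on $K_\alpha$.
\end{itemize}
This is exactly your claim. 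It also shows that your $\mathrm{Ad}\,\rho_\Lambda^{it}$ coincides with the $\mathrm{Ad}\,\Xi^{2it}$ formula the paper uses.

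Your fallback does not replace this. The condition $\Tr_{\overline{K_\alpha}}P_\alpha\propto r_\alpha$ holds automatically. And if $\mathrm{rank}\,P_\alpha>1$, a ground state on $\Lambda_2$ can pair different vectors of $\range P_\alpha$ with different vectors in $\overline{L_\alpha}\otimes L_\alpha\otimes\mathcal{H}_{\Lambda_2\setminus\Lambda_1}$. Then $\rho_{\Lambda_2}$ need not factor as $r_\alpha\otimes\tau_\alpha$.

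\textbf{The Takesaki variant contains a false step.} The claim $\omega_{\Lambda_2}\circ\iota_{\Lambda_2,\Lambda_1}=\omega_{\Lambda_1}$ fails in general. The vacuum state weights the central sectors by ground-space dimensions, and these change with the region. In the Fibonacci model with $h=1$, the two central projections of $\End(\tau^2)$ have $\omega_{\Lambda_I}$-weights in ratio $25:64$ for $|I|=2$, but $169:297$ after embedding into $|J|=3$. The restriction agrees with $\omega_{\Lambda_1}$ only up to a positive central density. That density leaves the modular group unchanged, so the variant can be repaired this way. Your direct conjugation argument does not depend on it.
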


Back to our model, we now explicitly compute the boundary net produced by OS reconstruction. 
We consider a symmetric regions $\Lambda = \Lambda_-\cup \Lambda_+$ depicted in Fig. \ref{fig:: reflection symmetric region}, such that $\Lambda_+$ is a region of shape $(l,h)$ with the bottom boundary lying on the reflection line: 
\begin{align}\label{fig:: reflection symmetric region}
    \vcenter{\hbox{%
\resizebox{0.55\columnwidth}{!}{%
\begin{tikzpicture}[
    x=0.74cm,
    y=0.74cm,
    line cap=round,
    line join=round
]
\definecolor{latticeblack}{RGB}{10,10,10}
\definecolor{edgeorange}{RGB}{205,125,8}
\definecolor{boundarygray}{RGB}{65,65,65}
\definecolor{separatorblue}{RGB}{35,45,105}
% Fixed clipping and bounding window.
\path[use as bounding box]
    (-0.30,-0.70) rectangle (12.00,9.10);
\clip
    (-0.30,-0.70) rectangle (12.00,9.10);
% Shift the complete picture to the right and upward.
% The leftmost and bottom orange edges are thereby exposed further.
\begin{scope}[xshift=0.45cm,yshift=0.25cm]
% =============================================================
% Orange vertical bonds
% =============================================================
\foreach \r in {-1,0,1,2,3} {
    \pgfmathsetmacro{\yb}{3*\r}
    % Vertical bonds in even columns.
    \foreach \x in {-2,0,2,4,6,8,10,12,14} {
        \draw[
            edgeorange,
            line width=0.95pt
        ]
            (\x,{\yb+0.5})
            --
            (\x,{\yb+1.5});
    }
    % Vertical bonds in odd columns.
    \foreach \x in {-3,-1,1,3,5,7,9,11,13,15} {
        \draw[
            edgeorange,
            line width=0.95pt
        ]
            (\x,{\yb+2})
            --
            (\x,{\yb+3});
    }
}
% =============================================================
% Black diagonal bonds
% =============================================================
\foreach \r in {-1,0,1,2,3} {
    \pgfmathsetmacro{\yb}{3*\r}
    \foreach \x in {-3,-1,1,3,5,7,9,11,13,15} {
        \draw[
            latticeblack,
            line width=0.85pt
        ]
            (\x,\yb)
            --
            ({\x-1},{\yb+0.5});
        \draw[
            latticeblack,
            line width=0.85pt
        ]
            (\x,\yb)
            --
            ({\x+1},{\yb+0.5});
        \draw[
            latticeblack,
            line width=0.85pt
        ]
            (\x,{\yb+2})
            --
            ({\x-1},{\yb+1.5});
        \draw[
            latticeblack,
            line width=0.85pt
        ]
            (\x,{\yb+2})
            --
            ({\x+1},{\yb+1.5});
    }
}
% =============================================================
% Black lattice vertices
% =============================================================
\foreach \r in {-1,0,1,2,3} {
    \pgfmathsetmacro{\yb}{3*\r}
    % Odd-column vertices.
    \foreach \x in {-3,-1,1,3,5,7,9,11,13,15} {
        \fill[latticeblack]
            (\x,\yb)
            circle[radius=2.25pt];
        \fill[latticeblack]
            (\x,{\yb+2})
            circle[radius=2.25pt];
        \fill[latticeblack]
            (\x,{\yb+3})
            circle[radius=2.25pt];
    }
    % Even-column vertices.
    \foreach \x in {-2,0,2,4,6,8,10,12,14} {
        \fill[latticeblack]
            (\x,{\yb+0.5})
            circle[radius=2.25pt];
        \fill[latticeblack]
            (\x,{\yb+1.5})
            circle[radius=2.25pt];
    }
}
% =============================================================
% Horizontal reflection line
% =============================================================
% The line y=4 bisects the orange vertical bonds running from
% y=3.5 to y=4.5.
\draw[
    separatorblue,
    line width=0.85pt,
    dash pattern=on 5pt off 6pt
]
    (-0.50,4.00)
    --
    (12.50,4.00);
% =============================================================
% Parallelogram coordinates
% =============================================================
% Every boundary passes through the midpoints of the lattice
% edges that it intersects.
\coordinate (L)  at (1.00,4.00);
\coordinate (R)  at (7.00,4.00);
\coordinate (UL) at (3.00,7.00);
\coordinate (UR) at (9.00,7.00);
\coordinate (LL) at (3.00,1.00);
\coordinate (LR) at (9.00,1.00);
% =============================================================
% Upper dark-gray dashed parallelogram
% =============================================================
\draw[
    boundarygray,
    line width=1.15pt,
    dash pattern=on 2.2pt off 3.4pt,
    rounded corners=5pt
]
    (L)
    --
    (UL)
    --
    (UR)
    --
    (R)
    --
    cycle;
% =============================================================
% Lower dark-gray dashed parallelogram
% =============================================================
\draw[
    boundarygray,
    line width=1.15pt,
    dash pattern=on 2.2pt off 3.4pt,
    rounded corners=5pt
]
    (L)
    --
    (R)
    --
    (LR)
    --
    (LL)
    --
    cycle;
% =============================================================
% Region labels
% =============================================================
\node[
    text=separatorblue,
    font=\large
] at (5.00,5.50)
    {$\Lambda_{+}$};
\node[
    text=separatorblue,
    font=\large
] at (5.00,2.50)
    {$\Lambda_{-}$};
\end{scope}
\end{tikzpicture}%
}%
}}
\end{align}

\begin{lemma}\label{lemma:: ground state projection of a reflection symmetric region}
    Consider a parallelogram $\Lambda_+$ of shape $(l,h)$, with the base point on the reflection line, define $\Lambda_- = \theta(\Lambda_+)$ and $\Lambda = \Lambda_-\cup \Lambda_+$ as shown in \eqref{fig:: reflection symmetric region}. 
    Denote by $\Lambda_h$ the region with shape $(l,1)$ above the reflection line. 
    Then for $l\geq k$, we have: 
    \begin{equation}
        \Pi(\Lambda) = \sum_{\eta\in \mathrm{ONB}(\rho^l)} \Pi(\Lambda_-)\Theta\left( \mathrm{L}_{\Lambda_h}\left( \Delta^{-1/2}_l\mathrm{D}^{-1/2}_{\rho^l} \eta \mathrm{D}^{1/2}_{\rho^l} \right) \right)\otimes \mathrm{L}_{\Lambda_h}\left( \Delta^{-1/2}_l\mathrm{D}^{-1/2}_{\rho^l} \eta \mathrm{D}^{1/2}_{\rho^l} \right)\Pi(\Lambda_+). 
    \end{equation}
    For $l<k$, we have $\Pi(\Lambda) = \Pi(\Lambda_-)\otimes \Pi(\Lambda_+)$. 
\end{lemma}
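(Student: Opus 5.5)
The plan is to treat the reflection line as one more seam between two adjacent rows, namely $\theta(\Lambda_h)$ just below the line and $\Lambda_h$ just above it. I would then reuse Proposition \ref{prop:: local ground states in a row} and Theorem \ref{thm:: local ground states in a parallelogram} essentially verbatim.

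First, the interaction terms. Every edge crossing the reflection line is a colored (vertical) edge, so it carries no $\mathrm{I}-Q_e$ term and no degrees of freedom of its own. The only terms of $H_\Lambda$ meeting both halves are therefore the $F_S$ with $S\in\mathcal{P}_{k-1}(\Lambda)$ straddling the line. Each such $S$ consists of $k-1$ consecutive plaquettes between the rows $\theta(\Lambda_h)$ and $\Lambda_h$, and it exists only if $l\geq k$. Hence for $l<k$ we get $H_\Lambda=H_{\Lambda_-}\otimes \mathrm{I}+\mathrm{I}\otimes H_{\Lambda_+}$. Both summands are frustration-free by Theorem \ref{thm:: local ground states in a parallelogram} (or by \eqref{eqn:: ground state projection in a row} when $h=1$), which gives $\Pi(\Lambda)=\Pi(\Lambda_-)\otimes\Pi(\Lambda_+)$.

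For $l\geq k$, I would note that the pair $\theta(\Lambda_h)\cup\Lambda_h$ has exactly the geometry of two adjacent rows of shape $(l,1)$ in Proposition \ref{prop:: local ground states in a row}. The evaluation maps and the left/right actions are defined for these row types, so that proposition gives
\[
\Pi(\theta(\Lambda_h)\cup\Lambda_h)=\mathrm{R}_{\theta(\Lambda_h)}(\Delta_l^{-1})\,\chi(\theta(\Lambda_h),\Lambda_h).
\]
As in the proof of Theorem \ref{thm:: local ground states in a parallelogram}, Corollary \ref{corollary:: commutativity between left/right actions} shows that this projection commutes with $\Pi(\Lambda_+)$ and with $\Pi(\Lambda_-)$. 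Indeed, the pair projection is built from $\mathrm{R}$ on the lower row and $\mathrm{L}$ on the upper row, while the neighbouring row-pair projections use the opposite action on these rows. Positivity of the terms then gives
\[
\Pi(\Lambda)=\Pi(\Lambda_-)\,\Pi(\theta(\Lambda_h)\cup\Lambda_h)\,\Pi(\Lambda_+),
\]
provided this product is nonzero. Nonvanishing follows from the isometry $\mathcal{U}_{2h:1}$ of Theorem \ref{thm:: local ground states in a parallelogram}, applied to vectors whose evaluations are all $\mathrm{Id}_{\rho^l}$; the isometry argument uses only the row-pair structure, not the parallelogram shape.

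It remains to rewrite $\chi$ in reflection-symmetric form. By Lemma \ref{lemma:: reflection exchanges L and R},
\[
\mathrm{R}_{\theta(\Lambda_h)}(\mathrm{D}_{\rho^l}\eta^*\mathrm{D}^{-1}_{\rho^l})=\Theta\big(\mathrm{L}_{\Lambda_h}(\mathrm{D}^{-1}_{\rho^l}\eta\mathrm{D}_{\rho^l})\big),
\]
so $\chi=\sum_\eta\Theta(\mathrm{L}_{\Lambda_h}(M\eta))\otimes\mathrm{L}_{\Lambda_h}(\eta)$ with $M(f)=\mathrm{D}^{-1}_{\rho^l}f\mathrm{D}_{\rho^l}$. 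Because $\Theta$ is antilinear, the map $\sum_\eta \overline{A\eta}\otimes B\eta$ is basis independent and equals $\sum_\eta\overline{\eta}\otimes BA^\dagger\eta$. A trace computation with $\braket{f,g}=\tr_{\mathcal{C}}(f^*\mathrm{D}_{\rho^l}g\mathrm{D}_{\rho^l})$ shows that $M$ is self-adjoint and positive, with square root $M^{1/2}(f)=\mathrm{D}^{-1/2}_{\rho^l}f\mathrm{D}^{1/2}_{\rho^l}$. Hence $M$ can be split evenly between the two factors. The central positive element $\Delta_l^{-1}$ can likewise be split as $\Delta_l^{-1/2}$ on each side. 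To do this I would rewrite $\mathrm{R}_{\theta(\Lambda_h)}(\Delta_l^{-1/2})=\Theta(\mathrm{L}_{\Lambda_h}(\Delta_l^{-1/2}))$, then move the other copy of $\Delta_l^{-1/2}$ across the seam using the pulling-through identity of Lemma \ref{lemma:: Key identity for local ground states} on the range of $\chi$, together with $\mathrm{D}^2_{\rho^l}\Delta_l\mathrm{D}^{-2}_{\rho^l}=\Delta_l$. This yields the stated formula.

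I expect the main obstacle to be this symmetrization step: keeping the antilinearity of $\Theta$ and the non-$*$ nature of $\pi_L,\pi_R$ straight, so that the $\mathrm{D}^{\pm1/2}$ and $\Delta_l^{-1/2}$ factors land exactly as in the statement. A secondary point to check carefully is that the seam row pair really fits the hypotheses of Proposition \ref{prop:: local ground states in a row}, including the orientation of the $\mathrm{L}$ and $\mathrm{R}$ actions.
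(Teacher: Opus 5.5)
Your proposal is correct and follows essentially the paper's route: the paper also applies Proposition~\ref{prop:: local ground states in a row} to the seam pair $\Lambda_0=\theta(\Lambda_h)\cup\Lambda_h$, writes $\Pi(\Lambda)=\Pi(\Lambda_-)\Pi(\Lambda_0)\Pi(\Lambda_+)$ as a product of commuting projections, and symmetrizes $\mathrm{R}_{\theta(\Lambda_h)}(\Delta_l^{-1})\chi(\theta(\Lambda_h),\Lambda_h)$ via Lemma~\ref{lemma:: reflection exchanges L and R} and the self-adjoint map $T(\eta)=\Delta_l^{-1/2}\mathrm{D}_{\rho^l}^{-1/2}\eta\mathrm{D}_{\rho^l}^{1/2}$ (its ``ONB resolution twice'' is your identity $\sum_\eta\overline{A\eta}\otimes B\eta=\sum_\eta\overline{\eta}\otimes BA^\dagger\eta$ with $A=B=T$). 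A few simplifications apply. First, the pulling-through step for $\Delta_l^{-1/2}$ is unnecessary, since left multiplication by this central element is already self-adjoint for $\langle\cdot,\cdot\rangle$ and commutes with $M^{1/2}$, so it can be absorbed into $T$ directly. Second, nonvanishing of the triple product is most simply taken from Corollary~\ref{corollary:: frustration-freeness}, because $\mathcal{U}_{2h:1}$ is only stated for parallelograms, whose rows all have the orientation of $\Lambda_h$ rather than that of $\theta(\Lambda_h)$. Third, for $l<k$ you need no frustration-freeness, and Theorem~\ref{thm:: local ground states in a parallelogram} in any case assumes $l\geq k$: the ground-state projection of $H_{\Lambda_-}\otimes\mathrm{I}+\mathrm{I}\otimes H_{\Lambda_+}$ is $\Pi(\Lambda_-)\otimes\Pi(\Lambda_+)$ regardless.
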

\begin{proof}
    Suppose first $l\geq k$. 
    Set $\Lambda_0=\theta(\Lambda_h)\cup\Lambda_h$, and write $\mathrm{D}=\mathrm{D}_{\rho^l}$.
    By Lemma \ref{lemma:: key central element}, $\Delta_l$ is strictly positive and central in $\End(\rho^l)$.
    Define $T:\End(\rho^l)\to\End(\rho^l)$ by
    \begin{equation}
        T(\eta)=\Delta_l^{-1/2}\mathrm{D}^{-1/2}\eta\mathrm{D}^{1/2}.
    \end{equation}
    With respect to the inner product on $\End(\rho^l)$ in \eqref{eqn:: resolution with respect to ONB}, cyclicity of $\tr_{\mathcal C}$ and centrality of $\Delta_l$ give
    \begin{equation}
        \braket{\xi,T(\eta)}
        =\tr_{\mathcal C}\bigl(\xi^*\Delta_l^{-1/2}\mathrm{D}^{1/2}\eta\mathrm{D}^{3/2}\bigr)
        =\braket{T(\xi),\eta},\qquad \xi,\eta\in\End(\rho^l).
    \end{equation}
    Thus $T$ is self-adjoint. Applying the orthonormal-basis resolution twice yields
    \begin{equation}
        \begin{aligned}
            \sum_{\eta\in\mathrm{ONB}(\rho^l)}T(\eta)^*\otimes T(\eta)
            &=\sum_{\eta\in\mathrm{ONB}(\rho^l)}T^2(\eta)^*\otimes\eta\\
            &=\sum_{\eta\in\mathrm{ONB}(\rho^l)}\mathrm{D}\eta^*\mathrm{D}^{-1}\Delta_l^{-1}\otimes\eta.
        \end{aligned}
    \end{equation}
    Proposition \ref{prop:: local ground states in a row}, \eqref{eqn:: def. of chi-operator}, and the fact that $\Delta_l$ is central in $\End(\rho^l)$ therefore give
    \begin{equation}
        \begin{aligned}
            \Pi(\Lambda_0)
            &=\sum_{\eta\in\mathrm{ONB}(\rho^l)}
                \mathrm{R}_{\theta(\Lambda_h)}\bigl(\mathrm{D}\eta^*\mathrm{D}^{-1}\Delta_l^{-1}\bigr)
                \otimes\mathrm{L}_{\Lambda_h}(\eta)\\
            &=\sum_{\eta\in\mathrm{ONB}(\rho^l)}
                \mathrm{R}_{\theta(\Lambda_h)}\bigl(T(\eta)^*\bigr)
                \otimes\mathrm{L}_{\Lambda_h}\bigl(T(\eta)\bigr)\\
            &=\sum_{\eta\in\mathrm{ONB}(\rho^l)}
                \Theta\bigl(\mathrm{L}_{\Lambda_h}(T(\eta))\bigr)
                \otimes\mathrm{L}_{\Lambda_h}(T(\eta)),
        \end{aligned}
    \end{equation}
    where we uses Lemma \ref{lemma:: reflection exchanges L and R} the last equality.

    Now partition $\Lambda$ into its $2h$ rows, with $\Lambda_0$ the adjacent pair crossing the reflection line.
    By the expression of ground state projection in a row \eqref{eqn:: ground state projection in a row} when $h=1$, and by Theorem \ref{thm:: local ground states in a parallelogram} when $h\geq 2$, the products of the adjacent-pair ground-state projections within $\Lambda_-$ and $\Lambda_+$ are $\Pi(\Lambda_-)$ and $\Pi(\Lambda_+)$, respectively, and hence
    \begin{equation}
        \Pi(\Lambda)=\Pi(\Lambda_-)\wedge \Pi(\Lambda_0)\wedge \Pi(\Lambda_+).
    \end{equation}
    By Proposition \ref{prop:: local ground states in a row}, the three projections commute mutually. 
    Thus $\Pi(\Lambda) = \Pi(\Lambda_-)\Pi(\Lambda_0)\Pi(\Lambda_+)$. 
    Substituting the preceding expression for $\Pi(\Lambda_0)$ proves the claim. 

    For $l<k$, no operator $F_S$ is supported in $\Lambda$.
    Hence $H_{\Lambda}=H_{\Lambda_-}\otimes \mathrm{I}_{\Lambda_+}+\mathrm{I}_{\Lambda_-}\otimes H_{\Lambda_+}$, and taking the ground-state projection gives $\Pi(\Lambda)=\Pi(\Lambda_-)\otimes\Pi(\Lambda_+)$.
\end{proof}

\begin{corollary}\label{corollary:: entanglement support as local ground state projection}
    For a reflection symmetric region $\Lambda = \Lambda_-\cup \Lambda_+$ as described in Lemma \ref{lemma:: ground state projection of a reflection symmetric region}, we have 
    \begin{equation}
        \widehat{\Pi}(\Lambda) = \Pi(\Lambda_+).
    \end{equation}
\end{corollary}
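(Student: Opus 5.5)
The plan is to prove the two inclusions $\widehat{\Pi}(\Lambda)\leq \Pi(\Lambda_+)$ and $\Pi(\Lambda_+)\leq \widehat{\Pi}(\Lambda)$, with the case $l<k$ treated separately.

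\emph{Case $l<k$.} Here Lemma \ref{lemma:: ground state projection of a reflection symmetric region} gives $\Pi(\Lambda)=\Pi(\Lambda_-)\otimes\Pi(\Lambda_+)$. Then $\Tr_{\mathcal{H}_{\Lambda_-}}\Pi(\Lambda)=\Tr(\Pi(\Lambda_-))\,\Pi(\Lambda_+)$. The prefactor is nonzero because $\Pi(\Lambda_-)=\Theta(\Pi(\Lambda_+))\neq 0$, by frustration-freeness (Theorem \ref{thm:: local ground states in a parallelogram} and Corollary \ref{corollary:: frustration-freeness}). So the range is exactly $\Pi(\Lambda_+)$.

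\emph{Case $l\geq k$, inclusion $\widehat{\Pi}(\Lambda)\leq\Pi(\Lambda_+)$.} By Lemma \ref{lemma:: ground state projection of a reflection symmetric region}, together with the mutual commutativity of $\Pi(\Lambda_-)$, $\Pi(\Lambda_0)$, $\Pi(\Lambda_+)$, we have $\Pi(\Lambda)=(\mathrm{I}\otimes\Pi(\Lambda_+))\Pi(\Lambda)(\mathrm{I}\otimes\Pi(\Lambda_+))$. Taking the partial trace over $\Lambda_-$ gives
\begin{equation*}
\Tr_{\Lambda_-}\Pi(\Lambda)=\Pi(\Lambda_+)\,\Tr_{\Lambda_-}(\Pi(\Lambda))\,\Pi(\Lambda_+),
\end{equation*}
so its range lies under $\Pi(\Lambda_+)$.

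\emph{Case $l\geq k$, reverse inclusion.} This is the main point. I would compute the reduced density matrix explicitly using the isometry $\mathcal{U}$ of Theorem \ref{thm:: local ground states in a parallelogram}, applied to the $2h$ rows of $\Lambda$. Under $\mathcal{U}_{2h:1}$, the ground space of $H_\Lambda$ is identified with $\Hom(\gamma^{2h}\rho^l,\rho^l\gamma^{2h})$, and the ground space of $H_{\Lambda_+}$ with $\Hom(\gamma^{h}\rho^l,\rho^l\gamma^{h})$. A vector in the big space decomposes as $g_-\diamond g_+$, summed over intermediate data. The cleanest route is:
\begin{enumerate}
\item Write $\Pi(\Lambda)=\sum_{\eta}\Theta(X_\eta)\otimes X_\eta$, where $X_\eta=\mathrm{L}_{\Lambda_h}(T(\eta))\Pi(\Lambda_+)$ and $T(\eta)=\Delta_l^{-1/2}\mathrm{D}_{\rho^l}^{-1/2}\eta\mathrm{D}_{\rho^l}^{1/2}$, conjugating by $\Pi(\Lambda_-)\otimes\Pi(\Lambda_+)$ using commutativity.
\item Compute
\begin{equation*}
\Tr_{\Lambda_-}\Pi(\Lambda)=\sum_{\eta,\eta'}c_{\eta\eta'}\,X_{\eta'}^\dagger X_{\eta},
\end{equation*}
with Gram coefficients $c_{\eta\eta'}$ coming from $\Tr(\Theta(X_{\eta'})^\dagger\Theta(X_\eta))$. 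By reflection positivity, this Gram matrix is positive.
\item Show the range contains all of $\Pi(\Lambda_+)$. Equivalently, for any nonzero $\psi\in\range\Pi(\Lambda_+)$ we need $\langle\psi|\Tr_{\Lambda_-}\Pi(\Lambda)|\psi\rangle>0$. It suffices to exhibit one $\phi$ in $\mathcal{H}_{\Lambda_-}$ with $\Pi(\Lambda)(\phi\otimes\psi)\neq 0$.
\end{enumerate}

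For step 3, I would choose $\phi=\Pi(\Lambda_-)\phi'$, where $\mathcal{U}$ sends $\phi'$ to a morphism built from $\mathrm{Id}_{\rho^l}$ tensored with identity strands on $\gamma$ (as at the end of the proof of Theorem \ref{thm:: local ground states in a parallelogram}). Then apply $\mathcal{U}_{2h:1}$: the image of $\Pi(\Lambda)(\phi\otimes\psi)$ should be a nonzero multiple of (roughly) $\mathrm{Id}\diamond\mathcal{U}(\psi)$, up to an invertible central factor involving $\Delta_l$. Under $\diamond$ with identity-type morphisms this is nonzero whenever $\mathcal{U}(\psi)\neq 0$.

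\emph{Expected obstacle.} The hard part is justifying that $\mathcal{U}_{2h:1}$ intertwines $\Pi(\Lambda_0)$ with gluing the two halves by $\diamond$ across the reflection line. For that I would reuse the inner-product computation of Proposition \ref{prop::sewing two rows} with the partial trace $\mathbb{E}$ over $\gamma^h$, exactly as in the inductive step of Theorem \ref{thm:: local ground states in a parallelogram}. If the $\phi$-choice argument becomes messy, a fallback is a dimension/faithfulness argument. Since $\mathcal{U}$ is isometric, the map $\psi\mapsto(\langle\phi|\otimes\mathrm{I})\Pi(\Lambda)$ restricted to ground states can be shown injective. This follows because $\diamond$-composing on the left with a family whose span contains the identity, with $\gamma$ strands labelled by the unit $\mathbb{1}$, recovers $\mathcal{U}(\psi)$.
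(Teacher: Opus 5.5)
Your treatment of the case $l<k$ and of the inclusion $\widehat{\Pi}(\Lambda)\leq\Pi(\Lambda_+)$ is fine and agrees with the paper. Your reduction of the reverse inclusion is also right: for each nonzero $\psi\in\range\Pi(\Lambda_+)$ it suffices to find one $\phi\in\mathcal{H}_{\Lambda_-}$ with $\Pi(\Lambda)(\phi\otimes\psi)\neq 0$.

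\textbf{The gap.} You never actually establish this nonvanishing.
\begin{itemize}
\item Your main route applies $\mathcal{U}_{2h:1}$ to all $2h$ rows of $\Lambda$, with target $\Hom(\gamma^{2h}\rho^l,\rho^l\gamma^{2h})$. But $\Lambda=\theta(\Lambda_+)\cup\Lambda_+$ is the chevron formed by a parallelogram and its mirror image, not a parallelogram of shape $(l,2h)$. So Theorem \ref{thm:: local ground states in a parallelogram} does not supply this isometry.
\item The step you flag yourself, that such an isometry intertwines $\Pi(\Lambda_0)$ with gluing across the reflection line, is left as ``should be a nonzero multiple of (roughly) $\mathrm{Id}\diamond\mathcal{U}(\psi)$''.
\item The vector $\Pi(\Lambda_-)\phi'$ is not the one that $\mathcal{U}$ sends to the identity-type morphism; that vector is $\prod\chi^{1/2}\phi'$.
\item $\Pi(\Lambda_0)=\mathrm{R}(\Delta_l^{-1})\chi$, while $\mathcal{U}$ is only defined on images of $\chi^{1/2}$. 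So at best a norm comparison comes out, not an identity of the kind you assert.
\item Your fallback injectivity argument is likewise only asserted.
\end{itemize}
As written, the core of the corollary is a heuristic.

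\textbf{The missing idea.} A better test vector makes your own step 1 sufficient: take $\phi=\widehat{\theta}_{\Lambda_+}\psi$. Since $\Pi(\Lambda_-)=\Theta(\Pi(\Lambda_+))$, this $\phi$ lies in $\range\Pi(\Lambda_-)$. Anti-unitarity gives $\langle\widehat{\theta}_{\Lambda_+}\psi|\Theta(X)|\widehat{\theta}_{\Lambda_+}\psi\rangle=\overline{\langle\psi|X|\psi\rangle}$. The formula of Lemma \ref{lemma:: ground state projection of a reflection symmetric region} then yields
\begin{equation*}
\langle\widehat{\theta}_{\Lambda_+}\psi\otimes\psi|\Pi(\Lambda)|\widehat{\theta}_{\Lambda_+}\psi\otimes\psi\rangle=\sum_{\eta\in\mathrm{ONB}(\rho^l)}\bigl|\langle\psi|\mathrm{L}_{\Lambda_h}(T(\eta))|\psi\rangle\bigr|^2 .
\end{equation*}
This sum is strictly positive for two reasons. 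First, the $T(\eta)$ span $\End(\rho^l)$. Second, $\langle\psi|\mathrm{L}_{\Lambda_h}(\mathrm{Id}_{\rho^l})|\psi\rangle=\|\psi\|^2$, because every ground state of $H_{\Lambda_+}$ lies in the string-net subspace of the bottom row. Hence $\langle\psi|\Tr_{\Lambda_-}\Pi(\Lambda)|\psi\rangle>0$ on $\range\Pi(\Lambda_+)$.

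This is the paper's argument. It needs no gluing isometry across the reflection line, and no Gram-matrix computation as in your step 2. As an aside, that Gram matrix is positive simply because it is a Gram matrix, not because of reflection positivity.
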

\begin{proof}
    By Corollary \ref{corollary:: frustration-freeness}, the Hamiltonian $H_{\Lambda}$ is frustration-free.
    Thus $\Pi(\Lambda)$ is a subprojection of $\mathrm{I}_{\Lambda_-}\otimes \Pi(\Lambda_+)$, so $\Pi(\Lambda_+)\Tr_{\Lambda_-}(\Pi(\Lambda))\Pi(\Lambda_+) = \Tr_{\Lambda_-}(\Pi(\Lambda))$.
    Hence $\widehat{\Pi}(\Lambda)\leq \Pi(\Lambda_+)$. 
    For the converse, if $l<k$, the preceding lemma gives $\Pi(\Lambda)=\Pi(\Lambda_-)\otimes\Pi(\Lambda_+)$. 
    Since $\Pi(\Lambda_-)$ is nonzero, the partial trace of this product has support $\Pi(\Lambda_+)$.

    Now suppose $l\geq k$, and let $\ket{\psi}\neq 0$ be a ground state of $H_{\Lambda_+}$.
    By Lemma \ref{lemma:: ground state projection of a reflection symmetric region},
    \begin{equation}
        \Braket{\widehat{\theta}_{\Lambda_+}\psi\otimes\psi|\Pi(\Lambda)|\widehat{\theta}_{\Lambda_+}\psi\otimes\psi}
        =\sum_{\eta\in\mathrm{ONB}(\rho^l)}\left|\Braket{\psi|\mathrm{L}_{\Lambda_h}(T(\eta))|\psi}\right|^2,
    \end{equation}
    where $T$ is defined in the proof of the preceding lemma.
    The right-hand side is strictly positive, since the elements $T(\eta)$ span $\End(\rho^l)$ and $\Braket{\psi|\mathrm{L}_{\Lambda_h}(\mathrm{Id}_{\rho^l})|\psi}=\|\psi\|^2>0$.
    Thus $\Tr_{\Lambda_-}(\Pi(\Lambda))$ is strictly positive on $\Pi(\Lambda_+)$, and we have $\widehat{\Pi}(\Lambda) = \Pi(\Lambda_+)$.
\end{proof}

\begin{proposition}\label{prop:: endomorphism algebras as boundary algebra}
    Let $\Lambda$ be as in Lemma \ref{lemma:: ground state projection of a reflection symmetric region}. 
    Then for $l\geq k$, the $C*$-algebra $\mathcal{A}_{+}(\Pi(\Lambda))\widehat{\Pi}(\Lambda)$ associated with $\Lambda$ is isomorphic to the $\End(\rho^l)$, with the $*$-isomorphism given by
    \begin{equation}
        \pi_0: f\mapsto \mathrm{L}_{\Lambda_h}\left(\mathrm{D}_{\rho^l}^{-1/2}f\mathrm{D}_{\rho^l}^{1/2}\right)\Pi(\Lambda_+),\quad f\in \End(\rho^l). 
    \end{equation}
    Whereas for $l<k$, we have $\mathcal{A}_{+}(\Pi(\Lambda))\widehat{\Pi}(\Lambda)=\mathbb{C}\Pi(\Lambda_+)$. 
\end{proposition}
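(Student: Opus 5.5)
The plan is to compute the interaction algebra $\mathcal{A}_+(\Pi(\Lambda))$ directly from the explicit formula for $\Pi(\Lambda)$ in Lemma \ref{lemma:: ground state projection of a reflection symmetric region}. Since $\widehat{\Pi}(\Lambda)=\Pi(\Lambda_+)$ by Corollary \ref{corollary:: entanglement support as local ground state projection}, the corner we want is $\mathcal{A}_+(\Pi(\Lambda))\Pi(\Lambda_+)$. By definition, $\mathcal{A}_+(\Pi(\Lambda))$ is generated by the slices $\Tr_{\Lambda_-}\bigl((Q\otimes \mathrm{I})\Pi(\Lambda)\bigr)$ with $Q\in\mathfrak{A}(\Lambda_-)$.

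Case $l<k$. Here $\Pi(\Lambda)=\Pi(\Lambda_-)\otimes\Pi(\Lambda_+)$. Every slice is therefore a scalar multiple of $\Pi(\Lambda_+)$, so the corner is $\mathbb{C}\Pi(\Lambda_+)$.

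Case $l\geq k$.

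\emph{Slices.} Write $\Pi(\Lambda)=\sum_\eta A_\eta\otimes B_\eta$, where
\[
A_\eta=\Pi(\Lambda_-)\Theta(\mathrm{L}_{\Lambda_h}(T(\eta))),\qquad B_\eta=\mathrm{L}_{\Lambda_h}(T(\eta))\Pi(\Lambda_+).
\]
Here $\Pi(\Lambda_-)$ commutes past the $\Theta(\cdot)$ factor and $\Pi(\Lambda_+)$ past the $\mathrm{L}$ factor, by the commutation used in that lemma. Each slice is then a linear combination $\sum_\eta \Tr(QA_\eta)B_\eta$. The $A_\eta$ should be linearly independent as $\eta$ ranges over $\mathrm{ONB}(\rho^l)$. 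To see this, apply $\Theta$ and the reflection symmetry. It then suffices that $f\mapsto \mathrm{L}_{\Lambda_h}(f)\Pi(\Lambda_+)$ is injective on $\End(\rho^l)$. This injectivity follows from the isometry $\mathcal{U}_{h:1}$ of Theorem \ref{thm:: local ground states in a parallelogram}, which intertwines $\mathrm{L}_{\Lambda_h}(f)$ with $f\diamond(\cdot)$: choose states whose image is $\mathrm{Id}_{\rho^l}\diamond\cdots$, so that $f\diamond$ acts faithfully. Hence the slices span exactly $\{\mathrm{L}_{\Lambda_h}(g)\Pi(\Lambda_+): g\in\End(\rho^l)\}$, and this set is already an algebra.

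\emph{The map $\pi_0$.} Next I check that $\pi_0$ is a $*$-isomorphism onto this algebra.

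Multiplicativity. The operator $\mathrm{L}_{\Lambda_h}(\cdot)$ is multiplicative, and $\Pi(\Lambda_+)$ commutes with $\mathrm{L}_{\Lambda_h}(f)$ (it is a product of $\chi$-type projections, which commute with the left action on the top row by Corollary \ref{corollary:: commutativity between left/right actions}). Conjugation by $\mathrm{D}^{1/2}$ is multiplicative, so $\pi_0$ is an algebra homomorphism.

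Injectivity. This follows from the faithfulness established above.

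The $*$-property is the main obstacle, because $\mathrm{L}$ is not a $*$-representation. By the formula for $\pi_L(f)^\dagger$,
\[
\mathrm{L}_{\Lambda_h}(g)^\dagger=\mathrm{L}_{\Lambda_h}(\mathrm{D}^{-1}g^*\mathrm{D})
\]
on the string-net subspace of the single row. This is the wrong twist. On $\Pi(\Lambda_+)$ the row $\Lambda_h$ is glued below to $\Lambda_{h-1}$, and there the inner product changes. The isometry $\mathcal{U}_{h:1}$ turns the adjoint into the adjoint for the inner product on $\Hom(\gamma^h\rho^l,\rho^l\gamma^h)$, where the $\mathrm{D}$-weights sit on both the top and bottom $\rho^l$ strands. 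Left multiplication $f\diamond$ there has adjoint $\mathrm{D}^{-1}f^*\mathrm{D}$, but the relevant vectors now lie in a range on which the pulling-through identity of Lemma \ref{lemma:: Key identity for local ground states} holds. I would verify the $*$-property by direct computation of $\braket{\mathcal{U}\psi', (g\diamond)\mathcal{U}\psi}$ in the formula \eqref{eqn:: skein module inner product with diamond}. The aim is to show that, after restricting to $\Pi(\Lambda_+)$, the adjoint of $\mathrm{L}_{\Lambda_h}(\mathrm{D}^{-1/2}f\mathrm{D}^{1/2})$ equals $\mathrm{L}_{\Lambda_h}(\mathrm{D}^{-1/2}f^*\mathrm{D}^{1/2})$. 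The symmetric half-power twist is exactly what makes both conjugations cancel. If the direct computation is messy, I would fall back on the abstract route: a faithful unital homomorphism from the $C^*$-algebra $\End(\rho^l)$ into a finite-dimensional $C^*$-algebra has image a $C^*$-algebra (the image is closed under adjoint since it equals the slice span, which is $*$-closed because $\Pi(\Lambda)$ is self-adjoint). One then checks that positivity is preserved on the generators $T(\eta)^*T(\eta)$, using the symmetric form of $\Pi(\Lambda)$ in the lemma.

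Finally, the corner $\mathcal{A}_+(\Pi(\Lambda))\Pi(\Lambda_+)$ is the algebra generated by the slices times $\Pi(\Lambda_+)$. This is the image of $\pi_0$, which gives the stated isomorphism with $\End(\rho^l)$.
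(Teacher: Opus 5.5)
Your outline follows the paper's proof. The shared steps are: the splitting $\Pi(\Lambda)=\sum_\eta \Theta(\mathrm{L}_{\Lambda_h}(T(\eta))\Pi(\Lambda_+))\otimes \mathrm{L}_{\Lambda_h}(T(\eta))\Pi(\Lambda_+)$; faithfulness of $f\mapsto \mathrm{L}_{\Lambda_h}(f)\Pi(\Lambda_+)$ via the identity in the tensor-unit summand; linear independence of the reflected factors, so that the slices span $\{\mathrm{L}_{\Lambda_h}(g)\Pi(\Lambda_+)\}$; and the product form for $l<k$.

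The gap is the $*$-property of $\pi_0$, which is part of the statement. You leave it unproved and rest your plan on a mistaken premise. You say that $\mathrm{L}_{\Lambda_h}(g)^\dagger=\mathrm{L}_{\Lambda_h}(\mathrm{D}^{-1}g^*\mathrm{D})$, with $\mathrm{D}=\mathrm{D}_{\rho^l}$, is ``the wrong twist'' and that ``the inner product changes'' after gluing. Neither is true.
\begin{itemize}
\item Since $\mathrm{L}_{\Lambda_h}(g)=\mathrm{eval}_{\Lambda_h}^\dagger\pi_L(g)\mathrm{eval}_{\Lambda_h}$, the adjoint formula is an operator identity on all of $\mathcal{H}_{\Lambda_h}$. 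It therefore holds on $\mathcal{H}_{\Lambda_+}$ after tensoring with the identity.
\item Adjoints are always taken in the fixed inner product of $\mathcal{H}_{\Lambda_+}$. Because $\mathrm{L}_{\Lambda_h}(g)$ commutes with the self-adjoint projection $\Pi(\Lambda_+)$, one gets $(\mathrm{L}_{\Lambda_h}(g)\Pi(\Lambda_+))^\dagger=\mathrm{L}_{\Lambda_h}(\mathrm{D}^{-1}g^*\mathrm{D})\Pi(\Lambda_+)$.
\item Putting $g=\mathrm{D}^{-1/2}f\mathrm{D}^{1/2}$ gives $\mathrm{D}^{-1}\mathrm{D}^{1/2}f^*\mathrm{D}^{-1/2}\mathrm{D}=\mathrm{D}^{-1/2}f^*\mathrm{D}^{1/2}$, that is, $\pi_0(f)^\dagger=\pi_0(f^*)$.
\end{itemize}
This one line is how the paper finishes. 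Your proposed computation through $\mathcal{U}_{h:1}$ would only reproduce the same formula, since $f\diamond$ has adjoint $(\mathrm{D}^{-1}f^*\mathrm{D})\diamond$ there, as you yourself note.

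Your fallback would not close the gap. A faithful unital homomorphism from $\End(\rho^l)$ with $*$-closed image need not be $*$-preserving. The untwisted map $f\mapsto \mathrm{L}_{\Lambda_h}(f)\Pi(\Lambda_+)$ has exactly the same image as $\pi_0$. By the formula above it is $*$-preserving only if $\mathrm{D}_{\rho^l}$ is central in $\End(\rho^l)$, which fails, for example, when $\rho$ has simple summands of different dimensions. Checking positivity on the finitely many elements $T(\eta)^*T(\eta)$ does not make a map positive. Abstract structure theory would only give that \emph{some} $*$-isomorphism exists, not that $\pi_0$ is one.

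A smaller point concerns orientation. $\Lambda_h$ is the bottom row of $\Lambda_+$, adjacent to the reflection line. $\mathrm{L}_{\Lambda_h}(f)$ commutes with $\Pi(\Lambda_+)$ because $\Lambda_h$ enters $\Pi(\Lambda_+)$ only through the right actions in $\chi(\Lambda_h,\Lambda_{h-1})$ and $\mathrm{R}_{\Lambda_h}(\Delta_l^{-1})$. Your phrase ``left action on the top row'' reverses this. For $h\geq 2$ the commutation would actually fail on the top row, since $\chi(\Lambda_2,\Lambda_1)$ acts there through left actions.
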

\begin{proof}
    Suppose first that $l\geq k$, and write $\mathrm{D}=\mathrm{D}_{\rho^l}$.
    For every $f\in\End(\rho^l)$, the operator $\mathrm{L}_{\Lambda_h}(f)$ commutes with $\Pi(\Lambda_+)$.
    When $h=1$, this follows from $\Pi(\Lambda_+)=\mathrm{L}_{\Lambda_h}(\mathrm{Id}_{\rho^l})$; when $h\geq2$, the adjacent-pair factorization in Theorem \ref{thm:: local ground states in a parallelogram} acts on $\Lambda_h$ only through right actions, which commute with left actions by Corollary \ref{corollary:: commutativity between left/right actions}.
    Moreover, $\mathrm{L}_{\Lambda_h}(\mathrm{Id}_{\rho^l})\Pi(\Lambda_+)=\Pi(\Lambda_+)$ because every upper-half ground state belongs to the bottom-row string-net subspace.
    Thus, for $f,g\in\End(\rho^l)$, the row-action formulas give
    \begin{equation}
        \begin{aligned}
            \mathrm{L}_{\Lambda_h}(f)\Pi(\Lambda_+)\mathrm{L}_{\Lambda_h}(g)\Pi(\Lambda_+)&=\mathrm{L}_{\Lambda_h}(fg)\Pi(\Lambda_+),\\
            \bigl(\mathrm{L}_{\Lambda_h}(f)\Pi(\Lambda_+)\bigr)^\dagger&=\mathrm{L}_{\Lambda_h}(\mathrm{D}^{-1}f^*\mathrm{D})\Pi(\Lambda_+).
        \end{aligned}
    \end{equation}

    The compressed action is faithful.
    For $h=1$, the surjective row evaluation intertwines it with left multiplication on $\Hom(\gamma\rho^l,\rho^l\gamma)$, whose tensor-unit summand contains $\mathrm{Id}_{\rho^l}$.
    For $h\geq2$, the isometry $\mathcal{U}_{h:1}$ in Theorem \ref{thm:: local ground states in a parallelogram} has the same intertwining property, since $\mathrm{L}_{\Lambda_h}(f)$ commutes with the adjacent-pair operators meeting $\Lambda_h$.
    Its image contains $\mathrm{Id}_{\rho^l}$ in the tensor-unit summand of $\gamma^h$: choose every row evaluation to be the identity there.
    In either case, applying the compressed action of $f$ to a preimage of this identity gives $f$ in the same summand, so the action is faithful.

    Let $T(f)=\Delta_l^{-1/2}\mathrm{D}^{-1/2}f\mathrm{D}^{1/2}$, as in the proof of Lemma \ref{lemma:: ground state projection of a reflection symmetric region}.
    Since reflection sends $\Pi(\Lambda_+)$ to $\Pi(\Lambda_-)$, commutation with the row action rewrites that lemma as
    \begin{equation}
        \Pi(\Lambda)=\sum_{\eta\in\mathrm{ONB}(\rho^l)}\Theta\bigl(\mathrm{L}_{\Lambda_h}(T(\eta))\Pi(\Lambda_+)\bigr)\otimes\mathrm{L}_{\Lambda_h}(T(\eta))\Pi(\Lambda_+).
    \end{equation}
    The map $T$ is invertible, so faithfulness shows that the reflected operators $\Theta\bigl(\mathrm{L}_{\Lambda_h}(T(\eta))\Pi(\Lambda_+)\bigr)$ are linearly independent.
    Hence the nondegenerate trace pairing lets $Q\in\mathfrak{A}(\Lambda_-)$ prescribe their coefficients independently in
    \begin{equation}
        \Tr_{\Lambda_-}\bigl((Q\otimes\mathrm{I}_{\Lambda_+})\Pi(\Lambda)\bigr)
        =\sum_{\eta\in\mathrm{ONB}(\rho^l)}\Tr\bigl(Q\Theta(\mathrm{L}_{\Lambda_h}(T(\eta))\Pi(\Lambda_+))\bigr)\mathrm{L}_{\Lambda_h}(T(\eta))\Pi(\Lambda_+).
    \end{equation}
    These partial traces therefore span $\{\mathrm{L}_{\Lambda_h}(f)\Pi(\Lambda_+):f\in\End(\rho^l)\}$.
    This span is a $C^*$-algebra with unit $\Pi(\Lambda_+)$ by the formulas above, and Corollary \ref{corollary:: entanglement support as local ground state projection} gives
    \begin{equation}
        \mathcal{A}_+(\Pi(\Lambda))\widehat{\Pi}(\Lambda)=\{\mathrm{L}_{\Lambda_h}(f)\Pi(\Lambda_+):f\in\End(\rho^l)\}.
    \end{equation}
    Equation \eqref{eqn:: reduce interaction algebra to entanglement support} identifies this with the boundary algebra of $H_\Lambda$.
    Finally, the map $\pi_0$ in the statement is multiplicative, faithful, and onto this algebra, and taking $\mathrm{D}^{-1/2}f\mathrm{D}^{1/2}$ in the adjoint formula above shows that $\pi_0(f)^\dagger=\pi_0(f^*)$.

    For $l<k$, Lemma \ref{lemma:: ground state projection of a reflection symmetric region} gives $\Pi(\Lambda)=\Pi(\Lambda_-)\otimes\Pi(\Lambda_+)$. Hence, for every $Q\in\mathfrak{A}(\Lambda_-)$,
    \begin{equation}
        \Tr_{\Lambda_-}\bigl((Q\otimes\mathrm{I}_{\Lambda_+})\Pi(\Lambda)\bigr)
        =\Tr\bigl(Q\Pi(\Lambda_-)\bigr)\Pi(\Lambda_+).
    \end{equation}
    Taking $Q=\mathrm{I}_{\Lambda_-}$ shows that $\widehat{\Pi}(\Lambda)=\Pi(\Lambda_+)$. 
    Thus, by the definition of the interaction algebra, $\mathcal{A}_+(\Pi(\Lambda))\widehat{\Pi}(\Lambda)=\mathbb{C}\Pi(\Lambda_+)$.
\end{proof}
\begin{remark}
    Define $\Lambda_0 = \theta(\Lambda_h)\cup \Lambda_h$. 
    Notice that the above Proposition implies that the map $\mathcal{A}_+(\Pi(\Lambda_0))\widehat{\Pi}(\Lambda_0)\rightarrow \mathcal{A}_+(\Pi(\Lambda))\widehat{\Pi}(\Lambda) $ defined by compressing using $\widehat{\Pi}(\Lambda)=\Pi(\Lambda_+)$ is a $*$-\emph{isomorphism}. 
    This doesn't happen by chance, and is  rooted in the locality of the interaction, as shown in \cite[Theorem 5.25]{LiuZhao2025RPTO}. 
    In this sense, the non-trivial operators in the reconstructed local net appear only near the boundary, hence the name "boundary algebra".  
\end{remark}

Now we recall fusion spin chains. 
Denote by the set of finite intervals in $\mathbb{Z}$ by $\mathcal{I}(\mathbb{Z})$. 
Given an object $\rho$ in a unitary fusion category $\mathcal{C}$, 
the fusion spin chain is a local net of $C^*$-algebras $\{\mathcal{A}(\mathcal{C}, \rho)_I\}_{I\in \mathcal{I}(\mathbb{Z})}$ over $\mathcal{I}(\mathbb{Z})$. 
For the interval $I=[a,b]$, $a\leq b$, the algebra $\mathcal{A}(\mathcal{C}, \rho)_I$ is defined as 
\begin{equation}
    \mathcal{A}(\mathcal{C}, \rho)_I = \End(\rho^{ (b-a+1)}). 
\end{equation}
For $I=[a,b]\subseteq J=[c,d]$, the inclusion is given by tensoring with the identity morphism of tensor powers of $\rho$ out side $I$:
\begin{equation}
    \End(\rho^{ (b-a+1)}) \ni x \mapsto \mathrm{Id}_{\rho^{a-c}}\otimes x \otimes \mathrm{Id}_{\rho^{d-b}} \in \End(\rho^{ (d-c+1)}). 
\end{equation}

Finally, we need the notion of locality-preserving isomorphism. 
For a interval $I = [a,b]\subset \mathbb{Z}$ and $r\geq 0$, we define $I^{+r} = [a-r,b+r]$. 

\begin{definition}[cf. \cite{Jones2024DHRBimodules}]
    Given $\{\mathcal{A}_I\}_I$ and $\{\mathcal{B}_I\}_I$ net of $C^*$-algebras over $\mathbb{Z}$ with inclusion maps $\iota^{\mathcal{A}}_{J,I}$ and $\iota^{\mathcal{B}}_{J,I}$ respectively. 
    A bounded spread isomorphism is defined by two systems of $*$-homomorphisms $\alpha_I: \mathcal{A}_I\rightarrow \mathcal{B}_{I^{+r}}$, $\beta_I: \mathcal{B}_I\rightarrow \mathcal{A}_{I^{+r}}$ such that 
    \begin{enumerate}
        \item for any $I\subseteq J$, $\iota^{\mathcal{B}}_{J^{+r},I^{+r}}\circ \alpha_I = \alpha_J\circ \iota^{\mathcal{A}}_{J,I}$;
        \item for any $I\subseteq J$, $\iota^{\mathcal{A}}_{J^{+r},I^{+r}}\circ \beta_I = \beta_J\circ \iota^{\mathcal{B}}_{J,I}$;
        \item for each $I$, we have
        \begin{equation}\label{isomorphism between inductive systems}
            \beta_{I^{+r}}\circ\alpha_I = \iota^{\mathcal{A}}_{I^{+2r},I},\quad \alpha_{I^{+r}}\circ\beta_I = \iota^{\mathcal{B}}_{I^{+2r},I}. 
        \end{equation}
    \end{enumerate}
\end{definition}

Now we show the equivalence between the boundary algebra derived from the model and the fusion spin chain $\mathcal{A}(\mathcal{C},\rho)$. 
We order the vertices of the dual graph of $\Gamma$ on the reflection line from left to right and label them by $\mathbb{Z}$, and fix $h>0$. 
For each interval $I=[a,b]\in \mathcal{I}(\mathbb{Z})$, let $\Lambda^+_I$ be a parallelogram of shape $(b-a+1,h)$ with the base point on the reflection line being $a$. 
Define $\Lambda_I = \Lambda^+_I \cup \theta(\Lambda^+_I)$, then $\{\Lambda_I\}_{I\in \mathcal{I}(\mathbb{Z})}$ is a up-ward directed family of finite symmetric regions in $\Gamma$. 
\begin{theorem}\label{thm:: fusion spin chain as boundary algebra}
    Let $\varPhi$ in \ref{eqn: interaction of new model} be defined by $\rho\in \mathcal{C}$, with a weight $k$ generating set $\mathcal{G}$. 
    Then the local net of $C^*$-algebras $\{\mathcal{A}_+(\Pi(\Lambda_I))\widehat{\Pi}(\Lambda_I)\}_{I\in \mathcal{I}(\mathbb{Z})}$ is isomorphic to the fusion spin chain $\mathcal{A}(\mathcal{C},\rho)$ through a bounded spread isomorphism with $r = k$. 
\end{theorem}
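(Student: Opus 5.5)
The plan is to build the two systems of maps from Proposition \ref{prop:: endomorphism algebras as boundary algebra}, and then to verify the compatibility and round-trip conditions using the propagation formula of Lemma \ref{lemma:: propagate actions to bigger regions}.

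Write $\mathcal{B}_I=\mathcal{A}_+(\Pi(\Lambda_I))\widehat{\Pi}(\Lambda_I)$ and $|I|=b-a+1$.

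\textbf{Step 1 (the map $\alpha$).} For $|I^{+r}|\geq k$, define
\begin{equation*}
\alpha_I:\End(\rho^{|I|})\to\mathcal{B}_{I^{+r}},\qquad \alpha_I(x)=\pi_0^{I^{+r}}\bigl(\mathrm{Id}_{\rho^r}\otimes x\otimes\mathrm{Id}_{\rho^r}\bigr),
\end{equation*}
where $\pi_0^{J}$ is the $*$-isomorphism $\End(\rho^{|J|})\cong\mathcal{B}_J$ of Proposition \ref{prop:: endomorphism algebras as boundary algebra}. Since $r=k$, we always have $|I^{+r}|\geq 2k+1>k$, so $\pi_0^{I^{+r}}$ is always an isomorphism and never the degenerate $\mathbb{C}\Pi(\Lambda^+_J)$ case. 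Since $\mathrm{D}_{\rho^l}=\mathrm{D}_\rho^{\otimes l}$ factorizes, conjugation by $\mathrm{D}^{\pm1/2}$ commutes with tensoring by identities. Hence $\alpha_I$ is a unital $*$-homomorphism.

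\textbf{Step 2 (the map $\beta$).} For $|I|\geq k$, set $\beta_I=\iota^{\mathcal{A}}_{I^{+r},I}\circ(\pi_0^{I})^{-1}$. For $|I|<k$, $\mathcal{B}_I=\mathbb{C}\Pi(\Lambda_I^+)$, and $\beta_I$ sends $\Pi(\Lambda_I^+)$ to the identity.

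\textbf{Step 3 (compatibility with the net inclusions).} Everything reduces to one claim: for $I\subseteq J$ with $|I|\geq k$,
\begin{equation*}
\iota_{\Lambda_J,\Lambda_I}\bigl(\pi_0^I(x)\bigr)=\pi_0^J\bigl(\mathrm{Id}\otimes x\otimes\mathrm{Id}\bigr).
\end{equation*}
By \eqref{eqn: inclusion map for the boundary algebra}, the left side equals $\bigl(\mathrm{L}_{\Lambda_{I,h}}(\mathrm{D}^{-1/2}x\mathrm{D}^{1/2})\Pi(\Lambda_I^+)\otimes\mathrm{I}\bigr)\Pi(\Lambda_J^+)$.

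\emph{Removing $\Pi(\Lambda_I^+)$.} Frustration-freeness (Theorem \ref{thm:: local ground states in a parallelogram}) gives $\Pi(\Lambda_I^+)\Pi(\Lambda_J^+)=\Pi(\Lambda_J^+)$, which lets us drop the inner projection. To make this legitimate, I first need $\mathrm{L}_{\Lambda_{I,h}}(\cdot)$ to commute with $\Pi(\Lambda_I^+)$, which was shown in the proof of Proposition \ref{prop:: endomorphism algebras as boundary algebra}.

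\emph{Enlarging the row.} Next I replace $\mathrm{L}_{\Lambda_{I,h}}$ by the left action on the larger bottom row $\Lambda_{J,h}$. On the row string-net subspace, Lemma \ref{lemma:: propagate actions to bigger regions} gives exactly $\mathrm{L}_{\Lambda_{J,h}}(\mathrm{Id}\otimes y\otimes\mathrm{Id})$, and $\Pi(\Lambda_J^+)$ projects into that subspace.

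\emph{The geometric obstacle.} Here I expect the main difficulty. The parallelograms $\Lambda_I^+$ and $\Lambda_J^+$ are sheared, so $\Lambda_I^+$ is not literally a sub-parallelogram of $\Lambda_J^+$ aligned with the upper rows. I would handle this by noting that only the bottom row enters $\pi_0$, and by checking that frustration-freeness still yields $\Pi(\Lambda_I^+)\geq\Pi(\Lambda_J^+)$ whenever the regions are nested as sets. If the regions are not nested, I would enlarge $h$ for $J$, or use the same-$h$ family and argue via Corollary \ref{corollary:: frustration-freeness}.

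\emph{Consequences.} With the claim in hand, conditions 1 and 2 become identities among iterated tensorings with identities, and injectivity of $\iota$ follows since both sides are isomorphic images. The small intervals $|I|<k$ in condition 2 are trivial, because only scalars are involved.

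\textbf{Step 4 (round trips).} For condition 3, compute $\beta_{I^{+r}}\circ\alpha_I(x)=\iota^{\mathcal{A}}_{I^{+2r},I^{+r}}(\mathrm{Id}_{\rho^r}\otimes x\otimes\mathrm{Id}_{\rho^r})=\iota^{\mathcal{A}}_{I^{+2r},I}(x)$. For the other direction, $\alpha_{I^{+r}}\circ\beta_I(\pi_0^I(x))=\pi_0^{I^{+2r}}(\mathrm{Id}_{\rho^{2r}}\otimes x\otimes\mathrm{Id}_{\rho^{2r}})$, which equals $\iota_{\Lambda_{I^{+2r}},\Lambda_I}(\pi_0^I(x))$ by the claim in Step 3. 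The case $|I|<k$ is immediate.
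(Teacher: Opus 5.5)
Your proposal is correct and follows essentially the paper's route. Both arguments reduce everything to the compatibility of the isomorphisms of Proposition \ref{prop:: endomorphism algebras as boundary algebra} with the two families of inclusions, proved from frustration-freeness, the bottom-row string-net subspace, and Lemma \ref{lemma:: propagate actions to bigger regions}. Your map from the boundary net to $\mathcal{A}(\mathcal{C},\rho)$, written as (inclusion)$\circ\pi_I^{-1}$, agrees by that compatibility with the paper's $\pi_{I^{+k}}^{-1}\circ$(inclusion). The paper's form has the advantage of needing no case split for $|I|<k$.

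The ``geometric obstacle'' you anticipate does not occur. All $\Lambda^+_I$ have the same height $h$ and the same slant, so $I\subseteq J$ gives $\Lambda^+_I\subseteq\Lambda^+_J$, with each row of $\Lambda^+_I$ a subrow of the corresponding row of $\Lambda^+_J$. Your fallback of enlarging $h$ is therefore unnecessary, and it would in fact leave the fixed-$h$ family.
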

\begin{proof}
    Write $\mathcal{A}_I=\mathcal{A}_+(\Pi(\Lambda_I))\widehat{\Pi}(\Lambda_I)$ and $\mathcal{B}_I=\mathcal{A}(\mathcal{C},\rho)_I$.
    For $|I|\geq k$, denote the isomorphism in Proposition \ref{prop:: endomorphism algebras as boundary algebra} by $\pi_{I}:\mathcal{B}_I\rightarrow\mathcal{A}_I$.
    We first verify that these isomorphisms respect inclusions.
    Let $I=[a,b]\subseteq J=[c,d]$, with $|I|\geq k$, and let $f\in\End(\rho^{|I|})$.
    Since $\mathrm{D}_{\rho^n}=\mathrm{D}_{\rho}^{\otimes n}$, we have
    \begin{equation}
        \mathrm{D}_{\rho^{|J|}}^{-1/2}\iota^{\mathcal{B}}_{J,I}(f)\mathrm{D}_{\rho^{|J|}}^{1/2}
        =\mathrm{Id}_{\rho^{a-c}}\otimes\left(\mathrm{D}_{\rho^{|I|}}^{-1/2}f\mathrm{D}_{\rho^{|I|}}^{1/2}\right)\otimes\mathrm{Id}_{\rho^{d-b}}.
    \end{equation}
    Frustration-freeness of the interaction gives $(\Pi(\Lambda_I^+)\otimes\mathrm{I})\Pi(\Lambda_J^+)=\Pi(\Lambda_J^+)$.
    Every ground state on $\Lambda_J^+$ lies in the string-net subspace of its bottom row.
    Thus, the inclusion formula \eqref{eqn: inclusion map for the boundary algebra}, Lemma \ref{lemma:: propagate actions to bigger regions}, and Corollary \ref{corollary:: entanglement support as local ground state projection} give
    \begin{equation}\label{eqn:: boundary isomorphisms respect inclusions}
        \iota^{\mathcal{A}}_{J,I}\circ\pi_{I}
        =\pi_{J}\circ\iota^{\mathcal{B}}_{J,I}.
    \end{equation}
    For every interval $I$, we have $|I^{+k}|\geq k$, so we may define unital $*$-homomorphisms
    \begin{equation}
        \alpha_I=\pi_{I^{+k}}^{-1}\circ\iota^{\mathcal{A}}_{I^{+k},I},\quad
        \beta_I=\pi_{I^{+k}}\circ\iota^{\mathcal{B}}_{I^{+k},I}.
    \end{equation}
    When $|I|\geq k$, $\alpha_I$ and $\beta_I$ reduce to $\pi_{I}^{-1}$ and $\pi_{I}$ followed by the respective inclusions. 
    When $|I|<k$, Proposition \ref{prop:: endomorphism algebras as boundary algebra} gives $\mathcal{A}_I=\mathbb{C}\Pi(\Lambda_I^+)$, so $\alpha_I$ is the scalar inclusion, while $\beta_I$ is the restriction of $\pi_{I^{+k}}$ to the embedded copy of $\End(\rho^{|I|})$.

    For $I\subseteq J$, applying \eqref{eqn:: boundary isomorphisms respect inclusions} to $I^{+k}\subseteq J^{+k}$ and using isotony gives
    \begin{equation}
        \begin{aligned}
            \iota^{\mathcal{B}}_{J^{+k},I^{+k}}\alpha_I
            &=\pi_{J^{+k}}^{-1}\iota^{\mathcal{A}}_{J^{+k},I}
            =\alpha_J\iota^{\mathcal{A}}_{J,I},\\
            \iota^{\mathcal{A}}_{J^{+k},I^{+k}}\beta_I
            &=\pi_{J^{+k}}\iota^{\mathcal{B}}_{J^{+k},I}
            =\beta_J\iota^{\mathcal{B}}_{J,I}.
        \end{aligned}
    \end{equation}
    Finally, the same identity for $I^{+k}\subseteq I^{+2k}$ gives
    \begin{equation}
        \beta_{I^{+k}}\alpha_I=\iota^{\mathcal{A}}_{I^{+2k},I},\quad
        \alpha_{I^{+k}}\beta_I=\iota^{\mathcal{B}}_{I^{+2k},I}.
    \end{equation}
    Therefore $(\alpha_I,\beta_I)_I$ is a bounded spread isomorphism with $r=k$.
\end{proof}

\begin{lemma}
    Let $\mathcal{H}_{\pm}$ be finite-dimensional Hilbert spaces, $\hat{\theta}: \mathcal{H}_+\rightarrow \mathcal{H}_-$ an anti-unitary, and $\Pi$ a projection on $\mathcal{H}_-\otimes\mathcal{H}_+$ that is reflection positive. 
    Suppose that there exists an invertble positive operator $D\in \mathcal{A}_+(\Pi)\widehat{\Pi}$, such that 
    \begin{equation}
        \Theta(DA^\dagger D^{-1})\Pi = A\Pi,\quad A\in \mathcal{A}_+(\Pi)\widehat{\Pi}.
    \end{equation}
    Denote by $\omega$ the state on $\mathcal{A}_+(\Pi)\widehat{\Pi}$ defined by the maximally mixed state in the range of $\Pi$. 
    Then the modular automorphism group $\sigma^\omega_t$ of $\omega$ is given by
    \begin{equation}
        \sigma^\omega_t(A) = D^{2it}AD^{-2it},\quad t\in\mathbb{R}. 
    \end{equation}
\end{lemma}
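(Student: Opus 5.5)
We compute $\omega$ explicitly and check the KMS condition for the proposed flow. Write $\mathcal{A}=\mathcal{A}_+(\Pi)\widehat{\Pi}$, and let $N=\Tr(\Pi)$. For $A\in\mathcal{A}$ we have
\begin{equation}
    \omega(A)=\frac{1}{N}\Tr\bigl(\Pi(\mathrm{I}\otimes A)\bigr)=\frac{1}{N}\Tr_{\mathcal{H}_+}\bigl(\rho_+ A\bigr),\qquad \rho_+:=\Tr_{\mathcal{H}_-}(\Pi).
\end{equation}
The core of the argument is to show that the restriction of this state to $\mathcal{A}$ is implemented by the density $D^2$ (suitably normalized) with respect to a trace on $\mathcal{A}$. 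More precisely, we aim to show that $\omega(AB)=\omega(B\,D^{2}AD^{-2})$ for all $A,B\in\mathcal{A}$. This is the KMS condition at inverse temperature one for the flow $\sigma_t(A)=D^{2it}AD^{-2it}$, evaluated at $t=-i$. The sign convention will be fixed by the standard identification $\sigma^\omega_t(A)=\varrho^{it}A\varrho^{-it}$ for $\omega=\tau(\varrho\,\cdot)$. Since $D\in\mathcal{A}$, this flow preserves $\mathcal{A}$, and uniqueness of the modular group then gives the claim.

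To prove the twisted cyclicity, I use the hypothesis to move operators across the reflection. First, $\omega(AB)=\frac1N\Tr(\Pi(\mathrm{I}\otimes A)(\mathrm{I}\otimes B))$, and $\Pi$ commutes with $\mathrm{I}\otimes X$ for $X\in\mathcal{A}$ after compression. I will need that $\widehat{\Pi}$ is central in the symmetry algebra, as in \cite[Proposition 3.17]{LiuZhao2025RPTO}, and I will use $\Pi=\Pi(\mathrm{I}\otimes\widehat{\Pi})$. Next, apply the hypothesis to $B$, in its adjoint form
\begin{equation}
    \Pi(\mathrm{I}\otimes B)=\bigl(\Theta(DB^\dagger D^{-1})\bigr)^{\dagger}\Pi=\Theta(D^{-1}BD)\Pi,
\end{equation}
using that $\Theta$ is a $*$-map. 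This gives
\begin{equation}
    \omega(AB)=\frac1N\Tr\bigl(\Theta(\cdots)\otimes A\,\Pi\bigr).
\end{equation}
Then use cyclicity of the full trace to move the $\mathcal{H}_-$ factor to the other side of $\Pi$. Finally, apply the hypothesis again, now to convert $\Theta(\cdot)\Pi$ back into a $\mathcal{H}_+$ operator. Each passage across the reflection conjugates by $D$, so after two passages we pick up exactly $D^{2}(\cdot)D^{-2}$, placed so as to yield $\omega(B\,D^2AD^{-2})$.

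The main obstacle is bookkeeping in this double passage: the order of factors, and whether the conjugation lands as $D^{2}\cdot D^{-2}$ or $D^{-2}\cdot D^{2}$. This is exactly what decides the sign in $D^{\pm2it}$. I would settle it by testing on the model case of Lemma \ref{lemma:: ground state projection of a reflection symmetric region} and Proposition \ref{prop:: endomorphism algebras as boundary algebra}, where $\Pi(\Lambda)=\sum_\eta\Theta(\pi_0(T\eta))\otimes\pi_0(T\eta)$ and $D$ is a power of $\pi_0(\mathrm{D}_{\rho^l})$, and there compute $\omega(\pi_0(f))\propto\tr_{\mathcal{C}}(\mathrm{D}^2 f\Delta^{-1})$ directly. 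A secondary point is faithfulness of $\omega$ on $\mathcal{A}$, which the modular theory requires. This follows from $\widehat{\Pi}$ being the support of $\rho_+$ together with $\mathcal{A}\subseteq\widehat{\Pi}\mathfrak{A}\widehat{\Pi}$: if $\omega(A^\dagger A)=0$, then $\rho_+^{1/2}A^\dagger A\rho_+^{1/2}=0$, so $A\widehat{\Pi}=0$.
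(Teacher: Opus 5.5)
\paragraph{Verdict.} Your strategy is sound, and it differs from the paper's route. You verify the twisted cyclicity $\omega(AB)=\omega(B\,D^2AD^{-2})$ directly from the hypothesis. The paper instead imports from \cite{LiuZhao2025RPTO} three facts: the Perron--Frobenius ground state, the associated positive operator $\Xi\in\mathcal{A}$, and the fact that the modular group is $\Xi^{2it}(\cdot)\Xi^{-2it}$. It uses the hypothesis only to show that $\Xi^{-1}D$ is central. However, the computation that carries your proof is never done, and the formulas you give for it are false.

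\paragraph{The false steps.} First, $\Pi$ does not commute with $\mathrm{I}\otimes X$ for $X\in\mathcal{A}$. Anything that commutes with $\Pi$ (together with its adjoint) lies in $\mathrm{Sym}$, and $\mathcal{A}_+(\Pi)$ is the commutant of $\mathrm{Sym}$. So this would force $\mathcal{A}$ to be abelian, which contradicts $\mathcal{A}\cong\End(\rho^l)$ in the paper's model. The hypothesis itself says $A\Pi=\Theta(DA^\dagger D^{-1})\Pi$, not $A\Pi=\Pi A$.

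Second, taking the adjoint of $\Theta(DB^\dagger D^{-1})\Pi=B\Pi$ moves $\Pi$ to the left. It gives $\Pi\,\Theta(D^{-1}BD)=\Pi B^\dagger$, that is,
\[
\Pi(\mathrm{I}\otimes B)=\Pi\,\Theta(D^{-1}B^\dagger D),\qquad B\in\mathcal{A}.
\]
This is not $\Theta(D^{-1}BD)\Pi$; the two operators have different ranges in general.

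Third, settling the order of the $D$-conjugation by checking the model case cannot prove a general lemma. The sign must come out of the general computation.

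\paragraph{The repair.} With the correct adjoint form, the computation takes two lines. For $A,B\in\mathcal{A}$,
\[
\Tr\bigl(\Pi AB\bigr)=\Tr\bigl(B\,\Pi A\bigr)=\Tr\bigl(B\,\Pi\,\Theta(D^{-1}A^\dagger D)\bigr)=\Tr\bigl(B\,\Theta(D^{-1}A^\dagger D)\,\Pi\bigr)=\Tr\bigl(B\,D^2AD^{-2}\,\Pi\bigr).
\]
The third equality uses cyclicity and the fact that $\Theta(\cdot)\otimes\mathrm{I}$ commutes with $\mathrm{I}\otimes B$. The last equality applies the hypothesis to $D^2AD^{-2}\in\mathcal{A}$, since $D(D^2AD^{-2})^\dagger D^{-1}=D^{-1}A^\dagger D$.

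Hence $\omega(AB)=\omega(B\,D^2AD^{-2})$, which fixes the sign. Now write $\omega=\Tr(\varrho\,\cdot)$ with $\varrho\in\mathcal{A}$, using a faithful trace on $\mathcal{A}$; your faithfulness argument for $\omega$ is correct. The identity then gives $B\varrho=D^{-2}\varrho BD^{2}$ for all $B$. Taking $B=\widehat{\Pi}$ shows that $[\varrho,D^2]=0$, and it follows that $\varrho D^{-2}$ is central and positive. Therefore $\varrho^{it}A\varrho^{-it}=D^{2it}AD^{-2it}$.

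This is the same ``central factor cancels'' step as in the paper, with $\varrho$ in place of $\Xi$. Once repaired, your argument is self-contained and avoids the Perron--Frobenius input from the companion paper entirely.
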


\begin{proof}
    By \cite[Theorem 3.18]{LiuZhao2025RPTO}, every $\ket{\psi}\in\range(\Pi)$ has the form
    \begin{equation}
        \ket{\psi}=(\mathrm{I}\otimes W)\ket{\phi_{\mathrm{PF}}},\qquad W\in\mathrm{Sym}_{\mathcal{H}_+}(\Pi)\widehat{\Pi},
    \end{equation}
    where $\ket{\phi_{\mathrm{PF}}}$ is the canonical Perron--Frobenius ground state.
    Set $\Xi=\mathcal{O}(\phi_{\mathrm{PF}})$, with $\mathcal{O}(\hat{\theta}\eta\otimes\xi)=\ket{\xi}\bra{\eta}$.
    By \cite[Proposition 3.13 and Theorem 3.18]{LiuZhao2025RPTO}, $\Xi$ is strictly positive on $\widehat{\Pi}\mathcal{H}_+$ and commutes with $\mathrm{Sym}_{\mathcal{H}_+}(\Pi)\widehat{\Pi}$.
    Since $\mathcal{A}_+(\Pi)=\mathrm{Sym}_{\mathcal{H}_+}(\Pi)'$, $\Xi\in\mathcal{A}_+(\Pi)\widehat{\Pi}$.
    All inverses and powers below are taken in this corner.

    Apply the assumed identity to $\ket{\phi_{\mathrm{PF}}}$.
    Since $\mathcal{O}((\Theta(X)\otimes\mathrm{I})\phi)=\mathcal{O}(\phi)X^\dagger$, positivity of $D$ gives
    \begin{equation}
        A\Xi=\Xi D^{-1}AD,\qquad A\in\mathcal{A}_+(\Pi)\widehat{\Pi}.
    \end{equation}
    Taking adjoints and replacing $A^\dagger$ by $A$, we obtain
    \begin{equation}
        DAD^{-1}=\Xi A\Xi^{-1},\qquad A\in\mathcal{A}_+(\Pi)\widehat{\Pi}.
    \end{equation}
    Therefore, $\Xi^{-1}D$ is strictly positive and central in $\mathcal{A}_+(\Pi)\widehat{\Pi}$.
    By \cite[Equations (5.23)--(5.24)]{LiuZhao2025RPTO}, the state induced by the maximally mixed state on $\range(\Pi)$ is faithful and has modular group $A\mapsto\Xi^{2it}A\Xi^{-2it}$.
    The central factor $\Xi^{-1}D$ cancels under conjugation, so
    \begin{equation}
        \sigma_t^\omega(A)=\Xi^{2it}A\Xi^{-2it}=D^{2it}AD^{-2it},\qquad t\in\mathbb{R},
    \end{equation}
    as claimed.
\end{proof}

\begin{corollary}
    Let $I\subset \mathbb{Z}$ be a finite interval of length $l\geq k$, and let $\Lambda_I$ be as in Theorem \ref{thm:: fusion spin chain as boundary algebra}. 
    Then the modular automorphism of the state $\omega_{\Lambda_I}$ on $\mathcal{A}_+(\Pi(\Lambda_I))\widehat{\Pi}(\Lambda_I)$ is given by
    \begin{equation}
        \sigma_t^{\Lambda_I}\left( \pi_I(f) \right) = \pi_I\left( \mathrm{D}^{-2it}_{\rho^l}f \mathrm{D}^{2it}_{\rho^l} \right),\quad f\in \End(\rho^l),\quad t\in\mathbb{R}.
    \end{equation}
\end{corollary}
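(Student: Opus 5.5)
We apply the preceding lemma with $\Pi=\Pi(\Lambda_I)$, taking $D=\pi_I(\mathrm{D}^{-1}_{\rho^l})$, and then compute its modular group through $\pi_I$. Write $\mathrm{D}=\mathrm{D}_{\rho^l}$, and let $\Lambda_h$ be the bottom row of $\Lambda_I^+$ and $\Lambda_0=\theta(\Lambda_h)\cup\Lambda_h$.

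\emph{Step 1: the hypotheses on $D$.} By Proposition \ref{prop:: endomorphism algebras as boundary algebra}, $\pi_I$ is a $*$-isomorphism $\End(\rho^l)\to\mathcal{A}_+(\Pi(\Lambda_I))\widehat{\Pi}(\Lambda_I)$. Since $\mathrm{D}^{-1}$ is positive and invertible in $\End(\rho^l)$, the element $D=\pi_I(\mathrm{D}^{-1})=\mathrm{L}_{\Lambda_h}(\mathrm{D}^{-1})\Pi(\Lambda_I^+)$ is positive and invertible in the corner. The reflection positivity of $\Pi(\Lambda_I)$ comes from the explicit form in Lemma \ref{lemma:: ground state projection of a reflection symmetric region}.

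\emph{Step 2: the twisted reflection identity.} Since $\pi_I$ is onto, it suffices to check $\Theta(DA^\dagger D^{-1})\Pi=A\Pi$ for $A=\pi_I(f)=\mathrm{L}_{\Lambda_h}(\mathrm{D}^{-1/2}f\mathrm{D}^{1/2})\Pi(\Lambda_I^+)$.

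1. Because $\pi_I$ is a $*$-homomorphism, $DA^\dagger D^{-1}=\pi_I(\mathrm{D}^{-1}f^*\mathrm{D})=\mathrm{L}_{\Lambda_h}(\mathrm{D}^{-3/2}f^*\mathrm{D}^{3/2})\Pi(\Lambda_I^+)$.
2. By Lemma \ref{lemma:: reflection exchanges L and R}, $\Theta$ sends this to $\mathrm{R}_{\theta(\Lambda_h)}(\mathrm{D}^{3/2}f\mathrm{D}^{-3/2})\Pi(\Lambda_I^-)$. The factor $\Pi(\Lambda_I^-)$ is absorbed when multiplying by $\Pi(\Lambda_I)$, because $\Pi(\Lambda_I)\le\Pi(\Lambda_I^-)\otimes\Pi(\Lambda_I^+)$ by frustration-freeness and all these operators commute.
3. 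For the right-hand side, Lemma \ref{lemma:: ground state projection of a reflection symmetric region} gives $\Pi(\Lambda_I)\le\Pi(\Lambda_0)$, and by Proposition \ref{prop:: local ground states in a row} the range of $\Pi(\Lambda_0)$ is the range of $\chi(\theta(\Lambda_h),\Lambda_h)$. The pulling-through identity of Lemma \ref{lemma:: Key identity for local ground states} then applies on the range of $\Pi(\Lambda_I)$, giving
\[
\mathrm{L}_{\Lambda_h}(\mathrm{D}^{-1/2}f\mathrm{D}^{1/2})\Pi(\Lambda_I)=\mathrm{R}_{\theta(\Lambda_h)}(\mathrm{D}^{3/2}f\mathrm{D}^{-3/2})\Pi(\Lambda_I).
\]

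This matches the left-hand side from item 2, so the identity holds.

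\emph{Step 3: the modular group.} The lemma gives $\sigma_t(A)=D^{2it}AD^{-2it}$. Since $D^{2it}=\pi_I(\mathrm{D}^{-2it})$ by functional calculus through the $*$-isomorphism $\pi_I$, we get $\sigma^{\Lambda_I}_t(\pi_I(f))=\pi_I(\mathrm{D}^{-2it}f\mathrm{D}^{2it})$, as claimed.

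The main obstacle is Step 2: justifying that pulling through holds on the range of $\Pi(\Lambda_I)$ rather than only on $\Lambda_0$, and tracking the powers of $\mathrm{D}$ correctly through the adjoint and $\Theta$.
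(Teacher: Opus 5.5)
Your proposal is correct and follows essentially the same route as the paper. You verify the hypothesis of the preceding lemma with $D=\pi_I(\mathrm{D}^{-1}_{\rho^l})$ by combining the pulling-through identity at $g=\mathrm{D}^{-1/2}_{\rho^l}f\mathrm{D}^{1/2}_{\rho^l}$ with the exchange $\Theta(\mathrm{L}_{\Lambda_h}(\cdot))=\mathrm{R}_{\theta(\Lambda_h)}((\cdot)^*)$, absorb the reflected half-region projection into $\Pi(\Lambda_I)$, and transport $D^{2it}$ through the $*$-isomorphism $\pi_I$, exactly as the paper does. Your extra remarks (reflection positivity of $\Pi(\Lambda_I)$ from the explicit formula, and $\Pi(\Lambda_I)\le\Pi(\Lambda_0)$ to justify pulling through on the range of $\Pi(\Lambda_I)$) just make explicit what the paper leaves implicit.
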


\begin{proof}
    Write $\mathrm{D}=\mathrm{D}_{\rho^l}$.
    By frustration-freeness and \eqref{eqn:: pulling through}, we have
    \begin{equation}
        \left(\mathrm{R}_{\theta(\Lambda_h)}(\mathrm{D}^2g\mathrm{D}^{-2})-\mathrm{L}_{\Lambda_h}(g)\right)\Pi(\Lambda_I)=0,\qquad g\in\End(\rho^l).
    \end{equation}
    Taking $g=\mathrm{D}^{-1/2}f\mathrm{D}^{1/2}$ and using the definition of $\pi_I$ in Proposition \ref{prop:: endomorphism algebras as boundary algebra}, we obtain
    \begin{equation}
        \mathrm{R}_{\theta(\Lambda_h)}(\mathrm{D}^{3/2}f\mathrm{D}^{-3/2})\Pi(\Lambda_I)=\pi_I(f)\Pi(\Lambda_I).
    \end{equation}
    On the other hand, Lemma \ref{lemma:: reflection exchanges L and R} gives
    \begin{equation}
        \Theta\left(\pi_I(\mathrm{D}^{-1}f^*\mathrm{D})\right)\Pi(\Lambda_I)
        =\mathrm{R}_{\theta(\Lambda_h)}(\mathrm{D}^{3/2}f\mathrm{D}^{-3/2})\Pi(\Lambda_I),
    \end{equation}
    where the reflected half-region projection is absorbed by $\Pi(\Lambda_I)$ by frustration-freeness.
    Since $\pi_I$ is a $*$-isomorphism, $\pi_I(\mathrm{D}^{-1})$ is positive and invertible in the boundary algebra, and these identities verify the hypothesis of the preceding lemma with $D=\pi_I(\mathrm{D}^{-1})$.
    Therefore
    \begin{equation}
        \sigma_t^{\Lambda_I}(\pi_I(f))
        =\pi_I(\mathrm{D}^{-2it})\pi_I(f)\pi_I(\mathrm{D}^{2it})
        =\pi_I(\mathrm{D}^{-2it}f\mathrm{D}^{2it}),\qquad t\in\mathbb{R}.
    \end{equation}
\end{proof}
\begin{remark}
    Notice that when $\rho$ consists of simple objects with equal quantum dimensions, which is the case when $\rho$ is simple, the one-parameter group of automorphisms on the boundary algebra is trivial. 
    Concerning the results in \cite{Ogata2024TypeOfConeAlgebras} and \cite[Corollary 5.18]{JNPW2025}, this suggests that the cone von Neumann algebras in these models are of type $II_{\infty}$. 
\end{remark}
\section{Examples}\label{Sec:: Examples}

\subsection{Fibonacci fusion category}

Here we derive the qubit model in Section \ref{Sec:: two examples} from $\mathcal{C} = \mathrm{Fib}$, the Fibonacci category with two simple objects $\{\mathbb{1},\tau\}$, and take $\rho = \tau$. 
The fusion rules of $\mathrm{Fib}$ are given by 
\begin{equation}
    \tau^2 = \mathbb{1}\oplus \tau
\end{equation}
which determines the quantum dimension $d_{\tau} = \phi$, where $\phi = \frac{1+\sqrt{5}}{2}$. 
The absence of fusion multiplicities allows us to redistribute the local degrees of freedom in the model to the edges, as in the setup in \cite{LevinWen2005}. 
This results in a system of qubits assigned to un-colored edges, with basis $\ket{0}$ for $\tau$ and $\ket{1}$ for $\mathbb{1}$. 
The constraint from the fusion rule requires no two $\ket{1}$ appear consecutively in a row, which is represented by the projector $P_{ii+1} = \ket{11}\bra{11}$ for the vertex between edges $i,i+1$. 
The allowed configurations around each vertex correspond to vectors in $\mathbb{C}^2\otimes \mathbb{C}^2$ as: 
\begin{equation}
    \begin{aligned}
        \vcenter{\hbox{\begin{tikzpicture}
        \ThinLine[orange] (0:0) -- (150:0.5);
        \ThinLine[dashed] (0:0) -- (30:0.5);
        \ThinLine[orange] (0:0) -- (-90:0.5);
    \end{tikzpicture}}}/ \vcenter{\hbox{\begin{tikzpicture}
        \ThinLine[orange] (0:0) -- (-150:0.5);
        \ThinLine[dashed] (0:0) -- (-30:0.5);
        \ThinLine[orange] (0:0) -- (90:0.5);
    \end{tikzpicture}}}\mapsto \ket{01},\quad \vcenter{\hbox{\begin{tikzpicture}
        \ThinLine[orange] (0:0) -- (30:0.5);
        \ThinLine[dashed] (0:0) -- (150:0.5);
        \ThinLine[orange] (0:0) -- (-90:0.5);
    \end{tikzpicture}}}/ \vcenter{\hbox{\begin{tikzpicture}
        \ThinLine[orange] (0:0) -- (-30:0.5);
        \ThinLine[dashed] (0:0) -- (-150:0.5);
        \ThinLine[orange] (0:0) -- (90:0.5);
    \end{tikzpicture}}}\mapsto \ket{10},\quad \vcenter{\hbox{\begin{tikzpicture}
        \ThinLine[orange] (0:0) -- (30:0.5);
        \ThinLine[orange] (0:0) -- (150:0.5);
        \ThinLine[orange] (0:0) -- (-90:0.5);
    \end{tikzpicture}}} / \vcenter{\hbox{\begin{tikzpicture}
        \ThinLine[orange] (0:0) -- (-30:0.5);
        \ThinLine[orange] (0:0) -- (-150:0.5);
        \ThinLine[orange] (0:0) -- (90:0.5);
    \end{tikzpicture}}}\mapsto \ket{00}.
    \end{aligned}
\end{equation}
Mapping the un-colored edges to vertices then removing the colored edges, we obtain the square lattice as in Fig. \ref{figure:: 2-qubit Fib}. 

It is well known that for each $n\geq 1$, the $C^*$-algebra $\End(\tau^n)$ is the semi-simple quotient of the Temperley--Lieb algebra on $n$ strands \cite{TemperleyLieb1971,Jones1983}, which is generated by elements $E_1,E_2,\dots,E_{n-1}$ satisfying relations
\begin{equation}
    \begin{aligned}
        &E^*_i = E_i,\quad  E^2_i = \varphi E_i\\
        &E_iE_jE_i = E_i,\quad \vert i-j\vert = 1\\
        &E_iE_j = E_jE_i,\quad \vert i-j\vert > 1. 
    \end{aligned}
\end{equation}
Hence a natural choice of a weight-$2$ generating set is just $\mathcal{G} = \{E_1\}$. 
An explicit form of the plaquette interaction can be obtained from the diagrammatic calculus, in which $E_1$ is represented by the morphism $\vcenter{\hbox{\begin{tikzpicture}
    \draw[orange,line width=0.8pt] (0,0) arc (180:360:0.25);
    \draw[orange,line width=0.8pt] (0,-0.75) arc (180:0:0.25);
\end{tikzpicture}}}$. 
The local operator $\mathrm{L}_{\Lambda'}(E_1)$ acts on $4$ consecutive qubits (which correspond to the un-colored edges in the figure below) in a row:
\begin{equation}
    \vcenter{\hbox{\begin{tikzpicture}
    \ThinLine (150:1) -- (90:1) -- (30:1);
    \ThinLine (150:1) -- (150:2);
    \ThinLine[orange] (150:1) -- (180:0.866);
    \ThinLine[orange] (30:1) -- (0:0.866);
    \ThinLine[orange] (90:1.5) -- (90:1);
    \ThinLine (30:1) -- (30:2);
    \draw[fill = black] (150:1) ellipse (0.05 and 0.05);
    \draw[fill = black] (90:1) ellipse (0.05 and 0.05);
    \draw[fill = black] (30:1) ellipse (0.05 and 0.05);
    \node at (2.5,0.75) {$\Lambda'$};
    \node[below=2pt] at (150:1.5) {$1$};
    \node[above=2pt] at (125:0.866) {$2$};
    \node[above=2pt] at (55:0.866) {$3$};
    \node[below=2pt] at (30:1.5) {$4$};
\end{tikzpicture}}}
\end{equation}
Then by expanding the action of $E$ on the $\Hom(\gamma\tau\gamma,\tau^2)$ using the $F$-symbols of $\mathrm{Fib}$, we obtain the expression in \eqref{eqn:: Jones projections on qubits}. 
Finally, note that $\mathrm{D}_{\tau}$ is a multiple of identity since $\tau$ is simple. 
Hence by \eqref{eqn:: F_S}, we have
\begin{equation}
    F_S = \left\vert \mathrm{R}_{\Lambda} \left(  E_1 \right) -  \mathrm{L}_{\Lambda'}\left( E_1 \right) \right\vert^2 = \phi \left( \mathrm{R}_{\Lambda}(E_1) + \mathrm{L}_{\Lambda'}(E_1) \right) - 2\mathrm{R}_{\Lambda}(E_1)\otimes \mathrm{L}_{\Lambda'}(E_1). 
\end{equation}

We remark that the above construction generalizes to the family of $A_{k+1}$ fusion categories with simple objects $0,1,\dots, k$, with $1$ being the chosen object. 
This is because the algebra $\End(1^n)$ is given by the semisimple quotient of Temperley-Lieb algebra on $n$ strands with $\delta = 2\cos \frac{\pi}{k+2}$. 
The construction results in a system of qudits with dimention $k+1$. 

\subsection{Levin--Wen string-net}

To recover the Levin--Wen string-net models \cite{LevinWen2005}, we choose $\rho = \bigoplus_{c\in \Irr(\mathcal{C})} c$. 
We take the generating set $\mathcal{G}_{\mathrm{LW}}\subset \End(\rho^2)$ to be the set of morphisms of the following form: 
\begin{equation}
    f(\xi,\eta): = \vcenter{\hbox{%
    \begin{tikzpicture}[
        scale=0.6,
        line cap=round,
        % every node/.style={font=\large},
        mid arrow/.style={
            postaction={
                decorate,
                decoration={
                    markings,
                    mark=at position 0.50 with {
                        \arrow{
                            Stealth[
                                length=4.125pt,
                                width=3.75pt
                            ]
                        }
                    }
                }
            }
        }
    ]

        % Main coordinates
        \coordinate (L) at (0,1);
        \coordinate (R) at (2,1);
        \draw[fill=black] (L) ellipse (0.1 and 0.1);
        \draw[fill=black] (R) ellipse (0.1 and 0.1);
        % Vertical lines
        \draw[line width=0.675pt] (0,0) -- (0,2);
        \draw[line width=0.675pt] (2,0) -- (2,2);

        % Curved arrow, directed from right to left
        \draw[
            line width=0.675pt,
            mid arrow
        ]
            (R)
            .. controls (1.50,1.52) and (0.50,1.52) ..
            (L);

        % Left-hand labels: text size remains unchanged
        \node[above]    at (0,2) {$a_{1}$};
        \node[below]    at (0,0) {$b_{1}$};
        \node[left=5pt] at (L)   {$\xi$};

        % Right-hand labels: text size remains unchanged
        \node[above]     at (2,2) {$\overline{a_{2}}$};
        \node[below]     at (2,0) {$\overline{b_{2}}$};
        \node[right=5pt] at (R)   {$\overline{\eta}$};

        % Arrow label
        \node[below=4pt] at (1,0.82) {$j$};

    \end{tikzpicture}%
}}
\end{equation}
where $a_1,a_2,b_1,b_2,j\in \Irr(\mathcal{C})$, and $\xi,\eta$ are taken from an orthonormal basis of corresponding morphism spaces, with $\overline{\eta}$ is the modular conjugation. 
It is easy to verify that these morphisms indeed form a weight-$2$ generating set. 
In particular, that $\End(\rho^n)$ is generated by $f(\xi,\eta)$ and their translations, since every morphism in $\End(\rho^n)$ admits a ladder decomposition:
\begin{equation}
    \vcenter{\hbox{
\begin{tikzpicture}[
    x=1cm,
    y=1cm,
    baseline=(current bounding box.center),
    strand/.style={line width=0.8pt, line cap=round},
    vertex/.style={
        circle, fill=black, inner sep=0pt, minimum size=4pt
    },
    scale = 0.75
]
    % Vertical strands, ordered from left to right.
    \foreach \x in {0,1,2,4.2}{
        \draw[strand] (\x,0) -- (\x,2.5);
    }

    % Horizontal rungs between adjacent strands.
    \draw[strand] (0,2.1) -- (1,2.1);
    \draw[strand] (1,1.6) -- (2,1.6);

    % Partial rungs adjoining the omitted portion.
    \draw[strand] (2,0.8)   -- (2.4,0.8);
    \draw[strand] (3.8,0.4) -- (4.2,0.4);

    % Junction vertices.
    \foreach \x/\y in {
        0/2.1,
        1/2.1,
        1/1.6,
        2/1.6,
        2/0.8,
        4.2/0.4
    }{
        \node[vertex] at (\x,\y) {};
    }

    % Omitted intermediate strands and rungs.
    \node at (3.1,1.25) {$\cdots$};

    % Optional labels on the boundary strands:
    % \foreach \x in {0,1,2,4.2}{
    %     \node[below] at (\x,0) {$\rho$};
    %     \node[above] at (\x,2.5) {$\rho$};
    % }
\end{tikzpicture}
}}
\end{equation}
Since the generating set has weight $2$, the operator $F_S$ is supported on a single plaquette $S$. 
Note that for $\xi\in\Hom(a_1\otimes j,b_1)$ and $\eta\in \Hom(a_2\otimes j,b_2)$, we have 
\begin{equation}
    \mathrm{D}^{-1}_{\rho^2}f(\xi,\eta)\mathrm{D}_{\rho^2} = \sqrt{\frac{d_{b_1} d_{b_2}}{d_{a_1} d_{a_2}}}\, f(\xi,\eta). 
\end{equation} 
Put $\mu=(\sum_{j\in\Irr(\mathcal{C})}d_j^2)^{1/2}$. Then
\begin{equation}\label{eqn:: LW scalar identity}
    \sum_{f\in \mathcal{G}_{\mathrm{LW}}} f^*  \mathrm{D}^{-1}_{\rho^2} f \mathrm{D}_{\rho^2}
    = \mu^2\mathrm{Id}_{\rho^2}.
\end{equation}
Indeed, for fixed $a,j\in\Irr(\mathcal{C})$, the normalization of the trivalent morphisms gives
\begin{equation}
    \sum_{b\in\Irr(\mathcal{C})}\sum_{\xi\in\mathrm{ONB}(\Hom(a\otimes j,b))}
    \sqrt{\frac{d_b}{d_a d_j}}\,\xi^*\xi
    =\mathrm{Id}_{a\otimes j}.
\end{equation}
For fixed $a_1,a_2,j$, the coefficient multiplying $f(\xi,\eta)^*f(\xi,\eta)$ is
\begin{equation}
    \sqrt{\frac{d_{b_1}d_{b_2}}{d_{a_1}d_{a_2}}}
    =d_j\sqrt{\frac{d_{b_1}}{d_{a_1}d_j}}\sqrt{\frac{d_{b_2}}{d_{a_2}d_j}}.
\end{equation}
Applying the resolution of identity at both vertices leaves the identity on $a_1\otimes\overline{a_2}$ and a closed $j$-loop, whose value is $d_j$.
Thus the sum over $b_1,b_2,\xi,\eta$ contributes $d_j^2\mathrm{Id}_{a_1\otimes\overline{a_2}}$.
Summing over $j$ and all summands $a_1\otimes\overline{a_2}$ of $\rho^2$ proves the claimed identity.

Let $B_p^j$ denote insertion of a $j$-loop into the plaquette $p$ in $S$, so that $B_p=\mu^{-2}\sum_j d_j B_p^j$.
Resolving the loop across the reflection axis gives the coefficient $d_j^{-1}\sqrt{d_{b_1}d_{b_2}/(d_{a_1}d_{a_2})}$ for each pair $\xi,\eta$ as in \cite{JL2020}.
Distributing the square-root factor evenly across the reflection line, we obtain
\begin{equation}
    \begin{aligned}
        &\sum_{f\in\mathcal{G}_{\mathrm{LW}}}
        \Theta\left(\mathrm{L}_{S_+}\left(\mathrm{D}_{\rho^2}^{-1/2}f\mathrm{D}_{\rho^2}^{1/2}\right)\right)
        \otimes\mathrm{L}_{S_+}\left(\mathrm{D}_{\rho^2}^{-1/2}f\mathrm{D}_{\rho^2}^{1/2}\right) = \sum_j d_j B_p^j=\mu^2 B_p.
    \end{aligned}
\end{equation}
Therefore, \eqref{eqn:: LW scalar identity} and the last line of \eqref{eqn:: reflection positive expansion of F_S} give
\begin{equation}
    F_S=2\mu^2(\mathrm{I}-B_p)
\end{equation}
on the string-net subspace. 
This shows that this model exhibits the same set of local ground states as Levin--Wen string-net model. 

\subsection{Strong tensor generating object}

In general, one can take $\rho = X$ to be any self-dual, strongly tensor generating object in $\mathcal{C}$. 
This means that there exists $L\geq 1$, such that $X^{L}$ contains every simple object of $\mathcal{C}$ as a summand. 
The depth of $X$ is then defined to be the minimum $L$ that the above property holds.  
Given a self-dual object $X$ of depth $L$, a generating set of weight $L+1$ can be any basis $\mathcal{G} = \{\eta\}$ of $\End(X^{L+1})$ that satisfies the following two conditions:
\begin{enumerate}\label{condition for a generating set}
    \item for any $\eta\in \mathcal{G}$, we have $\eta^*\in \mathcal{G}$;
    \item for any $\eta\in \mathcal{G}$, there are simple objects $a_1,\dots, a_{L+1}$ and $b_1,\dots ,b_{L+1}$ such that 
    \begin{equation}
        \eta = (q_{b_1}\otimes\cdots\otimes q_{b_{L+1}}) \eta (q_{a_1}\otimes\cdots\otimes q_{a_{L+1}}). 
    \end{equation}
\end{enumerate}
That the set $\mathcal{G}$ also satisfies the third condition in Definition \ref{def:: generating set} is given by the following lemma. 

\begin{lemma}
    Let $X$ be a self-dual object of depth $L$ in a unitary fusion category $\mathcal{C}$. 
    Then for any $m\geq L+1$, $\End(X^m)$ is generated by the subalgebras 
    \begin{equation}
        \mathrm{Id}_{X^{k}}\otimes \End(X^{L+1})\otimes \mathrm{Id}_{X^{m-k-L-1}},\quad 0\leq k\leq m-L-1. 
    \end{equation}
\end{lemma}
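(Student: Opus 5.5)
Induction on $m$ is the natural route. The case $m=L+1$ is trivial. For the inductive step, let $\mathcal{B}_m\subseteq \End(X^m)$ be the $C^*$-algebra generated by the translated copies of $\End(X^{L+1})$. By induction, $\mathcal{B}_m$ contains both $\End(X^{m-1})\otimes \mathrm{Id}_X$ and $\mathrm{Id}_X\otimes\End(X^{m-1})$. It therefore suffices to show the following: whenever $A=\End(X^{m-1})\otimes \mathrm{Id}_X$ and $C=\mathrm{Id}_{X^{m-L-1}}\otimes \End(X^{L+1})$ both lie in a $C^*$-subalgebra, that subalgebra is all of $\End(X^m)$. This is a standard ``basic construction / amalgamation'' argument. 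The two algebras overlap in $B=\mathrm{Id}_{X^{m-L-1}}\otimes\End(X^{L})\otimes \mathrm{Id}_X$, and the depth hypothesis makes the relevant inclusions connected.

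The concrete mechanism I will use is to generate all matrix units of $\End(X^m)$ through a middle object. Every morphism in $\End(X^m)$ is a sum of terms $\alpha\circ\beta$, where $\beta\in\Hom(X^m,X^{m-L-1}\otimes c\otimes X)$ and $\alpha$ goes back, with $c\in\Irr(\mathcal{C})$ running over simple summands of $X^{L}$. This uses semisimplicity to split the middle tensor factor $X^{L}$ into simples. By the depth hypothesis, every simple $c$ already appears in $X^{L}$. Each such $\alpha\circ\beta$ can then be rewritten as a product $x\cdot y\cdot z$, where $x$ is built from $\Hom(X^{m-1},X^{m-L-1}\otimes c)\otimes \mathrm{Id}_X$, composed with inclusions/projections of $c$ into $X^L$, and hence lies in $A$. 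Similarly $z$ lies in $A$, and $y$ is an element of $C$ acting as $\mathrm{Id}_{X^{m-L-1}}\otimes(\text{morphism } c\otimes X\to c'\otimes X \text{ realized inside } X^{L}\otimes X)$. Concretely, I will write $\mathrm{Id}_{X^m}=\sum_{c,\iota}(\mathrm{Id}\otimes \iota\otimes\mathrm{Id}_X)(\mathrm{Id}\otimes\iota^*\otimes\mathrm{Id}_X)$ with isometries $\iota:c\to X^{L}$. Inserting this resolution of the identity on both sides of an arbitrary $T\in\End(X^m)$ factors $T$ as a sum of products of elements of $A$ and $C$.

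The key step, and the main obstacle, is to show the following. Any morphism $X^{m-L-1}\otimes c\otimes X\to X^{m-L-1}\otimes c'\otimes X$ appearing in the middle of this factorization can be decomposed as products of something acting only on the first $m-1$ strands (in $A$) and something acting only on $c\otimes X$ (in $C$). Frobenius reciprocity lets us move morphisms across the $c$ strand. Using $X$ self-dual and $c\subseteq X^L$, a morphism touching the last $X$ and the left block can be routed through the $X^{L}$-channel. Since $c\otimes X$ and the left block $X^{m-L-1}$ only communicate via objects $d\subseteq c\otimes \bar{X}\otimes\cdots$, every simple $d$ through which such a morphism factors is again in $X^{L}$ by the depth hypothesis. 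I expect the bookkeeping of this routing to be the delicate part. I will first try the cleanest formulation: the Bratteli diagram of $\End(X^{m-1})\subseteq\End(X^m)$ is obtained from the fusion graph of $X$. The inclusion $B\subseteq C$ is then connected by the depth assumption, so the standard fact that $A\vee C=\End(X^m)$ when $A\cap C\supseteq B$ with $A,C$ forming a commuting square with connected Bratteli data applies. Failing that, I will fall back to the explicit matrix-unit factorization above.
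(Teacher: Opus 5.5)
Your reduction is the same as the paper's: by induction it suffices to show that $A=\End(X^{m-1})\otimes\mathrm{Id}_X$ and $C=\mathrm{Id}_{X^{m-L-1}}\otimes\End(X^{L+1})$ generate $\End(X^m)$ for $m\geq L+2$. The gap is that this step is never actually proved, and neither of your mechanisms can prove it.

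\textbf{The resolution-of-identity factorization.} Sandwiching $T$ between the projections $\mathrm{Id}_{X^{m-L-1}}\otimes\iota\iota^*\otimes\mathrm{Id}_X$ only compresses $T$ by elements of $B\subseteq A\cap C$. The resulting middle morphism $X^{m-L-1}\otimes c\otimes X\to X^{m-L-1}\otimes c'\otimes X$ still couples the first $m-L-1$ strands to the last strand. It is exactly as hard to factor as $T$ itself. So splitting it into an $A$-part and a $C$-part is just the original problem restricted to summands; you concede this is ``the main obstacle''. Your remark that every intermediate simple $d$ lies in $X^L$ adds nothing, since every simple lies in $X^L$ by the definition of depth.

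\textbf{The commuting-square fallback.} The ``standard fact'' you invoke is false as stated: commuting squares with connected inclusions need not generate. In $M_2(\mathbb{C})^{\otimes 3}$ with its trace, take $A=M_2(\mathbb{C})\otimes 1\otimes 1$, $C=1\otimes M_2(\mathbb{C})\otimes 1$ and $B=\mathbb{C}$. These form a commuting square with connected inclusions, yet $A\vee C\neq M_2(\mathbb{C})^{\otimes 3}$. Some additional input is needed, and supplying it is the whole content of the lemma.

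\textbf{The missing idea.} You should factor through a strictly smaller object, not through summands of $X^m$. Since $m-2\geq L$, $X^{m-2}$ already contains every simple object. Semisimplicity then gives $\End(X^m)=\operatorname{span}\{fg^*: f,g\in\Hom(X^{m-2},X^m)\}$. Choose a duality morphism $r:\mathbb{1}\to X\otimes X$; this is where self-duality is used. Frobenius reciprocity writes each such $f$ as $(\tilde f\otimes\mathrm{Id}_X)(\mathrm{Id}_{X^{m-2}}\otimes r)$ with $\tilde f\in\End(X^{m-1})$. Hence
\[
fg^*=(\tilde f\otimes\mathrm{Id}_X)\bigl(\mathrm{Id}_{X^{m-L-1}}\otimes(\mathrm{Id}_{X^{L-1}}\otimes rr^*)\bigr)(\tilde g^*\otimes\mathrm{Id}_X)\in A\,C\,A .
\]
This is the basic-construction identity for the tower $\End(X^{m-2})\subset\End(X^{m-1})\subset\End(X^m)$: the whole top algebra is spanned by products (middle algebra)(Jones projection)(middle algebra). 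It is exactly the paper's argument. Only the single Jones element $\mathrm{Id}_{X^{L-1}}\otimes rr^*$ is needed from $C$.
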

\begin{proof}
    By induction, it suffices to prove that for any $k\geq L+1$, we have 
    \begin{equation}
        \End(X^{k+1}) = (\End(X^{k})\otimes\mathrm{Id}_X)\left( \mathrm{Id}_{X^{k-L}}\otimes \End(X^{L+1}) \right)(\End(X^{k})\otimes\mathrm{Id}_X). 
    \end{equation}
    Since $X$ is self-dual, we can choose a duality morphism $c:\mathbb{1}\to X\otimes X$.
    Put $u=\mathrm{Id}_{X^{L-1}}\otimes c$ and define the $L$-th unnormalized Jones element $e_L=uu^*\in\End(X^{L+1})$.
    Since $k\geq L+1$, $X^{k-1}$ contains every simple object as a summand.
    Semisimplicity therefore implies that $\End(X^{k+1})$ is spanned by $fg^*$ with $f,g\in\Hom(X^{k-1},X^{k+1})$.
    Consequently, we obtain (see also  \cite[Lemma 5.7(a)]{KodiyalamLandauSunder2003})
    \begin{equation}
        \begin{aligned}
            \End(X^{k+1})
        =\operatorname{span}\{(a\otimes\mathrm{Id}_X)\left( \mathrm{Id}_{X^{k-L}}\otimes e_L \right)(b\otimes\mathrm{Id}_X):a,b\in\End(X^k)\}.
        \end{aligned}
    \end{equation}
    This proves the claim. 
\end{proof}

\begin{theorem}\label{thm:: boundary algebra determined by tensor generator}
    Let $\mathcal{C}$ be a unitary fusion category, and $X$ a self-dual strongly tensor generating object with depth $L$. 
    Then for any basis $\mathcal{G}$ of $\End(X^{L+1})$ that satisfies the conditions \ref{condition for a generating set}, the interaction \eqref{eqn: interaction of new model} has boundary algebra that is bounded spread isomorphic to the fusion spin chain $\mathcal{A}(\mathcal{C},X)$. 
\end{theorem}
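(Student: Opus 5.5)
The theorem reduces to Theorem \ref{thm:: fusion spin chain as boundary algebra} once we know that $\mathcal{G}$ is a weight-$(L+1)$ generating set in the sense of Definition \ref{def:: generating set}. So the plan is to check the three conditions of that definition for $\rho=X$, $k=L+1$, and then invoke the general theorem with $r=k=L+1$.

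\textbf{Conditions 1 and 2.} The first condition (closure under $^*$) is the first of the conditions \ref{condition for a generating set}. The second condition (each $f\in\mathcal{G}$ is compressed by products of the projections $q_{a_i}$, $q_{b_i}$) is the second of those conditions verbatim. Linear independence and finiteness hold because $\mathcal{G}$ is a basis of the finite-dimensional space $\End(X^{L+1})$.

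I would also note that such a basis exists, so the statement is not vacuous. Decompose $\End(X^{L+1})$ into blocks
\[
(q_{b_1}\otimes\cdots\otimes q_{b_{L+1}})\End(X^{L+1})(q_{a_1}\otimes\cdots\otimes q_{a_{L+1}}).
\]
The adjoint maps the $(a,b)$ block onto the $(b,a)$ block. On each pair of distinct blocks, take any basis $B$ of one block together with $B^*$. On a self-paired block, take a basis of self-adjoint elements, which exists because that block is closed under $^*$.

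\textbf{Condition 3.} For every $n\ge k=L+1$, the algebra $\End(X^n)$ must be generated by the translates $\mathrm{Id}_{X^l}\otimes f\otimes \mathrm{Id}_{X^{n-l-k}}$ with $f\in\mathcal{G}$. Since $\mathcal{G}$ spans $\End(X^{L+1})$, the algebra generated by these translates equals the algebra generated by the subalgebras $\mathrm{Id}_{X^{l}}\otimes \End(X^{L+1})\otimes \mathrm{Id}_{X^{n-l-L-1}}$ for $0\le l\le n-L-1$. The preceding lemma says exactly that these subalgebras generate $\End(X^n)$. The case $n=L+1$ is trivial.

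\textbf{Main obstacle.} The one step needing care is in the lemma's induction. Depth $L$ gives that $X^{L}$ contains every simple object, and we need the same for $X^{k-1}$ with $k\ge L+1$. This follows because $X^{k-1}=X^{L}\otimes X^{k-1-L}$: for any simple $c$ and any simple summand $y$ of $X^{k-1-L}$, the object $c\otimes\bar y$ has a simple summand $x$. Then $x$ lies in $X^L$, and $c$ lies in $x\otimes y$, hence in $X^{L}\otimes X^{k-1-L}$. Since the lemma is stated earlier, I would take it as given.

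With $\mathcal{G}$ a weight-$(L+1)$ generating set, the interaction \eqref{eqn: interaction of new model} is frustration-free and reflection positive, by Corollary \ref{corollary:: frustration-freeness} and the reflection-positivity proposition. Theorem \ref{thm:: fusion spin chain as boundary algebra} then gives the bounded spread isomorphism with $\mathcal{A}(\mathcal{C},X)$, of spread $r=L+1$.
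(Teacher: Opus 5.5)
Your proposal is correct and matches the paper's argument. The paper also treats the theorem as an immediate consequence of Theorem \ref{thm:: fusion spin chain as boundary algebra}: conditions 1--2 of Definition \ref{def:: generating set} hold by hypothesis, and the preceding lemma supplies condition 3. Your two additions are also correct: the Frobenius-reciprocity check that $X^{m}$ contains every simple object for all $m\geq L$, which the lemma's proof uses without justification, and the construction showing that a $^*$-closed, block-compatible basis exists.
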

\begin{proof}
    This is a direct consequence of Theorem \ref{thm:: fusion spin chain as boundary algebra}. 
\end{proof}
\begin{conflictofinterest}
There is no conflict of interest. 
\end{conflictofinterest}

\begin{dataavailability}
Data sharing is not applicable to this article as no new data were created or analyzed in this study. 
\end{dataavailability}

\bibliographystyle{alpha}
\bibliography{beyondRGFP}
\end{document}